\declaretheorem[name=Theorem,numberwithin=section]{theorem}
\newtheorem{lemma}[theorem]{Lemma}
\declaretheorem[name=Claim,numberwithin=lemma]{claim}
\newtheorem{fact}{Fact}[claim]
\newdefinition{definition}[theorem]{Definition}
\newproof{proof}{Proof}
\let\doendproof\endproof
\renewcommand\endproof{~\hfill$\blacksquare$\doendproof}
\begin{document}
\begin{frontmatter}
\title{The maximum time of 2-neighbor bootstrap percolation: complexity results}
\tnotetext[t1]{This research was supported by CNPq (Universal Proc. 478744/2013-7) and FAPESP (Proc. 2013/03447-6). The statements of some of the results of this paper appeared in WG'2014}

\author[ufc]{Thiago Marcilon}
\ead{thiagomarcilon@lia.ufc.br}
\author[ufc]{Rudini Sampaio}
\ead{rudini@lia.ufc.br}

\address[ufc]{Departamento de Computa\c c\~ao, Universidade Federal do Cear\'a, Fortaleza, Cear\'a, Brazil}

\begin{abstract}
In $2$-neighborhood bootstrap percolation on a graph $G$, an infection spreads according to the following deterministic rule: infected vertices of $G$ remain infected forever and in consecutive rounds healthy vertices with at least $2$ already infected neighbors become infected. Percolation occurs if eventually every vertex is infected. The maximum time $t(G)$ is the maximum number of rounds needed to eventually infect the entire vertex set. In 2013, it was proved \cite{eurocomb13} that deciding whether $t(G)\geq k$ is polynomial time solvable for $k=2$, but is NP-Complete for $k=4$ and, if the problem is restricted to bipartite graphs, it is NP-Complete for $k=7$. In this paper, we solve the open questions. We obtain an $O(mn^5)$-time algorithm to decide whether $t(G)\geq 3$. For bipartite graphs, we obtain an $O(mn^3)$-time algorithm to decide whether $t(G)\geq 3$, an $O(m^2n^9)$-time algorithm to decide whether $t(G)\geq 4$ and we prove that $t(G)\geq 5$ is NP-Complete.
\end{abstract}

\begin{keyword}
2-neighbor bootstrap percolation \sep $P_3$-Convexity \sep maximum time \sep infection on graphs
\end{keyword}

\end{frontmatter}

\section{Introduction}

We consider a problem in which an infection spreads over the vertices of a connected simple graph $G$ following a deterministic spreading rule in such a way that an infected vertex will remain infected forever. Given a set $S \subseteq V(G)$ of initially infected vertices, we build a sequence $S_{(0)}, S_{(1)}, S_{(2)}, \ldots$ in which $S_{(0)}=S$ and $S_{(i+1)}$ is obtained from $S_{(i)}$ using such spreading rule.

Under $r$-neighbor bootstrap percolation on a graph $G$, the spreading rule is a threshold rule in which $S_{(i+1)}$ is obtained from $S_{(i)}$ by adding to it the vertices of $G$ which have at least $r$ neighbors in $S_{(i)}$. Given a set $S$ of vertices and a vertex $v$ of $G$, let $t_r(G,S,v)$ be the minimum $t$ such that $v$ belongs to $S_{(t)}$ (let $t_r(G,S,v) = \infty$ if there is no such $t$). We say that a set $S_{(0)}$ infects $G$ (or that $S_{(0)}$ is a percolating set of $G$) if eventually every vertex of $G$ becomes infected, that is, there exists $t$ such that $S_{(t)} = V(G)$. If $S$ is a percolating set of $G$, then we define $t_r(G,S)$ as the minimum $t$ such that $S_{(t)} = V(G)$. Also, define the {\em percolation time of $G$} as $t_r(G) = \max \{t_r(G,S) : S \text{ infects } G\}$. In this paper, we shall focus on the case where $r=2$ and in such case we omit the subscript of the functions $t_r(G,S,v)$, $t_r(G,S)$ and $t_r(G)$.

Bootstrap percolation was introduced by Chalupa, Leath and Reich \cite{chalupa} as a model
for certain interacting particle systems in physics. Since then it has found applications
in clustering phenomena, sandpiles \cite{sandpiles}, and many other areas of statistical physics, as well as in neural networks \cite{neural2} and computer science \cite{cs1}.

There are two broad classes of questions one can ask about bootstrap percolation. The
first, and the most extensively studied, is what happens when the initial configuration $S_{(0)}$ is chosen randomly under some probability distribution? For example, vertices are included in $S_{(0)}$ independently with some fixed probability $p$. One would like to know how likely percolation is to occur, and if it does occur, how long it takes.

The answer to the first of these questions is now well understood for various graphs. An interesting case is the one of the lattice graph $[n]^d$, in which $d$ is fixed and $n$ tends to infinity, since the probability of percolation under the $r$-neighbor model displays a sharp threshold between no percolation with high probability and percolation with high probability. The existence of thresholds in the strong sense just described first appeared in papers by Holroyd, Balogh, Bollob\'as, Duminil-Copin and Morris \cite{holroyd,balogh,balogh2}. Sharp thresholds have also been proved for the hypercube (Balogh and Bollob\'as \cite{balogh3}, and Balogh, Bollob\'as and Morris \cite{balogh4}). There are also very recent results due to Bollob\'as, Holmgren, Smith and Uzzell \cite{bollobasHolmgrenSmithUzzell}, about the time percolation take on the discrete torus $\mathbb{T}^d_n = (\mathbb{Z}/n\mathbb{Z})^d$ for a randomly chosen set $S_{(0)}$.

The second broad class of questions is the one of extremal questions. For example, what is the smallest or largest size of a percolating set with a given property? The size of the smallest percolating set in the $d$-dimensional grid, $[n]^d$, was studied by Pete and a summary can be found in \cite{BaloghPete}. Morris \cite{morris} and Riedl \cite{riedl}, studied the maximum size of minimal percolating sets on the square grid and the hypercube $\{0,1\}^d$, respectively, answering a question posed by Bollob\'as. However, it was proved in \cite{ningchen,CDPRS2011} that finding the smallest percolating set is NP-complete for general graphs. Another type of question is: what is the minimum or maximum time that percolation can take, given that $S_{(0)}$ satisfies certain properties? Recently, Przykucki \cite{Przykucki} determined the precise value of the maximum percolation time on the hypercube $2^{[n]}$ as a function of $n$, and Benevides and Przykucki \cite{fabricio1,fabricio1b} have similar results for the square grid, $[n]^2$, also answering a question posed by Bollob\'as.

Here, we consider the decision version of the percolation time problem, as stated below.

\vspace{5 pt}
\noindent{\sc percolation time} \\
{\em Input:} A graph $G$ and an integer $k$. \\
{\em Question:} Is $t(G) \geq k$?
\vspace{5 pt}

In 2013, Benevides, Campos, Dourado, Sampaio and Silva \cite{eurocomb13}, among other results, proved that deciding whether $t(G)\geq k$ is polynomial time solvable when the parameter $k$ is fixed in $2$, but is NP-Complete when the parameter $k$ is fixed in $4$ and, when we restrict the problem to allow only bipartite graphs, is NP-Complete when the parameter $k$ fixed is fixed in $7$, thus, leaving open the questions: $t(G)\geq k$ is polynomial time solvable when the parameter $k$ is fixed in $3$? When we restrict the problem to bipartite graphs, what is the threshold, regarding the parameter $k$, between polynomiality and NP-completeness? In this paper, we solve these questions. 

In Section \ref{secaoBip5}, we show that the Percolation Time Problem is NP-Complete when restricted to bipartite graphs and $k\geq 5$. In Section \ref{secaopolyres}, we discuss the similarities between the structural characterizations presented in Sections \ref{secaoBip3}, \ref{secaoBip4} and \ref{secaoGer3} and the arguments used in the proofs in these sections. In Sections \ref{secaoBip3} and \ref{secaoBip4}, we show that the problem is polynomial time solvable for $k=3$ and $k=4$ by showing a structural characterization that can be computed in time $O(mn^3)$ and $O(m^2n^9)$ respectively. For general graphs, we show in Section \ref{secaoGer3} that the problem is polynomial time solvable for $k=3$ by showing a structural characterization that can be computed in time $O(mn^5)$. 

\subsection{Related works and some notation}

It is interesting to notice that infection problems appear in the literature under many different names and were studied by researches of various fields. The particular case in which $r=2$ in $r$-neighborhood bootstrap percolation is also a particular case of a infection problem related to convexities in graph.

A finite {\it convexity space} \cite{le,ve} 
is a pair $(V,\mathcal{C})$ consisting of a finite ground set $V$
and a set $\mathcal{C}$ of subsets of $V$ satisfying
$\emptyset, V \in \mathcal{C}$ and if $C_1,C_2\in \mathcal{C}$, then $C_1\cap C_2\in \mathcal{C}$.
The members of $\mathcal{C}$ are called {\em $\mathcal{C}$-convex sets}
and the {\em convex hull} of a set $S$ is the minimum convex set $H(S) \in \mathcal{C}$ containing $S$. 

A convexity space $(V,\mathcal{C})$ is an {\it interval convexity} \cite{cal}
if there is a so-called {\it interval function} \mbox{$I:\binom{V}{2}\to 2^V$}
such that a subset $C$ of $V$ belongs to $\mathcal{C}$
if and only if $I(\{ x,y\})\subseteq C$ for every two distinct elements $x$ and $y$ of $C$. With no risk of confusion, for any $S \subseteq V$, we also denote by $I(S)$ the union of $S$ with $\bigcup_{x,y\in S} I(\{x,y\})$. In interval convexities, the convex hull of a set $S$ can be computed by exhaustively applying the corresponding interval function until obtaining a convex set.


The most studied graph convexities defined by interval functions are those in which $I(\{x,y\})$ is the union of paths between $x$ and $y$ with some particular property. Some common examples are the $P_3$-convexity~\cite{er1972}, geodetic convexity~\cite{faja} and monophonic convexity~\cite{du1988}. We observe that the spreading rule in $2$-neighbors bootstrap percolation is equivalent to $S_{(i+1)} = I(S_{(i)})$ where $I$ is the interval function which defines the $P_3$-convexity: $I(S)$ contains $S$ and every vertex belonging to some path of 3 vertices whose extreme vertices are in $S$. For these reasons, we call a percolating set by \emph{hull set}.

In geodetic convexity, where the interval of $S$ contains $S$ and every vertex lying in some geodesic joining two vertices of $S$, it was defined the {\em geodetic iteration number of a graph}~\cite{cps2002,hn1981} which is analogous to our definition of percolation time.
Related to the geodetic convexity, there exists the {\em geodetic iteration number of a graph}~\cite{cps2002,hn1981}, which is similar to the percolation time.

\section{$t(G)\geq 5$ is NP-Complete in bipartite graphs}
\label{secaoBip5}

In \cite{eurocomb13}, it was proved that deciding if $t(G)\geq 7$ is NP-Complete in bipartite graphs. The following theorem improves this result.

\begin{theorem}
Deciding whether $t(G)\geq k$ is NP-Complete in bipartite graphs for any $k\geq 5$.
\end{theorem}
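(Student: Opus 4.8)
The plan is to reduce from a known NP-complete problem — a natural choice is a restricted SAT variant or a variant of the construction already used in \cite{eurocomb13} to show NP-completeness of $t(G)\geq 7$ — but to engineer the gadgets so that the critical propagation happens in $5$ rounds rather than $7$. Concretely, I would start from an instance of a problem like \textsc{3-SAT} (or \textsc{Not-All-Equal 3-SAT}, which often yields cleaner bipartite gadgets), build a graph $G$ whose vertex set is naturally bipartite (variable-vertices, clause-vertices, and auxiliary ``timing'' vertices), and then argue the two directions: if the formula is satisfiable, there is a percolating set $S$ whose infection of $G$ takes at least $5$ rounds; and conversely, if some percolating set achieves $t(G,S)\geq 5$, one can read off a satisfying assignment. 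For the ``$k\geq 5$'' part of the statement, once the base case $k=5$ is handled, I would attach a path (or a ``ladder'' of bipartite bi-stars) of the appropriate parity to a vertex that is guaranteed to be infected exactly at round $5$, so that the infection is forced to continue for $k-5$ more rounds; padding by subdividing to preserve bipartiteness is the standard trick here.

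The key steps, in order, would be: (1) fix the source problem and its standard NP-completeness; (2) design a ``delay gadget'' — a small bipartite graph in which, from a prescribed boundary condition, one designated vertex gets infected only at round $5$ (and not earlier), while making sure that in \emph{every} percolating set the rest of the gadget is forced to be infected quickly so it cannot accidentally shortcut the timing; (3) design ``variable gadgets'' that offer a binary choice (which literal side is put into $S$) and ``clause gadgets'' that are infected quickly iff at least one literal is chosen true, so that an unsatisfied clause blocks all infection (hence no percolating set exists at all) while a satisfied clause lets the delay gadget fire at round $5$; (4) check bipartiteness of the whole construction by exhibiting the $2$-coloring explicitly, subdividing edges where parity clashes; (5) prove the forward direction (satisfying assignment $\Rightarrow$ a percolating set with $t\geq 5$) and the backward direction ($t(G)\geq 5$ for some percolating set $\Rightarrow$ formula satisfiable), using the fact that in $P_3$-convexity a vertex enters $S_{(i+1)}$ only when two neighbors are already infected; (6) handle general $k\geq 5$ by appending a parity-correct path of length $k-5$.

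I expect the main obstacle to be step (2)–(3): simultaneously controlling the \emph{exact} infection time (both a lower bound of $5$ and the impossibility of doing it faster) while keeping the graph bipartite and making the percolation-vs-satisfiability equivalence tight. The difficulty is that in $2$-neighbor bootstrap percolation the percolating set $S$ is adversarially chosen to \emph{maximize} the time, so the reduction must ensure that no clever choice of $S$ can stall the infection for $5$ rounds unless it is forced to by a genuine satisfying assignment — in particular, one must prevent ``cheating'' sets that deliberately omit vertices of a gadget to create artificial delays. Ruling those out typically requires that omitting any such vertex either breaks percolation entirely or, on the contrary, fails to produce any delay, which is exactly the delicate case analysis I anticipate doing; shaving the bound from $7$ down to $5$ leaves very little slack, so the gadgets will have to be tight and the case analysis careful.
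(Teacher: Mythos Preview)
Your overall framework---reduce from \textsc{3-SAT}, build bipartite clause gadgets, and append a leaf-decorated path for $k>5$---matches the paper. However, step~(3) has the central mechanism backwards, and this is where the proof lives. You propose that ``an unsatisfied clause blocks all infection (hence no percolating set exists at all),'' but $V(G)$ is always a percolating set, so percolation can never be blocked; more importantly, the direction of the implication must be the opposite. In the paper's construction, every hull set is \emph{forced} (via co-convex sets inside each clause gadget) to select at least one literal-vertex per clause, so every hull set implicitly encodes a one-literal-per-clause choice. A vertex $y_{(i,a),(j,b)}$ is placed adjacent to each pair of $w$-vertices representing contradictory literals, and a single collector vertex $z$ (with a pendant leaf $z'$) is made adjacent to all such $y$'s. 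The gadgets are tuned so that the clause side always finishes by time~$4$ regardless of $S$, and $z$ is the \emph{only} vertex that can possibly reach time~$5$. A satisfying assignment makes every $y$ land at time exactly~$4$, so $z$ lands at time~$5$; if the formula is unsatisfiable, \emph{every} hull set must select two contradictory literals somewhere, the corresponding $y$ is infected by time~$\leq 3$, and hence $z$ by time~$\leq 4$. Thus unsatisfiability \emph{accelerates} the infection of the unique candidate late vertex, rather than blocking percolation.

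Your final paragraph correctly senses the difficulty---ruling out adversarial hull sets that stall artificially---but the resolution is not to tie percolation itself to clause satisfaction. It is to (i) make every hull set commit to a literal per clause, (ii) make every vertex other than $z$ fast regardless of $S$, and (iii) arrange that $t(G,S,z)=5$ iff \emph{all} conflict vertices are slow, which forces the chosen literals to be pairwise consistent and hence to encode a satisfying assignment.
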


\begin{proof}

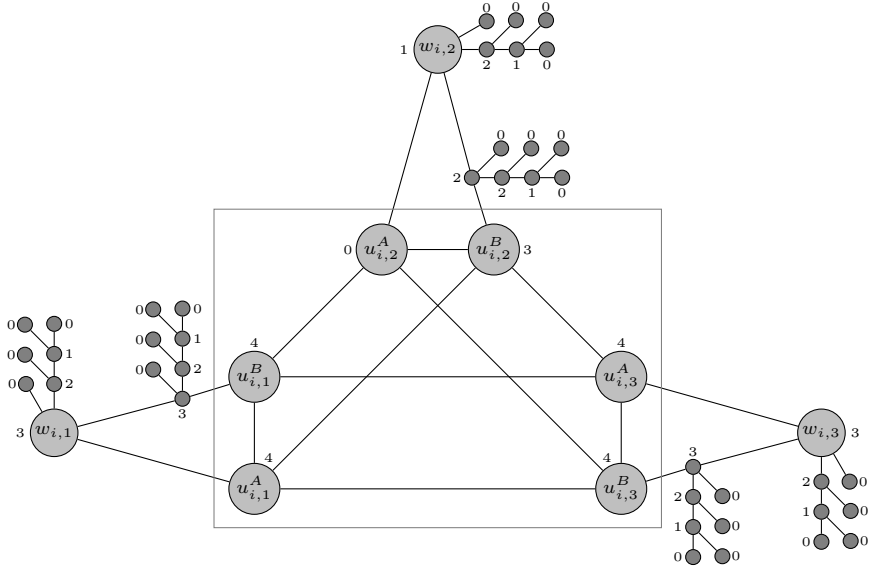
\begin{figure}[ht]
\begin{tikzpicture}[scale=.85]
\tikzstyle{every node}=[font=\scriptsize]
\tikzstyle{vertex}=[draw,circle,fill=black!25,minimum size=14pt,inner sep=1pt]
\tikzstyle{leafvertex}=[draw,circle,fill=black!50,minimum size=2pt,inner sep=2pt]
\def \diff {17}
\def \dist {3cm}
\foreach \i/\im [evaluate=\i as \angle using (\i*90)-90] in {1/3,2/2,3/1}{
	\node[vertex] (uB\i) at (\angle - \diff:\dist) {$u_{i,\im}^B$};
    \node[vertex] (uA\i) at (\angle + \diff:\dist) {$u_{i,\im}^A$};
    \node[vertex] (w\i)  at (\angle:\dist+3cm) {$w_{i,\im}$};
    \node[leafvertex,shift={(90*\i-90 + \diff:1cm)}] (l\i0) at (uB\i) {};
    \draw[-] (uA\i) to (uB\i);
    \draw[-] (uA\i) to (w\i);
    \draw[-] (uB\i) to (l\i0);
    \draw[-] (l\i0) to (w\i);
    
    \node[leafvertex,shift={(90*\i-180:0.65cm)}] (lw\i1) at (w\i) {};
    \node[shift={(90+\i*90:0.2cm)}] (nlw\i1) at (lw\i1) {\tiny{2}};
    \node[leafvertex,shift={(90*\i-150:0.75cm)}] (lwb\i0) at (w\i) {};
    \node[shift={(270+\i*90:0.18cm)}] (nlwb\i0) at (lwb\i0) {\tiny{0}};
    
    \draw[-] (w\i) to (lw\i1);
    \draw[-] (w\i) to (lwb\i0);
    
    \foreach \j/\jm/\t in {2/1/1,3/2/0}{
    	\node[leafvertex,shift={(90*\i-180:0.25cm +\j*0.4cm)}] (lw\i\j) at (w\i) {};
        \node[shift={(90+\i*90:0.2cm)}] (nlw\i\j) at (lw\i\j) {\tiny{\t}};
        \draw[-] (lw\i\j) to (lw\i\jm);
    }
    \foreach \j in {1,2}{
    	\node[leafvertex,shift={(90*\i-135:0.55cm)}] (lwb\i\j) at (lw\i\j) {};
        \node[shift={(270+\i*90:0.18cm)}] (nlwb\i\j) at (lwb\i\j) {\tiny{0}};
        \draw[-] (lwb\i\j) to (lw\i\j);
    }
    \foreach \j/\jm/\t in {1/0/2,2/1/1,3/2/0}{
    	\node[leafvertex,shift={(90*\i-180:\j*0.4cm)}] (l\i\j) at (l\i0) {};
        \node[shift={(90+\i*90:0.2cm)}] (nl\i\j) at (l\i\j) {\tiny{\t}};
        \draw[-] (l\i\j) to (l\i\jm);
    }
    \foreach \j in {0,1,2}{
    	\node[leafvertex,shift={(90*\i-135:0.55cm)}] (lb\i\j) at (l\i\j) {};
        \node[shift={(270+\i*90:0.18cm)}] (nlb\i\j) at (lb\i\j) {\tiny{0}};
        \draw[-] (l\i\j) to (lb\i\j);
    }
}
\draw[-] (uA1) to (uB2);
\draw[-] (uA1) to (uB3);
\draw[-] (uA2) to (uB1);
\draw[-] (uA2) to (uB3);
\draw[-] (uA3) to (uB1);
\draw[-] (uA3) to (uB2);
\draw[gray] (135:4.95cm) rectangle (-\diff-6:3.8cm);
\node[shift={(0:0.45cm)}] (nw1) at (w1) {\tiny{3}};
\node[shift={(180:0.45cm)}] (nw2) at (w2) {\tiny{1}};
\node[shift={(180:0.45cm)}] (nw3) at (w3) {\tiny{3}};
\node[shift={(90:0.2cm)}] (nl10) at (l10) {\tiny{3}};
\node[shift={(180:0.2cm)}] (nl20) at (l20) {\tiny{2}};
\node[shift={(270:0.2cm)}] (nl30) at (l30) {\tiny{3}};
\node[shift={(90:0.45cm)}] (uA1) at (uA1) {\tiny{4}};
\node[shift={(115:0.45cm)}] (nuB1) at (uB1) {\tiny{4}};
\node[shift={(65:0.45cm)}] (uA3) at (uA3) {\tiny{4}};
\node[shift={(90:0.45cm)}] (nuB3) at (uB3) {\tiny{4}};
\node[shift={(180:0.45cm)}] (uA2) at (uA2) {\tiny{0}};
\node[shift={(0:0.45cm)}] (nuB2) at (uB2) {\tiny{3}};
\end{tikzpicture}
\caption{\label{gadget-bip-t5} Bipartite gadget for each clause $C_i$}
\end{figure}

Let us prove that the Percolation Time Problem is NP-Complete by showing a polynomial reduction from the problem \textbf{3-SAT}.
 Given $m$ clauses $\mathcal{C}=\{C_1,\ldots,C_m\}$ on variables $X=\{x_1, \ldots,x_n\}$ of an instance of \textbf{3-SAT}, let us denote the three literals of $C_i$ by $\ell_{i,1}$, $\ell_{i,2}$ and $\ell_{i,3}$. The reduction is as follows:\\
For each clause $C_i$ of $\mathcal{C}$, add to $G$ a gadget like the one in Figure \ref{gadget-bip-t5}. Then, for each pair of literals $\ell_{i,a}, \ell_{j,b}$ such that one is the negation of the other, add a vertex $y_{(i,a),(j,b)}$ adjacent to $w_{i,a}$ and $w_{j,b}$. Let $Y$ be the set of all vertices created this way. Finally, add a vertex $z$ adjacent to all vertices in $Y$ and a vertex $z'$ adjacent only to $z$. Denote the sets $\{u^A_{i,1}, u^A_{i,2}, u^A_{i,3}\},\{u^B_{i,1}, u^B_{i,2}, u^B_{i,3}\}$ and $\{w_{i,1}, w_{i,2}, w_{i,3}\}$ by $U^A_i,U^B_i$  and $W_i$, respectively. Let $U_i = U^A_i \cup U^B_i$, $U = \bigcup_{1\le i\le m} U_i$ and $W = \bigcup_{1\le i\le m} W_i$. Also, let $T$ be the smaller and darker vertices in the Figure \ref{gadget-bip-t5} in all the gadgets.

We have that each gadget of $G$ is bipartite. Consider a bipartition where all vertices in $U^B_i \cup W$, for all $1 \leq i \leq m$, are in the same partition, and all vertices in $U^A_i$, for all $1 \leq i \leq m$, are on the other partition. If you consider $Y \cup \{z'\}$ to be in the same partition as the vertices in $U^A_i$ and $z$ to be in the same partition as the vertices in $U^B_i \cup W$, we can always choose in which partition to put the vertices in $T$ so we can have a bipartition for $G$. Thus, $G$ is bipartite.

Now, let us show that $\mathcal{C}$ is satisfiable if and only if $G$ contains a hull set that infects all vertices in $G$ in time at least $k$, for any $k \geq 5$. First, we will consider the case $k=5$.

Suppose that $\mathcal{C}$ has a truth assignment. For each clause $C_i$, let $K_i$ denote the subset of $\{1,2,3\}$ such that $\ell_{i,k}$ is true for all $k \in K_i$. Let $S = \{z'\} \cup \{u^A_{i,k} : k \in K_i, 1 \leq i \leq m\} \cup T$. We can see in Figure \ref{gadget-bip-t5} that all vertices in the clause gadgets are infected in time at most 4. Also, we have that $S$ infects $w_{i,k}$, for any $k \in K_i$, in time 1 and infects $w_{i,k'}$, for any $k' \notin K_i$, in time 3 for every $1 \leq i \leq m$. Since we used a truth assignment, we have that all vertices of $Y$ are infected in time exactly 4 because any vertex $y \in Y$ is adjacent to exactly one vertex in $W$ infected at time 1 and another vertex in $W$ infected at time 3. Thus, the vertex $z$ is infected in time 5 and, therefore, $t(G) \geq 5$.

Now, suppose that there is a instance $\mathcal{C}$ is not satisfiable. Let $G$ be the graph resulting from the reduction applied to $\mathcal{C}$. Let us prove that $t(G) < 5$. Let $S$ be any hull set of $G$. Since $S$ is a hull set, then it must have at least one vertex in each set $U_i$, because each vertex in $U_i$ have only one neighbor outside $U_i$, and all vertices of degree one, for the same reason. Thus, since $\mathcal{C}$ is not satisfiable, then there is a vertex $y \in Y$ such that $S$ infects $y$ at time $\leq 3$. This is because, if $S$ infects all vertices in $Y$ at time 4, then we can find an assignment to the variables of $\mathcal{C}$ that satisfies $\mathcal{C}$. Consider the following assignment: If a vertex $u_{i,j}$ is in $S$ and $\ell_{i,j}$ is a positive literal then assign true to the variable that it represents and, hence, if a vertex $u_{i,j}$ is in $S$ and $\ell_{i,j}$ is a negative literal then assign false to the variable that it represents. After that, if there is a variable that does not yet have a value, assign true to that variable. This assignment satisfies $\mathcal{C}$ because since $S$ has at least one vertex in each set $U_i$, then each clause will have at least one literal that evaluates to true. Furthermore, since all vertices in $Y$ are infected at time 4, then we have that two vertices in $U$ that represent the same variable but negate each other, cannot be in $S$ simultaneously and, therefore, we do not risk to assign different values to the same variable.

Thus, we have that $z$ is infected by $S$ at time $\leq 4$. Since any hull set $S$ infects all vertices in $Y \cup U$ at time $\leq 4$, then $x$ is the only vertex that can possible be infected at time greater than 4. Since, for any hull set $S$, $S$ infects $x$ at time $\leq 4$, hence all hull sets infect $G$ at time $\leq 4$. Therefore, $t(G) \leq 4$.

For values $k>5$, it suffices to alter the reduction by adding a path $P$ of length $k-5$ and linking one end of $P$ to the vertex $z$, appending a new leaf vertex to each vertex in $P$. The proof remains the same.
\end{proof}

\section{Discussion about the polynomial results and some useful lemmas}
\label{secaopolyres}

In the next sections, we prove that the Percolation Time Problem for $k=3$ in general graphs and for $k=3$ and $k=4$ for bipartite graphs is polynomial time solvable. In each of these cases, we obtain algorithms that are by-product of a structural characterization that can be computed in polynomial time. All the structural characterizations are very similar to each other: a graph $G$ has percolation time at least $k$ if and only if there is a set $F$ of vertices with size $|F| = O(1)$, a vertex $u\in V(G)$ and a set $T_0$ of vertices satisfying some specific restrictions, such that the set $F \cup T_0 \cup N_{\geq k}(u)$ infects some vertex $x$ at time $k$. Since $|F| = O(1)$, the difficulty to compute these characterizations resides in how many sets $T_0$ we have to test. In each of the next sections, we define the set $T_0$ and, when it is not obvious, we prove that the number of sets $T_0$ that we have to test and the construction of a set $T_0$ can be done in polynomial time.

Also, in the proofs of the next sections, we will often prove that some vertex $v$ is infected by some set $S$ at time $k$. To do that, we will prove that $t(G,S,v)\geq k$ and $t(G,S,v)\leq k$. Recall the infection process $S_{(0)},S_{(1)},S_{(2)},\ldots$ beginning with $S_{(0)}=S$. Besides Lemmas \ref{lemadistanciak} and \ref{corolariosubset}, there are essentially two arguments that we will use to prove these two points. The first argument is that, if $S_{(k-1)}$ contains at least two vertices in the neighborhood of $v$, then $t(G,S,v)\leq k$ (that is, $v\in S_{(k)}$). The second argument is that, if $S_{(k-1)}$ contains at most one vertex in the neighborhood of $v$, then either $v$ is not infected by $S$ or $v$ is infected by $S$ at time at least $k$, i.e., $t(G,S,v)\geq k$.

It is worth noting that $t(G,S,v)\geq k$ means that either $v$ is infected by $S$ at time at least $k$ or $v$ is not infected by $S$. However, if $S$ is a hull set, then it means that $v$ is infected by $S$ at time at least $k$. Moreover, if we already know that a subset of $S$ infects $v$, then $t(G,S,v)\geq k$ means that $S$ infects $v$ at time at least $k$.

The following useful lemma show us that we can obtain a hull set for any time smaller or equal to $t(G)$.
\begin{lemma}
\label{lemainfectaexato}
Let $G$ be a graph. Then, for any $0\leq k\leq t(G)$, there exists a hull set $S$ of $G$ such that $t(G,S)=k$.
\end{lemma}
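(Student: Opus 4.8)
The plan is to prove the statement by a continuity-of-time argument: I will exhibit, for $k = t(G)$, a hull set achieving time exactly $t(G)$ (this is just the definition of $t(G)$), and then show that one can always ``slow down by one'' — given a hull set $S$ with $t(G,S) = j$ for some $1 \le j \le t(G)$, produce a hull set $S'$ with $t(G,S') = j-1$. Iterating from $j = t(G)$ down to $1$, together with the trivial case $k=0$ (where $S = V(G)$ works, since then $S_{(0)} = V(G)$), yields a hull set for every value $0 \le k \le t(G)$.

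First I would fix a hull set $S$ with $t(G,S) = j \ge 1$ and look at the last infection step: there is a vertex $v$ (in fact the set $S_{(j)} \setminus S_{(j-1)}$ is nonempty) that becomes infected only at time $j$. The natural candidate for the new hull set is $S' = S \cup (S_{(j)} \setminus S_{(j-1)})$, i.e.\ we add to the initial set precisely the vertices that were infected last. I would then argue two things. (1) $S'$ is still a hull set, because its closure contains the closure of $S$, which is all of $V(G)$. (2) Running the infection from $S'$ takes exactly $j-1$ rounds: on one hand $S'_{(i)} \supseteq S_{(i)}$ for all $i$, and since $S'_{(j-1)} \supseteq S_{(j)} = V(G)$ we get $t(G,S') \le j-1$; on the other hand, adding only vertices of $S_{(j)} \setminus S_{(j-1)}$ to $S$ cannot accelerate the infection of the remaining vertices by more than one round, so $S'_{(j-2)} \neq V(G)$ and hence $t(G,S') \ge j-1$. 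The inequality $t(G,S') \ge j-1$ is the point requiring care, and I would prove it by showing $S'_{(i)} \subseteq S_{(i+1)}$ for all $i \ge 0$, which gives $S'_{(j-2)} \subseteq S_{(j-1)} \neq V(G)$.

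The containment $S'_{(i)} \subseteq S_{(i+1)}$ is proved by induction on $i$. For $i = 0$ it holds because $S' = S \cup (S_{(j)}\setminus S_{(j-1)}) \subseteq S_{(j)} \subseteq S_{(1)} \cup \cdots$; more precisely $S' \subseteq S_{(j)}$ and also $S' \subseteq S_{(1)}$ is false in general, so instead I note $S' \subseteq S_{(1)}$ need not hold but the correct base is just $S' = S'_{(0)}$ and we want $S'_{(0)} \subseteq S_{(1)}$: since $S \subseteq S_{(1)}$ trivially and $S_{(j)} \setminus S_{(j-1)} \subseteq S_{(j)}$, we only get $S' \subseteq S_{(j)}$, not $S_{(1)}$. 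Hence the cleaner route is the reverse: prove by induction that $S_{(i+1)} \supseteq S'_{(i)}$ fails for small $i$, so I will instead directly compare using $S' \subseteq S_{(j)}$ is too weak — the right monotonicity is $S'_{(i)} \subseteq S_{(i+1)}$ via: assume it for $i$; a vertex entering $S'_{(i+1)}$ has two neighbours in $S'_{(i)} \subseteq S_{(i+1)}$, hence enters $S_{(i+2)}$. The base case $S'_{(0)} \subseteq S_{(1)}$ holds because every vertex of $S' \setminus S$ lies in $S_{(j)}$, and... here is the genuine subtlety, so the main obstacle is precisely establishing that enlarging $S$ to $S'$ does not compress the schedule by two or more rounds.

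I expect the main difficulty to be exactly this quantitative ``no more than one round of speed-up'' claim. The clean way I would handle it is: show $S'_{(i)} \subseteq S_{(i+1)}$ by induction where the base case uses that $S' \subseteq S_{(j)}$ and $j \ge 1$ forces, for the vertices newly added, that each has at least two neighbours already in $S_{(j-1)} \subseteq S_{(1)}$ only when $j-1 \le 1$; for larger $j$ one instead observes $S' = S_{(0)} \cup (S_{(j)} \setminus S_{(j-1)})$ and the added vertices, being in $S_{(j)}$, are in particular in $V(G)$, and one uses the elementary fact that $S_{(j)} \setminus S_{(j-1)}$ added at time $0$ behaves no faster than if it had been added at time $j-1$, which after shifting gives the bound. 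If this direct approach proves awkward, a robust alternative is an explicit padding construction: take the hull set $S^\ast$ realizing $t(G,S^\ast) = t(G)$ and successively replace it by $S^\ast \cup S^\ast_{(1)}$, $S^\ast \cup S^\ast_{(2)}$, \ldots, each step provably decreasing the time by exactly one by the monotonicity lemma $S_{(i)} \subseteq S'_{(i)}$ (Lemma~\ref{corolariosubset}) applied twice. I would write the final proof using whichever of these is shorter once the details of the one-round bound are pinned down.
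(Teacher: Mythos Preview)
Your main approach --- adding the \emph{last}-infected vertices to $S$ --- does not work: the lower bound $t(G,S') \ge j-1$ can genuinely fail. Here is a small counterexample. Take $V=\{s_1,s_2,a,b,c,d\}$ with edges $s_1a,\ s_2a,\ s_1b,\ ab,\ bc,\ s_2c,\ cd,\ ad$, and let $S=\{s_1,s_2\}$. Then $S_{(1)}=\{s_1,s_2,a\}$, $S_{(2)}=\{s_1,s_2,a,b\}$, $S_{(3)}=\{s_1,s_2,a,b,c\}$, $S_{(4)}=V$, so $t(G,S)=4$ and $d$ is the unique last-infected vertex. With $S'=S\cup\{d\}=\{s_1,s_2,d\}$ one gets $S'_{(1)}=\{s_1,s_2,d,a,c\}$ (since $c$ now sees $s_2$ and $d$) and $S'_{(2)}=V$, so $t(G,S')=2<j-1=3$. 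The inductive claim $S'_{(i)}\subseteq S_{(i+1)}$ already fails at $i=0$, as you yourself noticed; the point is that it is not merely a technical nuisance but actually false. Your iterative descent $j\to j-1\to\cdots$ therefore can skip values and the argument breaks.

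Your ``robust alternative'' is correct and is exactly what the paper does: replace $S$ by $S\cup S_{(1)}=S_{(1)}$, i.e.\ absorb the \emph{first} round rather than the last. Because the process is deterministic, $(S_{(1)})_{(i)}=S_{(i+1)}$ for every $i\ge 0$, so $t(G,S_{(1)})=t(G,S)-1$ on the nose; iterate to reach every $0\le k\le t(G)$. This is a one-line proof --- no monotonicity lemma needs to be ``applied twice'', and none of the delicate bookkeeping you attempted for the other direction is required. Drop the first approach entirely and present only this one.
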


\begin{proof}
Given a hull set $S'$ of $G$ such that $t(G,S')\geq 1$, let $S=S'\cup S'_1$, where $S'_1$ is the set of vertices infected by $S'$ at time 1. Clearly, $t(G,S)=t(G,S')-1$. Applying this fact, we are done.
\end{proof}

We say that a set $T$ of vertices is \emph{co-convex} if every vertex of $T$ has at most one neighbor outside $T$. The next lemma proves that every hull set must contain at least one vertex of any co-convex set.
\begin{lemma}
\label{lemacoconvexo}
Let $S$ be a hull set of $G$ and let $T$ be a co-convex set of $G$. Then $S$ contains a vertex of $T$.
Consequently, $S$ contains all vertices of $G$ with degree 1.
Moreover, for all pair of adjacent vertices $x,y$ with degree two, $S$ contains at least one of them.
\end{lemma}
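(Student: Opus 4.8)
The plan is to prove the first assertion by contradiction, showing that a hull set which avoids $T$ completely can never infect any vertex of $T$, contradicting percolation. So assume $T\neq\emptyset$ (otherwise the claim is vacuous) and let $S$ be a hull set with $S\cap T=\emptyset$. I would run the infection process $S_{(0)},S_{(1)},S_{(2)},\ldots$ with $S_{(0)}=S$ and prove by induction on $i$ that $S_{(i)}\cap T=\emptyset$ for every $i\geq 0$. The base case is the hypothesis. For the inductive step, fix any $v\in T$: by co-convexity, at most one neighbor of $v$ lies outside $T$, while every neighbor of $v$ inside $T$ is uninfected at time $i$ by the induction hypothesis. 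Hence $v$ has at most one neighbor in $S_{(i)}$, so $v\notin S_{(i+1)}$, and therefore $S_{(i+1)}\cap T=\emptyset$. Consequently $T$ is never infected; but since $S$ is a hull set there is some $t$ with $S_{(t)}=V(G)$, which is impossible as $T\neq\emptyset$. This contradiction proves that $S$ must contain a vertex of $T$.

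For the two consequences I would just exhibit appropriate co-convex sets and apply the first part. If $v$ is a vertex of degree $1$, then $T=\{v\}$ is co-convex, since its unique vertex has its single neighbor outside $T$; hence $v\in S$. If $x$ and $y$ are adjacent vertices both of degree $2$, then $T=\{x,y\}$ is co-convex: each of $x$ and $y$ has exactly two neighbors, one being the other endpoint (inside $T$), leaving exactly one neighbor outside $T$; hence $S$ contains $x$ or $y$.

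The argument is entirely routine. The only points requiring a little care are the (trivial) nonemptiness of $T$, and the precise bookkeeping in the inductive step: ``at most one neighbor outside $T$'' combined with ``no infected neighbor inside $T$'' yields ``at most one infected neighbor in total,'' which is exactly the condition that keeps $v$ from being infected in the next round. I do not anticipate any real obstacle here; this lemma is a preparatory structural observation rather than a difficult step.
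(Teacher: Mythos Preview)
Your proof is correct and follows essentially the same approach as the paper: a contradiction argument showing that if $S\cap T=\emptyset$ then no vertex of $T$ is ever infected, followed by the identical verification that $\{v\}$ (for $v$ of degree~1) and $\{x,y\}$ (for adjacent degree-2 vertices) are co-convex. The only cosmetic difference is that the paper phrases the core step as ``consider the first vertex of $T$ to be infected and derive a contradiction,'' whereas you phrase it as an induction on rounds; these are equivalent formulations of the same argument.
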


\begin{proof}
Suppose, by contradiction, that $S\cap T=\varnothing$. Since $S$ is a hull set, then $S$ infects all the vertices in $T$. Let $v$ be the first vertex in $T$ infected by $S$, i.e., $t(G,S,v) = \min_{v' \in T} t(G,S,v')$. Let $t(G,S,v) = t$. Since $v$ is infected at time $t>0$ by $S$, then $v$ must have two neighbor vertices $v_1$ and $v_2$ infected at time $t(G,S,v_1)=t-1$ and $t(G,S,v_2)\leq t-1$. Clearly $t(G,S,v_1)<t(G,S,v)$ and $t(G,S,v_2)<t(G,S,v)$. Since $v$ has at most one neighbor vertex outside $T$, then at least one of the two vertices $v_1$ and $v_2$ is in $T$, which is a contradiction because there is no vertex in $T$ infected earlier than $v$. Therefore, there must be a vertex in $T \cap S$.
If $z$ is a vertex of degree 1, then $T=\{z\}$ is clearly co-convex.
If $x,y$ are adjacent vertices with degree two, then $T=\{x,y\}$ is clearly co-convex.
\end{proof}

Given a vertex $v$, let $N(v)$ be the set of neighbors of $v$ and let $N[v]=N(v)\cup\{v\}$.
Given an integer $i\geq 0$, $N_i(v)$ denotes the set of vertices at distance $i$ of $v$; $N_{\geq i}(u)$ denotes the set of vertices at distance greater or equal to $i$ of $v$; and $N_{\leq i}(u)$ denotes the set of vertices at distance less or equal to $i$ of $v$.
The following lemma show us that, if a vertex $w$ is infected at time at least $k$ by a set $S$, then this also happens for $S\cup\{z\}$ for any $z\in N_{\geq k}(w)$. That is, $z$ cannot help a faster infection of $w$ because $z$ is too far from $w$.

\begin{lemma}
\label{lemadistanciak}
Let $S$ be a set of vertices of a graph $G$ and $w$ be a vertex of $G$. If $t(G,S,w)\geq k$, then $t(G,S,w)\geq t(G,S\cup\{z\},w)\geq k$ for any $z\in N_{\geq k}(w)$. Consequently, $t(G,S,w)\geq t(G,S\cup N_{\geq k}(w),w)\geq k$.
\end{lemma}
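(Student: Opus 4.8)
The plan is to prove the statement by induction on the infection time of the vertex $w$ (equivalently on $k$), tracking how the augmented set $S\cup\{z\}$ can only ever ``save time'' at $w$ by propagating an advantage along a path from $z$ to $w$, and such a path is too long to matter. First I would observe that the monotonicity $t(G,S\cup\{z\},w)\le t(G,S,w)$ is immediate, since adding vertices to $S$ can only make each $S_{(i)}$ larger (a routine monotonicity of the $P_3$-bootstrap rule, which one may state without proof or fold into Lemma~\ref{corolariosubset} if available). So the content is the lower bound $t(G,S\cup\{z\},w)\ge k$ for $z\in N_{\ge k}(w)$.

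The key step is the following claim, proved by induction on $j$: for every vertex $v$ and every $j\ge 0$, if $v$ is infected by $S\cup\{z\}$ at time $j$ but $v\notin S_{(j)}$ (the round-$j$ set of the original process from $S$), then $\mathrm{dist}(z,v)\le j$. Intuitively, any vertex whose infection was genuinely accelerated by the presence of $z$ must lie within distance $j$ of $z$ after $j$ rounds. For the base case $j=0$, the only vertex infected by $S\cup\{z\}$ at time $0$ that is not already in $S$ is $z$ itself, and $\mathrm{dist}(z,z)=0$. For the inductive step, suppose $v$ is infected by $S\cup\{z\}$ at time $j+1$ and $v\notin S_{(j+1)}$. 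Then $v$ has two neighbors $a,b$ in the round-$j$ set of the augmented process. If both $a$ and $b$ were already in $S_{(j)}$, then $v$ would be in $S_{(j+1)}$, a contradiction; so at least one of them, say $a$, is infected by $S\cup\{z\}$ at time $\le j$ but is not in $S_{(j)}$. By the induction hypothesis, $\mathrm{dist}(z,a)\le j$, and since $a$ is adjacent to $v$ we get $\mathrm{dist}(z,v)\le j+1$, completing the claim.

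Now apply the claim with $v=w$. Let $j=t(G,S\cup\{z\},w)$; we want $j\ge k$. If $w\in S_{(j)}$ then $t(G,S,w)\le j$, but $t(G,S,w)\ge k$ by hypothesis, so $j\ge k$ and we are done. Otherwise $w\notin S_{(j)}$, so the claim gives $\mathrm{dist}(z,w)\le j$; combined with $z\in N_{\ge k}(w)$, i.e.\ $\mathrm{dist}(z,w)\ge k$, this yields $j\ge k$ as well. This proves $t(G,S\cup\{z\},w)\ge k$. Finally, the ``consequently'' part follows by adding the vertices of $N_{\ge k}(w)$ one at a time: each added vertex is still at distance $\ge k$ from $w$ in the (unchanged) graph $G$, and the partially-augmented set still infects $w$ at time $\ge k$ by what we just proved, so the bound is preserved throughout, and the upper bound $t(G,S,w)\ge t(G,S\cup N_{\ge k}(w),w)$ is again monotonicity.

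The only mildly delicate point is making sure the induction in the claim is set up against the \emph{original} process $S_{(0)},S_{(1)},\dots$ while reasoning about the \emph{augmented} process; one must be careful that ``$v\notin S_{(j)}$'' refers to the original process at exactly round $j$, and that the neighbor $a$ chosen in the inductive step is the one failing to be in $S_{(j)}$ — if both neighbors lay in $S_{(j)}$ the original process would already have infected $v$ by round $j+1$. I expect this bookkeeping to be the main (and only real) obstacle; everything else is the standard monotonicity of bootstrap percolation.
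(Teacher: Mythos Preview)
Your proof is correct. Both your argument and the paper's rest on the same underlying idea---that the ``advantage'' conferred by the extra seed $z$ propagates outward at speed at most one---but you package it differently. The paper does a direct induction on $k$: for each neighbor $u$ of $w$ it observes that $z\in N_{\ge k-1}(u)$, applies the inductive hypothesis (the lemma itself at level $k-1$) to $u$, and concludes that the set of neighbors of $w$ infected before round $k-1$ cannot grow when $z$ is added, so $w$ still needs $k$ rounds. You instead isolate a standalone propagation claim (any vertex in $(S\cup\{z\})_{(j)}$ but not in $S_{(j)}$ lies within distance $j$ of $z$) and then specialize it to $w$. Your route is slightly more general---the auxiliary claim applies to every vertex, not just $w$---and makes the speed-one intuition explicit; the paper's proof is shorter since it bypasses the intermediate claim and works directly at $w$. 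One minor wording point: your claim is cleanest when stated for ``$v\in (S\cup\{z\})_{(j)}$ and $v\notin S_{(j)}$'' rather than ``$v$ infected at time $j$'', since in the inductive step you apply it to a neighbor $a$ infected at time $\le j$; your argument already handles this correctly, so only the phrasing needs tightening.
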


\begin{proof}
Let $S'=S\cup\{z\}$. Clearly, $t(G,S',w)\leq t(G,S,w)$.
Let us prove that $t(G,S',w)\geq k$ by induction on $k$. The proof for $k=0$ is trivial.
Now, let $k>0$ and suppose that this lemma holds for any value smaller than $k$. Let us prove that it also holds for $k$. Let $z\in N_{\geq k}(w)$ and suppose that $t(G,S,w)\geq k$. Thus, if $u\in N(w)$, then $z\in N_{\geq k-1}(u)$. By the inductive hypothesis, for every $u\in N(w)$, if $t(G,S,u)\geq k-1$, then $t(G,S',u)\geq k-1$. Therefore, since we have at most one vertex $x\in N(w)$ such that $t(G,S,x)<k-1$ (otherwise, $t(G,S,w)<k$), there is at most one vertex $x$ in $N(w)$ such that $t(G,S',x)<k-1$. Thus, $t(G,S',w)\geq k$ and we are done.
\end{proof}

\begin{lemma}
\label{corolariosubset}
Let G be a graph and let $S\subseteq S'\subseteq V(G)$. Then $t(G,S,u)\geq t(G,S',u)$ for any vertex $u$.
\end{lemma}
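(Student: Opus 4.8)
The plan is to prove the stronger (and more natural) statement that the entire infection processes are nested: if $S_{(0)},S_{(1)},S_{(2)},\ldots$ and $S'_{(0)},S'_{(1)},S'_{(2)},\ldots$ denote the two infection sequences started from $S$ and $S'$ respectively, then $S_{(i)}\subseteq S'_{(i)}$ for every $i\geq 0$. The lemma follows immediately from this: by definition $t(G,S,u)$ is the least $t$ (possibly $\infty$) with $u\in S_{(t)}$, and since $u\in S_{(t)}$ implies $u\in S'_{(t)}$, any time at which $u$ is infected starting from $S$ is also a time at which $u$ is infected starting from $S'$; hence $t(G,S',u)\leq t(G,S,u)$. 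If $t(G,S,u)=\infty$ the inequality is trivial.

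To establish $S_{(i)}\subseteq S'_{(i)}$ I would argue by induction on $i$. The base case $i=0$ is exactly the hypothesis $S=S_{(0)}\subseteq S'_{(0)}=S'$. For the inductive step, assume $S_{(i)}\subseteq S'_{(i)}$. A vertex $v$ lies in $S_{(i+1)}$ iff $v\in S_{(i)}$ or $v$ has at least two neighbors in $S_{(i)}$. In the first case $v\in S_{(i)}\subseteq S'_{(i)}\subseteq S'_{(i+1)}$. In the second case, the two neighbors of $v$ in $S_{(i)}$ are also neighbors of $v$ in $S'_{(i)}$ by the inductive hypothesis, so $v$ has at least two neighbors in $S'_{(i)}$ and therefore $v\in S'_{(i+1)}$. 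In either case $v\in S'_{(i+1)}$, completing the induction.

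I do not expect any genuine obstacle here; the statement is a routine monotonicity property of the 2-neighbor spreading rule, and the only thing to be careful about is the bookkeeping when $t(G,S,u)=\infty$, which is handled by the convention that $\infty\geq t$ for all $t$. In fact this lemma is essentially a special case of the reasoning already used implicitly in Lemma~\ref{lemadistanciak}, so it could alternatively be cited from there, but the direct two-line induction above is the cleanest route.
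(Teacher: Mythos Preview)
Your proof is correct and follows essentially the same approach as the paper: both argue by induction on $i$ that $S_{(i)}\subseteq S'_{(i)}$, with the paper merely stating the inductive step (``if $S_{(i)}\subseteq S'_{(i)}$, then $S_{(i+1)}\subseteq S'_{(i+1)}$'') while you spell out the two cases explicitly. The handling of the $t(G,S,u)=\infty$ case and the deduction of the inequality from the nesting are also the same.
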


\begin{proof}
If $t(G,S,u)=\infty$, we are done. Suppose that $t(G,S,u)=k$. For any $0\leq i\leq k$, let $S_{(i)}$ and $S'_{(i)}$ be the sets of vertices infected by $S$ and $S'$ at time $i$. Notice that, if $S_{(i)}\subseteq S'_{(i)}$, then $S_{(i+1)}\subseteq S'_{(i+1)}$. Since $S=S_{(0)}\subseteq S'_{(0)}=S'$, then we have by induction that $S_{(k)}\subseteq S'_{(k)}$ and we are done.
\end{proof}

\section{Maximum Percolation Time with parameter fixed $k=3$ in bipartite graphs}
\label{secaoBip3}

The following theorem is the main result of this section.
\begin{restatable}{theorem}{biptt}
Deciding whether $t(G)\geq 3$ is $O(mn^3)$-time solvable in bipartite graphs.
\end{restatable}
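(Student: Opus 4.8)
The plan is to follow the general strategy announced in Section~\ref{secaopolyres}: show that $t(G)\geq 3$ holds if and only if there exist a constant-size set $F$ of vertices, a vertex $u$, and a structured set $T_0$ such that the infection started from $F\cup T_0\cup N_{\geq 3}(u)$ infects some vertex $x$ at time exactly $3$. The key reduction is this. Observe $t(G)\geq 3$ iff there is a hull set $S$ and a vertex $x$ with $t(G,S,x)\geq 3$. If such an $x$ is infected at time $\geq 3$, then in $S_{(2)}$ at most one neighbor of $x$ is present; pick the "bottleneck" structure near $x$: a neighbor $w$ of $x$ that is the (eventual) second infector of $x$, and track which vertices of $N(w)$ and $N(x)$ are infected early. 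Since the graph is bipartite, $N(x)$ and $N(w)$ interact in a controlled way — there are no edges inside $N(x)$ and no edges inside $N(w)$ — so the "early infection" of $w$ must itself come through a pair of vertices two steps from $x$. This lets me isolate a bounded set $F$ (something like $x,w$ together with an early-infected neighbor of $w$, an early-infected neighbor of $x$, and a witness vertex that forces $w$ to be infected at time $\leq 1$) of size $O(1)$, and then argue that the rest of the infection can be supplied by $N_{\geq 3}(x)$ (which by Lemma~\ref{lemadistanciak} never accelerates the infection of $x$) together with a set $T_0$ whose job is to guarantee every co-convex set is hit (Lemma~\ref{lemacoconvexo}) without interfering near $x$.

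Concretely, the steps in order are: (1) prove the "only if" direction — from a hull set $S$ with $t(G,S,x)\geq 3$, extract $F$ of bounded size and define $T_0$ (essentially, a minimal set meeting all relevant co-convex sets, chosen to avoid the neighborhoods of $x$ and $w$), and verify via Lemmas~\ref{lemadistanciak} and~\ref{corolariosubset} that $F\cup T_0\cup N_{\geq 3}(x)$ still infects $x$ at time $\geq 3$ while $F\cup T_0\cup N_{\geq 3}(x)$ can be extended to a genuine hull set; (2) prove the "if" direction — given such $F$, $u=x$, and $T_0$ passing the stated restrictions, show $F\cup T_0\cup N_{\geq 3}(x)$ extends to a hull set realizing time $\geq 3$, using Lemma~\ref{lemainfectaexato} to adjust the exact time if needed; (3) bound the enumeration — there are $O(n)$ choices for $x$, $O(\text{poly})$ for the rest of $F$ (each of $w$ and the two early neighbors ranges over $O(n)$ or $O(m)$ options, giving roughly an $mn^2$ or $n^3$ factor), and crucially argue that the admissible $T_0$ need not be enumerated: it can be computed greedily in $O(m)$ time (take, e.g., one degree-$1$ vertex from each pendant-type co-convex set and one vertex from each relevant adjacent degree-two pair, steering clear of the forbidden region), so the whole test is $O(mn^3)$; (4) run the $O(m)$-time infection simulation from the constructed seed for each guess and check whether $x\in S_{(3)}\setminus S_{(2)}$.

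The main obstacle — and the place where bipartiteness is doing real work — is step (1)/(2): pinning down exactly which bounded configuration $F$ of vertices around $x$ is both \emph{necessary} (every slow-infecting hull set contains such an $F$, modulo $N_{\geq 3}(x)$) and \emph{sufficient} (any such $F$, once completed by a hitting set $T_0$ of the co-convex sets, yields a hull set with $t(G,S,x)\geq 3$). The delicate point is ensuring $T_0$ can always be chosen to meet every co-convex set \emph{without} putting a second neighbor of $x$ or $w$ into $S_{(2)}$; in a bipartite graph one must check that the co-convex sets forced by Lemma~\ref{lemacoconvexo} (pendant vertices and adjacent degree-two pairs, and more generally any set all of whose vertices have a single outside neighbor) can be hit "far" from the bottleneck, or else that the forced local structure is already among the $O(1)$ vertices of $F$. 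Handling the interaction between $T_0$ and the neighborhoods $N(x)$, $N(w)$, $N_2(x)$ cleanly — proving that early infection of $x$ is governed solely by $F$ and is monotone under adding $N_{\geq 3}(x)$ (Lemma~\ref{lemadistanciak}) and under enlarging $T_0$ (Lemma~\ref{corolariosubset}) — is the crux of the argument; once that structural dichotomy is established, the algorithmic claim and the $O(mn^3)$ bound follow by the enumeration-and-simulation scheme above.
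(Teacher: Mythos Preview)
Your high-level strategy matches the paper's, but the plan as written does not secure the $O(mn^3)$ bound, and the structure of $F$ and $T_0$ is both vaguer and more complicated than it needs to be.

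The paper's proof rests on Lemma~\ref{lema-bip3}, which is considerably sharper than what you sketch: for bipartite $G$, one has $t(G)\geq 3$ if and only if there exist $u$, $v\in N(u)$ and $s\in N_2(u)$ such that $T_0\cup N_{\geq 3}(u)\cup\{v,s\}$ infects $u$ at time exactly $3$, where $T_0$ is \emph{simply the set of all degree-one vertices}. So $|F|=2$ and $T_0$ is a single fixed set with no choices to make; the enumeration is over $(u,v,s)$ only, giving $n\cdot n\cdot n$ triples times an $O(m)$ simulation, hence $O(mn^3)$. Your description of $F$ as ``$x,w$ together with an early-infected neighbor of $w$, an early-infected neighbor of $x$, and a witness vertex'' suggests four or five guessed vertices; enumerating those would already cost $n^4$ or $n^5$ before the $O(m)$ simulation, blowing past the stated bound. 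Likewise, your $T_0$ --- a greedily-built hitting set for co-convex sets including ``adjacent degree-two pairs'', chosen to ``steer clear of the forbidden region'' --- is the machinery the paper introduces only for the harder cases ($k=3$ general, $k=4$ bipartite); for bipartite $k=3$ none of it is needed, and in fact no argument is required that $T_0$ avoids any region, since degree-one vertices are forced into every hull set by Lemma~\ref{lemacoconvexo} and are too weakly connected to accelerate anything.

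The real content you are missing is the argument that two vertices suffice for $F$. In the necessary direction the paper first arranges (via Lemma~\ref{lemadistanciak} and bipartiteness) that exactly one neighbor $v$ of $u$ lies in $S$, then does a case split on whether some other $w\in N(u)$ has a neighbor in $S$, in each case extracting a single $s\in N_2(u)$ so that $\{v,s\}\cup N_{\geq 3}(u)\cup T_0$ already infects $u$ at time $3$. In the sufficient direction the paper gives an explicit iterative construction: repeatedly add a not-yet-infected vertex $q\in N_2(u)$ to $S$, and use bipartiteness plus Lemma~\ref{lemadistanciak} to show some vertex in $N[u]$ is still infected at time $\geq 3$. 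Your plan gestures at both directions but does not isolate the two-vertex certificate or the extension procedure; without those, neither correctness nor the running time is established.
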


To prove this, we first show an important structural result.

\begin{lemma}\label{lema-bip3}
Let $G$ be a connected bipartite graph and $T_0$ be the set of vertices of $G$ that have degree 1. $t(G)\geq 3$ if and only if there are three vertices $u$, $v \in N(u)$ and $s \in N_2(u)$ such that $T_0\cup N_{\geq 3}(u)\cup \{v,s\}$ infects $u$ at time 3.
\end{lemma}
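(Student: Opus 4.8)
The plan is to prove both directions directly from the structure of the infection process. The easy direction is "if'': suppose there are vertices $u$, $v\in N(u)$ and $s\in N_2(u)$ such that $S:=T_0\cup N_{\geq 3}(u)\cup\{v,s\}$ infects $u$ at time exactly $3$. I would first note that $T_0$ (the degree-one vertices) is contained in every hull set by Lemma~\ref{lemacoconvexo}, so adding it costs nothing. Then I would extend $S$ to an honest hull set: by Lemma~\ref{corolariosubset}, enlarging the seed set only decreases infection times, so I need a hull set $S'\supseteq S$ for which $u$ is still infected at time $3$. The natural candidate is $S':=S\cup(V(G)\setminus(\text{things that might slow }u\text{ down}))$; more carefully, I would add every vertex not yet forced, being careful never to add a second neighbor of $u$ at time $\le 2$. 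Since $s\in N_2(u)$ already, and $v$ is one neighbor of $u$, the point is that $u$'s second neighbor must wait: no vertex of $N_{\geq 3}(u)$ can reach a neighbor of $u$ in fewer than $2$ steps, so the only way to infect a second neighbor of $u$ quickly is through $s$ (distance-$2$ vertices), and by hypothesis the fastest route gives time $3$. This gives $t(G,S')=t(G)\ge 3$. I expect the bookkeeping here to be routine once I invoke Lemmas~\ref{lemadistanciak} and \ref{corolariosubset}.

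The harder direction is "only if''. Assume $t(G)\ge 3$. By Lemma~\ref{lemainfectaexato} there is a hull set $S$ with $t(G,S)=3$, so there is a vertex $u$ with $t(G,S,u)=3$ and every vertex infected by time $3$. The goal is to show the specific seed set $T_0\cup N_{\geq 3}(u)\cup\{v,s\}$ already suffices to infect $u$ at time $3$, for a suitable choice of $v\in N(u)$ and $s\in N_2(u)$. Since $t(G,S,u)=3$, at time $2$ exactly one neighbor of $u$ is infected — call it $v$ — and at least one more neighbor, say $v'$, gets infected exactly at time $2$ (so that $u$ joins at time $3$). The vertex $v'\notin S$, $v'\notin S_{(1)}$; it is infected at time $2$, so it has two neighbors infected by time $1$, at least one of which, call it $s$, lies at distance $2$ from $u$ (it can't be $u$, and if both were neighbors of $u$ then $u$ would have two infected neighbors at time $1$, contradicting $t(G,S,u)=3$); here bipartiteness is used to rule out the degenerate cases and to control distances. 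The claim is then that $S^*:=T_0\cup N_{\geq 3}(u)\cup\{v,s\}$ infects $u$ at time $3$: one checks $s$ infects $v'$ at time $\le 1$ (it needs a second infected neighbor of $v'$ by time $0$ — here I must verify that $v'$ has another neighbor in $T_0\cup N_{\geq 3}(u)\cup\{v\}$, or adjust the choice of $s$ accordingly), then $v$ and $v'$ infect $u$ at time $\le 3$; and $t(G,S^*,u)\ge 3$ because $S^*$ contains only one neighbor of $u$, namely $v$, and every other vertex of $S^*$ is at distance $\ge 2$ from $u$, so by Lemma~\ref{lemadistanciak} (applied with $k=2$ to the neighbors of $u$) no neighbor of $u$ other than $v$ is infected before time $2$, hence $u$ is not infected before time $3$.

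The main obstacle I anticipate is the step that the second neighbor $v'$ of $u$ is infected at time $2$ \emph{via a single distance-$2$ vertex $s$ together with cheaply-available vertices}. A priori $v'$ could need two "expensive'' neighbors, both only infected at time $1$, and then a single added vertex $s$ does not obviously recreate the infection of $v'$ by time $1$. Resolving this is where bipartiteness and the degree-$1$ set $T_0$ do real work: I would argue that any neighbor of $v'$ other than $u$ is either in $N_{\geq 3}(u)$ (hence in $S^*$ for free), or is a degree-one vertex (in $T_0$, hence in $S^*$), or lies in $N_1(u)\cup N_2(u)$ — and a careful case analysis, possibly swapping which distance-$2$ vertex we designate as $s$ or which time-$2$ neighbor of $u$ we designate as $v'$, shows that we can always choose $v,s$ so that $v'$ is infected by time $1$ from $S^*$. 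If a single $s$ genuinely does not suffice in some configuration, the fallback is to observe that the lemma statement permits exactly one $v$ and one $s$, so one must show such a configuration forces a contradiction with $t(G,S,u)=3$ or with bipartiteness. I would also double-check the boundary subtlety that $v'$ might itself lie in $N_{\ge 3}$ of some other witness but not of $u$ — irrelevant here since $v'\in N(u)$ — and that $N_{\geq 3}(u)$ being added cannot accidentally create a short path to a second neighbor of $u$, which is exactly the content of Lemma~\ref{lemadistanciak}.
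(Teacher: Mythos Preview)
Your overall skeleton matches the paper's: both directions, using Lemmas~\ref{lemainfectaexato}, \ref{lemacoconvexo}, \ref{lemadistanciak}, \ref{corolariosubset}. But there are genuine gaps in both directions.

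\textbf{``Only if'' direction.} First, you assert that some neighbor $v$ of $u$ is infected at time $\le 1$; this need not hold (all neighbors of $u$ could be infected at time exactly $2$). The paper handles this by \emph{adding} to the hull set the neighbor of $u$ with smallest infection time, so that afterwards exactly one vertex $v$ of $N(u)$ lies in $S$ at time $0$. This normalisation is essential for the next step, which you are missing: the key observation is that once $N(u)\cap S=\{v\}$, any vertex $z\in N_2(u)$ infected at time $1$ by $S$ has both of its time-$0$ neighbors in $\{v\}\cup N_3(u)\subseteq\{v\}\cup N_{\ge 3}(u)$ (its neighbors lie in $N_1(u)\cup N_3(u)$ by bipartiteness, and $v$ is the only one in $N_1(u)\cap S$). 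Hence $z$ is already infected at time $1$ by $\{v\}\cup N_{\ge 3}(u)\cup T_0$. Your proposed case analysis instead inspects the neighbors of $v'\in N(u)$ directly, but by bipartiteness these all lie in $\{u\}\cup N_2(u)$, so none are ``for free'' in $N_{\ge 3}(u)$; the argument must go one level deeper to the time-$1$ vertices in $N_2(u)$. The paper then splits into two cases (whether some $w\in N(u)\setminus\{v\}$ has a neighbor already in $S$) and in each case exhibits $s\in N_2(u)$ so that $\{v,s\}\cup N_{\ge 3}(u)\cup T_0$ infects a second neighbor $w$ of $u$ at time exactly $2$; the lower bound $t(G,S^*,u)\ge 3$ is obtained from $S^*\subseteq S\cup\{s\}$ together with a direct check that $S\cup\{s\}$ still infects $u$ at time $\ge 3$, not from the distance argument you sketch (which fails since $s\in N_2(u)$ may well be adjacent to several vertices of $N(u)$).

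\textbf{``If'' direction.} Your plan to extend $S_0$ to a hull set while keeping $t(\cdot,\cdot,u)=3$ is too optimistic. The paper's iterative construction repeatedly adds an uninfected vertex $q\in N_2(u)$, but after adding such a $q$ one can only guarantee $t(G,S_{i+1},u)\ge 2$; the vertex that is certified to be infected at time $\ge 3$ may shift from $u$ to some $w\in N(u)$ (a neighbor of the added $q$ that was previously uninfected). So the invariant is ``some vertex is infected at time $\ge 3$'', not ``$u$ is infected at time $3$''. Your ``routine bookkeeping'' would have to accommodate this drift, and the argument that such a $w$ exists and remains slow uses both bipartiteness and the fact that $q$ was previously uninfected.
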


\begin{proof}
Firstly, suppose that $t(G)\geq 3$. From Lemma \ref{lemainfectaexato}, there exists a hull set $S''$ that infects $G$ at time 3 and a vertex $u$ that is infected by $S''$ at time 3. From Lemma \ref{lemacoconvexo}, $T_0 \subseteq S''$. From Lemma \ref{lemadistanciak}, $S'=S''\cup N_{\geq 3}(u)$ is also a hull set that infects $u$ at time 3. If $S'$ contains a vertex in $N(u)$, let $S=S'$. Otherwise, let $v$ be a neighbor of $u$ with the smallest infection time with respect to the hull set $S'$ and let $S=S'\cup\{v\}$. Since $G$ is bipartite, the distance from $v$ to any other vertex of $N(u)$ is at least two. Since all vertices in $N(u)-\{v\}$ are infected at time $\geq 2$ by $S'$, then by Lemma \ref{lemadistanciak} all vertices in $N(u)-\{v\}$ are infected at time $\geq 2$ by $S$. Thus, $S$ infects $u$ at time $\geq 3$. In fact, $S$ infects $u$ at time exactly 3, because $S'\subseteq S$ and by Lemma \ref{corolariosubset}. Note that, since $v$ is the only vertex of $N(u)$ infected by $S$ at time 0 and $G$ is bipartite, then any vertex in $N_2(u)$ infected at time 1 by $S$ is also infected at time 1 by the set $\{v\} \cup N_{\geq 3}(u) \cup T_0$. Now, we split the proof in two cases.

The first case occurs when no vertex of $N(u)-\{v\}$ has a neighbor in $S$. Let $w\in N(u)-\{v\}$ be a vertex such that $t(G,S,w)=2$ and let $s,z$ be vertices of $N(w)$ infected by $S$ at time 1. Since $G$ is bipartite, $s,z\in N_2(u)$. Since no vertex of $N(u)-\{v\}$ has a neighbor in $S$, then any vertex of $N(u)-\{v\}$ has at most one neighbor in $S \cup\{s\}$. Therefore, every vertex of $N(u)-\{v\}$ is infected by $S\cup\{s\}$ at time $\geq 2$. Thus, $S\cup\{s\}$ infects $u$ at time $\geq 3$. In fact, by Lemma \ref{corolariosubset}, $S\cup\{s\}$ infects $u$ at time exactly 3.
Since $t(G,S,z) = 1$, then the set $\{v,s\}\cup N_{\geq 3}(u)\cup T_0$ infects $z$ at time 1. Moreover, $\{s,v\}\cup N_{\geq 3}(u)\cup T_0\subseteq S\cup\{s\}$ and, by Lemma \ref{corolariosubset}, the set $\{s,v\}\cup N_{\geq 3}(u)\cup T_0$ infects $w$ at time exactly 2. With the same arguments, we have that $\{s,v\}\cup N_{\geq 3}(u)\cup T_0$ infects $u$ at time 3, through the vertices $w$ and $v$.

The second case occurs when there is a vertex $w\in N(u)-\{v\}$ with a neighbor $s\in S$. Let $z$ be any vertex of $N(w)$ infected by $S$ at time 1. Again, we have that $s,z \in N_2(u)$. Similarly to the previous case, since the set $\{v,s\} \cup N_{\geq 3}(u) \cup T_0$ infects $z$ at time 1 and $\{v,s\}\cup N_{\geq 3}(u)\cup T_0\subseteq S$, then by Lemma \ref{corolariosubset} the set $\{s,v\}\cup N_{\geq 3}(u)\cup T_0$ infects $w$ at time 2 and $u$ at time 3.
With this, we conclude the proof of the necessary condition.

Now, let us prove the sufficient condition. Suppose that there are three vertices $u$, $v$ and $s$ such that $v\in N(u)$, $s\in N_2(u)$ and the set $S_0=T_0\cup N_{\geq 3}(u)\cup \{v,s\}$ infects $u$ at time 3. 
We will construct a hull set $S\supseteq S_0$ in rounds. Let $S_i$ be the set obtained in round $i$. Consider we are in round $i\geq 0$. If $S_i$ is a hull set, let $S=S_i$ and we stop. Otherwise, suppose that there is a vertex $q\in N_2(u)$ not infected by $S_i$. Notice that $q$ is adjacent to a vertex $w\in N(u)$ not infected by $S_i$, because $q$ has degree at least 2 and $G$ is bipartite. Let $S_{i+1}=S_i\cup\{q\}$. Since $q$ is at distance 2 of $u$, then, by Lemma \ref{lemadistanciak}, $2\leq t(G,S_{i+1},u)\leq 3$. Therefore, $S_{i+1}$ infects $w$ at time $\geq 3$, since all neighbors of $w$ are at distance $\geq 2$ of $q$.

Since $N_2(u)$ is finite, this procedure will finish and all vertices of $N_2(u)$ are infected by $S$ and consequently $S$ is a hull set. Moreover, $2\leq t(G,S,u)\leq 3$ and $S$ infects at time $\geq 3$ some vertex $w\in N(u)$.
\end{proof}

Given the vertices $u,v,s$ such that $v\in N(u)$, $s\in N_2(u)$ and a set $S_0=T_0\cup N_{\geq 3}(u)\cup \{v,s\}$ that infects $u$ at time 3, the construction to build a hull set $S$ such that $t(G,S) \geq 3$ that is described in Lemma \ref{lema-bip3} can be implemented to run in time $O(mn^2)$. This is because, since $G$ is connected, the sets $N(u),N_2(u)$ and $N_{\geq 3}(u)$ can be computed in time $O(m)$ and, in each step $i$, we only have to find one vertex, if there is any, belonging to the set $N_2(u) \cap Y$, which can be done in time $O(n)$, and then update the set $Y$, which can be done in time $O(mn)$. Now, let us prove the main theorem of the section.

\biptt*

\begin{proof}
Consider the following algorithm:\\
\begin{algorithm}[ht]
\label{algbip3}
\DontPrintSemicolon
\KwIn{A bipartite connected graph $G$.}
\KwOut{\textbf{Yes}, if $t(G) \geq 3$. \textbf{No}, otherwise.}
\ForAll{$u \in V(G)$}{
	\ForAll{$v \in N(u)$}{
    	\ForAll{$s \in N_2(u)$}{
    		\If{$N_{\geq 3}(u) \cup T_0 \cup \{s,v\}$ infects $u$ at time 3}{
				\Return {\textbf{Yes}}
			}
        }
    }
}
\Return {\textbf{No}}

\caption{Algorithm that solves the Maximum Percolation Problem for $k = 3$ in bipartite graphs}
\end{algorithm}

The Algorithm \ref{algbip3} outputs \textbf{Yes} if and only if, given a bipartite graph $G$, there are vertices $u$, $v \in N(u)$ and $s$ in $N_2(u)$ such that the set $\{s,v\} \cup N_{\geq 3}(u) \cup T_0$ infects $u$ at time 3. Thus, by Lemma \ref{lema-bip3}, the algorithm outputs \textbf{Yes} if and only if $t(G) \geq 3$. Furthermore, the Algorithm \ref{algbip3} is an $O(mn^3)$-time algorithm. This is because in $O(n)$-time the set $T_0$ can be computed, given a vertex $u$, it is necessary $O(m)$-time to compute the sets $N(u),N_2(u)$ and $N_{\geq 3}(u)$, and given $u,v$ and $s$, it is necessary $O(m)$-time to test whether the set $\{s,v\} \cup N_{\geq 3}(u) \cup T_0$ infects $u$ at time 3. Therefore, the algorithm \ref{algbip3} decides whether $t(G) \geq 3$, for any bipartite connected graph $G$, in $O(mn^3)$-time.
\end{proof}

\section{Maximum Percolation Time with parameter fixed $k=3$}
\label{secaoGer3}

The following theorem is the main result of this section.

\begin{restatable}{theorem}{gertt}
Deciding whether $t(G)\geq 3$ is $O(mn^5)$-time solvable.
\end{restatable}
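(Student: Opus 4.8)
The plan is to prove a structural characterization in the spirit of Lemma~\ref{lema-bip3} and then feed it to an enumeration algorithm exactly as Algorithm~\ref{algbip3} does in the proof of Theorem~\ref{biptt}. Concretely, I would show: $t(G)\ge 3$ if and only if there exist a vertex $u$ and a set $F\subseteq V(G)$ with $|F|\le 4$ such that $T_0\cup N_{\ge 3}(u)\cup F$ infects $u$ at time exactly $3$, where $T_0$ is the (canonical, $O(n)$-computable) set of degree-$1$ vertices of $G$, which by Lemma~\ref{lemacoconvexo} lies in every hull set. Given such a lemma, the algorithm loops over all vertices $u$ and all subsets $F$ of size $\le 4$ (that is, $O(n^{5})$ choices) and tests, in $O(m)$ time each, whether $T_0\cup N_{\ge 3}(u)\cup F$ infects $u$ at time $3$; correctness and the $O(mn^{5})$ running time then follow verbatim from the argument of Theorem~\ref{biptt}.

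For necessity I would argue as in Lemma~\ref{lema-bip3}: by Lemma~\ref{lemainfectaexato} pick a hull set $S''$ with $t(G,S'')=3$ and a vertex $u$ with $t(G,S'',u)=3$; by Lemma~\ref{lemacoconvexo} we have $T_0\subseteq S''$, and by Lemmas~\ref{lemadistanciak} and~\ref{corolariosubset} the set $S'=S''\cup N_{\ge 3}(u)$ still infects $u$ at time exactly $3$. Since $u\in S'_{(3)}\setminus S'_{(2)}$, there are two neighbours $v_1,v_2\in N(u)$ with $t(G,S',v_i)\le 2$, and because $|N(u)\cap S'_{(1)}|\le 1$ at least one of them, say $v_2$, satisfies $t(G,S',v_2)=2$. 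I would then extract a constant-size witness $F$ from $S'$ by tracing back the infection certificates of $v_1$ (a distance-$1$ vertex, infected by time $2$) and of $v_2$ (a distance-$1$ vertex infected at time exactly $2$, hence through two neighbours infected by time $1$). The reason this stays bounded is that every predecessor at distance $\ge 3$ from $u$ is automatically in $S'$ since $N_{\ge 3}(u)\subseteq S'$, and every neighbour of a vertex of $N_1(u)$ lies in $\{u\}\cup N_1(u)\cup N_2(u)$, so at most two rounds of back-tracing reach only $O(1)$ vertices of $N_{\le 2}(u)$, which we place in $F$. One then verifies, using Lemma~\ref{corolariosubset} for the upper bound and Lemma~\ref{lemadistanciak} for the lower bound, that $T_0\cup N_{\ge 3}(u)\cup F$ infects $u$ at time exactly $3$.

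For sufficiency I would mimic the round-based construction in the proof of Lemma~\ref{lema-bip3}: starting from $S_0=T_0\cup N_{\ge 3}(u)\cup F$, as long as $S_i$ is not a hull set pick a carefully chosen uninfected vertex to add — typically a vertex of $N_2(u)$ (note that everything in $\{u\}\cup N_{\ge 3}(u)$ is already infected, and an uninfected vertex of $N_1(u)$ with all its $N_2(u)$-neighbours infected would be infected by $u$ together with one of them, so an uninfected vertex of $N_1(u)$ must have an uninfected neighbour, which one argues can be taken in $N_2(u)$), and occasionally a vertex of $N_1(u)$ that $S_0$ can provably never infect. Using Lemma~\ref{lemadistanciak} applied with the distance-$2$ vertex being added, one shows each step keeps $t(G,S_{i+1},u)\ge 2$ and keeps some vertex infected only at time $\ge 3$; the process terminates with a hull set $S$ for which $t(G,S)\ge 3$, whence $t(G)\ge 3$.

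I expect the necessity direction to be the main obstacle. Unlike the bipartite case, $N_1(u)$ and $N_2(u)$ can now carry internal edges and $u$ can lie on odd cycles, so a vertex of $N_2(u)$ may be triggered by two vertices of $N_2(u)$, a vertex of $N_1(u)$ by two other vertices of $N_1(u)$, and so on; the work is a careful case analysis — over which layers the relevant predecessors of $v_1$ and $v_2$ occupy — establishing both that only $O(1)$ vertices outside $N_{\ge 3}(u)\cup T_0$ are ever needed, which is what pins down the exact size of $F$ and hence the exponent $5$, and that in the sufficiency construction neither the $N_{\ge 3}(u)$ layer nor the greedily added vertices of $N_2(u)$ (nor the occasional forced vertex of $N_1(u)$) can accelerate the infection witnessing time $3$; the latter needs a slightly more delicate choice of which vertex to add at each round than the one used in Lemma~\ref{lema-bip3}.
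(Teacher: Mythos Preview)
Your high-level plan matches the paper's: prove a structural lemma of the shape ``$t(G)\ge 3$ iff there exist $u$ and a bounded set $F$ such that $T_0\cup N_{\ge 3}(u)\cup F$ infects $u$ at time~$3$'' and then enumerate. But taking $T_0$ to be just the degree-$1$ vertices makes the sufficiency direction \emph{false}, not merely harder to prove. Consider the graph with $V=\{u,v,z,w,\ell,y,y',q,q',r_1,r_2,r_3,r_4\}$ and edges $uv,\,uz,\,uw,\,z\ell,\,zy,\,zy',\,yy',\,wq,\,wq',\,qr_1,\,qr_2,\,q'r_3,\,q'r_4$. Here $T_0^{\deg 1}=\{v,\ell,r_1,r_2,r_3,r_4\}$ and $N_{\ge 3}(u)=\{r_1,r_2,r_3,r_4\}$; with $F=\emptyset$ the set $S_0=T_0^{\deg 1}\cup N_{\ge 3}(u)$ infects $q,q'$ at time~$1$, then $w$ at time~$2$, then $u$ at time~$3$, so your condition holds. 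Yet $\{y,y'\}$ is co-convex, so every hull set contains one of them; with (say) $y$ added we get $z$ at time~$1$ via $\ell,y$, hence $u$ at time~$2$, and one checks every vertex is infected by time~$2$. Thus $t(G)=2$ while your characterization would output $t(G)\ge 3$.

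The paper avoids this by enlarging $T_0$: Definition~\ref{definicaot0u} introduces the family $\mathcal{T}_0^u$, where $T_0$ must already contain one vertex from every connected component of $G-v$ that lies entirely in $N(v)$ and avoids $u$ (the pendant pair $\{y,y'\}$ hanging off $z$ in the example). Because $\mathcal{T}_0^u$ is exponential, the paper also needs Lemma~\ref{lemagenept} showing that the choice of representative in each such component is immaterial, so the algorithm may test a single $T_0\in\mathcal{T}_0^u$. In the sufficiency construction (Lemma~\ref{lema-geral3}) this richer $T_0$ is exactly what rules out the bad case where an uninfected component is a pendant cluster; when it is not, the proof \emph{switches the target vertex} from $u$ to a vertex inside that component, an idea you only gesture at.

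A secondary gap: your back-tracing argument gives $|F|\le 8$, not $4$. In a non-bipartite graph a vertex of $N_2(u)$ infected at time~$1$ can have both witnesses in $N_{\le 2}(u)$, so each of the two neighbours of $u$ may require four vertices of $N_{\le 2}(u)$. Reducing to $|F|\le 4$ --- and hence getting the exponent~$5$ rather than~$9$ --- is the content of Claim~\ref{claimmf}, which is a substantial case analysis, not an immediate consequence of the layering.
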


In order to prove it, we first define the family of sets $\mathcal{T}_0^u$ and prove a structural result.

\begin{definition}
\label{definicaot0u}
Let $\mathcal{T}_0^u$ be the family of subsets of $V(G)$ such that a set of vertices $T_0\in\mathcal{T}_0^u$ if and only if, for every separator $v$ and every connected component $H_{v,i}$ of $G-v$ such that $u\notin V(H_{v,i})$ and $V(H_{v,i}) \subseteq N(v)$, $T_0$ contains exactly one vertex of $H_{v,i}$, and every vertex of $T_0$ satisfies this property.
\end{definition}

\begin{lemma}
\label{corolariot0u}
If $S$ is a hull set, then, for every vertex $u$, there is a set $T_0\in\mathcal{T}_0^u$ such that $T_0\subseteq S$.
\end{lemma}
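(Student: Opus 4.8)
The plan is to reduce the statement to a transversal argument over a pairwise disjoint family of co-convex sets. Fix the vertex $u$, and let $\mathcal{H}^u$ be the family of all vertex sets $V(H_{v,i})$, where $v$ is a separator of $G$ and $H_{v,i}$ is a connected component of $G-v$ with $u\notin V(H_{v,i})$ and $V(H_{v,i})\subseteq N(v)$. I would first observe that each $H\in\mathcal{H}^u$ is co-convex: since $H$ is a component of $G-v$, every vertex of $H$ has all of its neighbours inside $H$ except possibly $v$, hence at most one neighbour outside $H$. By Lemma~\ref{lemacoconvexo}, any hull set $S$ satisfies $S\cap H\neq\varnothing$ for every $H\in\mathcal{H}^u$.

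The heart of the proof is the claim that the members of $\mathcal{H}^u$ are pairwise disjoint. Two components arising from the same separator $v$ are disjoint because they are distinct components of $G-v$. For two components $H=V(H_{v,i})$ and $H'=V(H_{v',i'})$ with $v\neq v'$, I would argue by contradiction: suppose $w\in H\cap H'$. Then $w$ is adjacent to both $v$ and $v'$ (because $V(H)\subseteq N(v)$ and $V(H')\subseteq N(v')$); since the only possible neighbour of $w$ outside $H$ is $v$, and the only one outside $H'$ is $v'$, we get $v'\in H$ and $v\in H'$. Using that $H$ is connected and that every vertex of $H$ is adjacent to $v$, while $v\in H'$, one shows that each connected piece of $H\setminus\{v'\}$ lies in a single component of $G-v'$, namely $H'$, so $H\setminus\{v'\}\subseteq H'$; symmetrically $H'\setminus\{v\}\subseteq H$. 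Combining these inclusions with $v\notin H$ and $v'\notin H'$ yields $H\setminus\{v'\}=H'\setminus\{v\}$, whence $N(v)\subseteq H\cup\{v'\}$ and $N(v')\subseteq H'\cup\{v\}$, and more generally every vertex of $H\cup H'\cup\{v,v'\}$ has all its neighbours inside this set. As $u\notin H\cup H'\cup\{v,v'\}$, this contradicts the connectedness of $G$. Hence $\mathcal{H}^u$ is a pairwise disjoint family.

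With disjointness established, the conclusion is immediate: for each $H\in\mathcal{H}^u$ choose a vertex $s_H\in S\cap H$, and set $T_0=\{s_H:H\in\mathcal{H}^u\}$. Then $T_0\subseteq S$, every vertex of $T_0$ lies in some component of the required form, and by pairwise disjointness $T_0$ meets each $H\in\mathcal{H}^u$ in exactly the single vertex $s_H$; therefore $T_0\in\mathcal{T}_0^u$. The main obstacle is the disjointness claim, and within it the case of two distinct separators, which forces one to use both that each component is entirely contained in the neighbourhood of its separator and that $G$ is connected; the remaining steps are routine bookkeeping.
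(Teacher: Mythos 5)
Your proof is correct and follows the same core route as the paper: each component $V(H_{v,i})$ is co-convex, so Lemma~\ref{lemacoconvexo} forces $S$ to meet it, and a transversal of the family is the desired $T_0$. The one genuine addition is your pairwise-disjointness argument for the components arising from distinct separators; the paper's proof simply takes ``the set consisting of all such vertices'' and asserts membership in $\mathcal{T}_0^u$, silently relying on the fact that one can select \emph{exactly} one vertex of $S$ per component as Definition~\ref{definicaot0u} demands, which is exactly what your disjointness claim (correctly, using $V(H_{v,i})\subseteq N(v)$ and the connectedness of $G$) justifies. So your write-up is, if anything, more complete than the published one.
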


\begin{proof}
Notice that, if $v$ is a separator and $H_{v,i}$ is a connected component of $G-v$ such that $u\not\in V(H_{v,i})$ and $V(H_{v,i})\subseteq N(v)$, then $V(H_{v,i})$ is co-convex.
Then, by Lemma \ref{lemacoconvexo}, $S$ contains one vertex of $H_{v,i}$.
Let $T_0$ be the set consisting of all such vertices. Thus, we have that $T_0\subseteq S$ and, by Definition \ref{definicaot0u}, that $T_0\in\mathcal{T}_0^u$.
\end{proof}

\begin{lemma}\label{lema-geral3}
Let $G$ be a connected graph. $t(G)\geq 3$ if and only if there is a vertex $u$, a set $T_0\in\mathcal{T}^u_0$ and a set $F$ with at most 4 vertices such that $T_0\cup N_{\geq 3}(u)\cup F$ infects $u$ at time 3.
\end{lemma}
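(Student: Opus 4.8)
The plan is to follow the same template as the bipartite case (Lemma~\ref{lema-bip3}), proving the two directions separately. For the \emph{necessity} direction, I would start from Lemma~\ref{lemainfectaexato} to get a hull set $S''$ with $t(G,S'')=3$ and a vertex $u$ infected at time exactly $3$. By Lemma~\ref{corolariot0u} there is a set $T_0\in\mathcal{T}_0^u$ with $T_0\subseteq S''$, and by Lemma~\ref{lemadistanciak} we may enlarge $S''$ to $S'=S''\cup N_{\geq 3}(u)$ while keeping $t(G,S',u)=3$. The goal is then to extract from $S'$ a constant-size ``core'' set $F$ (at most $4$ vertices) so that $T_0\cup N_{\geq 3}(u)\cup F$ still infects $u$ at time exactly $3$. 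Since $u$ is infected at time $3$, at time $2$ it has at least two infected neighbours; fix two of them, say $w_1,w_2\in N(u)$, each infected by $S'$ at time $\le 2$ with at least one infected at time exactly $2$. For each $w_j$ that is infected at time $2$ (not already in $S'$), it has two neighbours infected at time $\le 1$; for each such neighbour infected at time $1$ (not in $S'$), it in turn has two neighbours infected at time $\le 0$, i.e.\ in $S'$. Tracing this BFS-like ``certificate tree'' of depth $3$ rooted at $u$ and keeping only the leaves that lie in $S'$, one collects a bounded number of vertices of $S'$ that suffice to drive the infection of $u$ along these specific paths; the non-trivial accounting is to check that at most $4$ of these leaves are needed once one also uses the ``free'' vertices $T_0\cup N_{\geq 3}(u)$ and the fact (as in the bipartite proof) that vertices far from $u$ cannot accelerate its infection.

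The delicate point, and where this differs from the bipartite argument, is guaranteeing that the small set $F\cup T_0\cup N_{\geq 3}(u)$ infects $u$ at time exactly $3$ and not sooner: I must rule out that $F$ together with $N_{\geq 3}(u)$ creates a shortcut. In the bipartite case this was immediate because all of $N(u)$ except the chosen vertex $v$ sits at distance $\ge 2$ from any vertex placed. In general graphs $N_{\geq 3}(u)$ still cannot help by Lemma~\ref{lemadistanciak}, but vertices at distance $2$ (the ``$s$-type'' vertices) can be adjacent to several vertices of $N(u)$, so I would argue as in Lemma~\ref{lema-bip3}: among $w\in N(u)$, keep track of which have a neighbour already in the constructed set and split into cases accordingly, choosing $F$ so that every vertex of $N(u)$ other than the designated early one still has at most one infected neighbour after two rounds. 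This case analysis — combined with Lemma~\ref{corolariosubset} to upgrade ``infects at time $\ge 3$'' to ``infects at time exactly $3$'' once we know a superset does it in time $3$ — is what pins down the bound $|F|\le 4$: roughly one vertex to force an early neighbour $v$ of $u$, and up to three more to push a second neighbour of $u$ to be infected at time exactly $2$ while blocking every other neighbour of $u$.

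For the \emph{sufficiency} direction I would mimic the round-by-round completion argument from Lemma~\ref{lema-bip3}: given $u$, $T_0\in\mathcal{T}_0^u$ and $F$ with $|F|\le 4$ such that $S_0=T_0\cup N_{\geq 3}(u)\cup F$ infects $u$ at time $3$, I build an increasing sequence $S_0\subseteq S_1\subseteq\cdots$, at each step adjoining one not-yet-infected vertex that lies at distance $\ge 2$ from $u$ (or more carefully, a vertex whose addition cannot drop $t(G,S_i,u)$ below $3$, using Lemma~\ref{lemadistanciak} and the membership $T_0\in\mathcal{T}_0^u$ to handle the co-convex pendant components correctly). The point of requiring $T_0\in\mathcal{T}_0^u$ rather than just ``degree-$1$ vertices'' is exactly that in non-bipartite graphs the co-convex sets forced by Lemma~\ref{lemacoconvexo} can be whole pendant components inside a neighbourhood, and $\mathcal{T}_0^u$ records a legal choice of one vertex from each such component while leaving room to extend to a full hull set without accelerating $u$. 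When the process terminates we have a hull set $S$ with $2\le t(G,S,u)\le 3$ and some neighbour of $u$ infected at time $\ge 3$, hence $t(G,S)\ge 3$ and therefore $t(G)\ge 3$.

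I expect the main obstacle to be the bookkeeping in the necessity direction: pruning the depth-$3$ certificate of $u$'s infection down to a set $F$ of at most $4$ vertices while simultaneously proving the infection time stays exactly $3$ (no shortcut via distance-$2$ vertices) and verifying that every choice made is consistent with some $T_0\in\mathcal{T}_0^u$. Everything else — the sufficiency construction, the use of Lemmas~\ref{lemadistanciak}, \ref{corolariosubset}, \ref{lemacoconvexo}, \ref{corolariot0u} — is a fairly direct adaptation of the bipartite proof.
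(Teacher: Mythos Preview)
Your necessity direction is essentially right and matches the paper, but you are making it harder than it needs to be. The worry about ``time exactly $3$ and not sooner'' is illusory: once you pick any $F\subseteq S$ with $t(G,F,u)=3$ (the naive depth-$3$ certificate gives $|F|\le 8$), the sandwich $F\subseteq T_0\cup N_{\ge 3}(u)\cup F\subseteq S$ together with Lemma~\ref{corolariosubset} forces $t(G,T_0\cup N_{\ge 3}(u)\cup F,u)=3$ automatically. No case analysis is needed for that step. The reduction from $8$ to $4$ is genuine work (the paper isolates it as Claim~\ref{claimmf}), and your sketch of that part is fine.

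The sufficiency direction has a real gap. You propose to copy the bipartite completion argument, maintaining throughout either $t(G,S_i,u)\ge 3$ or ``some neighbour of $u$ is infected at time $\ge 3$''. This breaks in general graphs. In the bipartite proof, any uninfected $q\in N_2(u)$ automatically has an uninfected neighbour $w\in N(u)$ (since all of $q$'s other neighbours lie in $N_3(u)\subseteq S_i$), and adding $q$ keeps $w$ late. Without bipartiteness, $q\in N_2(u)$ can have all its uninfected neighbours in $N_2(u)$; there is then no guarantee that any neighbour of $u$ stays at time $\ge 3$ after you add $q$, and Lemma~\ref{lemadistanciak} only protects $t(G,S_i\cup\{q\},u)\ge 2$. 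The paper's fix is to abandon the fixed target: it maintains a \emph{moving} vertex $u_i$ with $t(G,S_i,u_i)\ge 3$, and when every uninfected vertex in $N_2(u_i)$ has its unique outside neighbour infected at time $\ge 2$, it passes to a connected component $C_i$ of the uninfected set, uses the $\mathcal{T}_0^u$ hypothesis to show $C_i$ cannot hang entirely off a single vertex, finds $y_i,y_i'\in C_i$ with distinct outside neighbours, adds $y_i$, and takes $u_{i+1}=y_i'$. The role of $\mathcal{T}_0^u$ is precisely to exclude the degenerate case $V(C_i)\subseteq N(z_i)$ for a separator $z_i$; your phrase ``handle the co-convex pendant components correctly'' does not yet capture this, and without it the construction can stall.
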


\begin{proof}
First, let us prove the necessary condition. Suppose that $t(G)\geq 3$. Then there exists a hull set $S'$ of $G$ and a vertex $u$ such that $t(G,S',u)=3$. By Lemma \ref{lemadistanciak}, $S=S'\cup N_{\geq 3}(u)$ is a hull set such that $t(G,S,u)=3$. Since $S$ infects $u$ at time 3 and, to be infected, each vertex needs two infected neighbors, then there is a set $F\subseteq S$ with size $|F|\leq 2^3 = 8$ which infects $u$ at time 3. Thus, since, by Lemma \ref{corolariot0u}, there is a set $T_0 \in \mathcal{T}^u_0$ such that $T_0 \subseteq S$, then $F\subseteq T_0\cup N_{\geq 3}(u)\cup F \subseteq S$. By Lemma \ref{corolariosubset}, we have that $3 = t(G,S,u) \leq t(G,T_0 \cup N_{\geq 3}(u) \cup F,u) \leq t(G,F,u) = 3$. Therefore, $T_0\cup N_{\geq 3}(u)\cup F$ infects $u$ at time 3. Furthermore, we can decrease the size of $F$ to 4 by the following claim, proved in Section \ref{apendClaim}.

\begin{restatable}{claim}{claimmf}
\label{claimmf}
Let $G$ be a graph and $u\in V(G)$.
If $G$ has a hull set which infects $u$ at time 3, then $G$ has a set $F$ with at most 4 vertices and a hull set $S$ such that $F\subseteq S$ and $t(G,F,u)=t(G,S,u)=3$.
\end{restatable}

With this, we conclude the proof of the necessary condition.

Now, let us prove the sufficient condition.
Suppose that there is a vertex $u$, a set $T_0\in\mathcal{T}^u_0$ and a set $F \subseteq V(G)$ with $|F|\leq 4$ such that $S_0=T_0\cup N_{\geq 3}(u)\cup F$ infects $u$ at time 3. 
If $S_0$ is a hull set, we are done. Thus, assume that $S_0$ is not a hull set.
We will show how to construct a hull set $S$ such that $t(G,S)\geq 3$. We begin with $S=S_0$. Let $S_i$ be the constructed set at step $i$. Each step adds one vertex to $S$ and, at the end of each step, it is guaranteed that $S_i$ infects some vertex $u_i$ at time $\geq 3$ ($u_0=u$) and infects $u$ at time $\geq 2$. Let $X_i$ be the set of vertices infected by $S_i$, and let $Y_i = V(G)-X_i$. It is worth noting that a vertex in $Y_i$ is adjacent to at most one vertex in $X_i$ and, since it must have degree at least two, it is also adjacent to at least one vertex in $Y_i$.

For $i\geq 0$, in the $(i+1)$th step of the construction, assume that there exists a vertex $y_i\in Y_i\cap N_2(u_i)$ with no neighbor infected by $S_i$ at time $\geq 2$. Let $S'_{i+1}=S_i\cup\{y_i\}$. Since $S_i$ infects $u_i$ at time $\geq 3$, $u_i$ has at most one neighbor infected by $S_i$ at time $\leq 1$ and, since $y_i\in N_2(u)$, $u_i$ is not adjacent to $y_i$. Furthermore, we have that:
\begin{itemize}
\item Every neighbor of $u_i$ infected by $S_i$ at time $\geq 2$ is also infected by $S'_{i+1}$ at time $\geq 2$. This is because $y_i$ is at distance $\geq 2$ of every vertex infected by $S_i$ at time $\geq 2$, and, therefore, by Lemma \ref{lemadistanciak}, every vertex infected by $S_i$ at time $\geq 2$, including the neighbors of $u_i$, is infected by $S'_{i+1}$ at time $\geq 2$.
\item Every neighbor of $u_i$ not infected by $S_i$ is either not infected by $S'_{i+1}$ or infected by $S'_{i+1}$ at time $\geq 2$. This is because to a vertex $q\in Y_i\cap N(u_i)$ be infected at time $\leq 1$ by $S'_{i+1}$ either it must be $y_i$ itself, which cannot happen because $y_i \in N_2(u_i)$, or $q$ must be adjacent to a vertex infected at time 0 by $S_i$, which cannot happen either because, if $q$ were adjacent to a vertex infected at time 0 by $S_i$, since it is also adjacent to $u_i$, then $q$ would be in $X_i$.
\end{itemize}
Therefore, we have that $t(G,S'_{i+1},u_i) \geq 3$. Let $S_{i+1} = S'_{i+1} \cup N_{\geq 3}(u_i)$. By Lemma \ref{lemadistanciak}, we have that $t(G,S_{i+1},u_i) \geq 3$. Thus, letting $u_{i+1} = u_i$, we have that $t(G,S_{i+1},u_{i+1}) \geq 3$.

Now, assume that every vertex in $Y_i \cap N_2(u_i)$ has exactly one neighbor in $X_i$, which is infected at time $\geq 2$ by $S_i$. Let $y_i$ be any vertex in $Y_i \cap N_2(u_i)$, let $C_i$ be the connected component of $G[Y_i]$ that contains $y_i$ and let $z_i$ be the neighbor of $y_i$ outside $C_i$, which is infected time $\geq 2$ by $S_i$. If every vertex of $C_i$ is adjacent to $z_i$, then $C_i$ has only vertices of $N_2(u)$ or only vertices of $N(u)$ because, otherwise, since $z_i$ would be equal to $u$, then there would be a vertex in $N_2(u)$ adjacent to $u$. Also, every vertex of $C_i$ has no neighbor in $N_3(u)$. Hence, either way, $z_i$ is a separator and $C_i$ would be a connected component of $G - z_i$ such that $V(C_i) \subseteq N(z_i)$ and $u \notin V(C_i)$. Therefore, $T_0$ has a vertex in $C_i$, which is a contradiction since all vertices of $C_i$ are in $Y_i$.

We then conclude that there is a vertex $y'_i$ in $C_i$ whose neighbor $z'_i$ outside $C_i$ is distinct from $z_i$. Let $S'_{i+1}= S_i\cup\{y_i\}$. All vertices in $C_i$ are infected by $S_{i+1}$ because all vertices in $C_i$ have one neighbor infected by $S_i$. Let us prove that $y'_i$ is infected by $S'_{i+1}$ at time $\geq 3$. First, we have that all vertices in $C_i$ , except for $y_i$, will be infected at time $\geq 2$ by $S'_{i+1}$ because a vertex in $V(C_i)-\{y_i\}$ can only be infected at time $\leq 1$ by $S'_{i+1}$ if it is adjacent to 2 vertices infected at time 0 by $S'_{i+1}$, which cannot happen because all vertices in $C_i$ are adjacent to only one vertex outside of $C_i$, which is infected at time $\geq 1$ by $S'_{i+1}$, and the only vertex in $C_i$ infected at time 0 by $S'_{i+1}$ is $y_i$. Additionally, since $z'_i$ is at distance $\geq 2$ of $y_i$ and $S_i$ infects $z'_i$ at time $\geq 2$, by Lemma \ref{lemadistanciak}, $z'_i$ is infected at time $\geq 2$ by $S'_{i+1}$. Therefore, since all vertices in $C_i$ are infected at time $\geq 2$ by $S'_{i+1}$ and $S'_{i+1}$ infects $z'_i$ at time $\geq 2$, then $y'_i$ is infected at time $\geq 3$ by $S'_{i+1}$. In Figure \ref{fig:gert3Ci}, we can see an example of how the infection times of the vertices in $C_i$ will be like before and after we add $y_i$ to $S_i$. 

\begin{figure}[ht]
\centering
\begin{minipage}[c]{.49\textwidth}
\begin{tikzpicture}[scale=.58]
\tikzstyle{every node}=[font=\scriptsize]
\tikzstyle{vertex}=[draw,circle,fill=black!25,minimum size=14pt,inner sep=1pt]
\tikzstyle{leafvertex}=[draw,circle,fill=black!50,minimum size=2pt,inner sep=2pt]

\def \radius {2cm}
\def \diff {2.2cm}

\node (Ci) at (90:2.5cm) {\large{\textbf{$C_i$}}};
\draw[darkgray] (0:0cm) circle (3cm);

\node[vertex] (yi) at (-18:\radius) {$y_i$};
\node[vertex] (v1) at (54:\radius) {$v_1$};
\node[vertex] (yil) at (126:\radius) {$y_i'$};
\node[vertex] (v2) at (198:\radius) {$v_2$};
\node[vertex] (v3) at (270:\radius) {$v_3$};
\node[vertex] (zi) at (-18:\radius+\diff) {$z_i$};
\node[vertex] (v4) at (198:\radius+\diff) {$v_4$};
\node[vertex] (zil) at (126:\radius+\diff) {$z_i'$};

\node[shift={(0:0.55cm)}] (nzi) at (zi) {$\geq 2$};
\node[shift={(0:0.55cm)}] (nzil) at (zil) {$\geq 2$};
\node[shift={(-30:0.55cm)}] (nv4) at (v4) {$\geq 2$};

\draw[-] (yi) to (zi);
\draw[-] (v1) to (zi);
\draw[-] (v2) to (v4);
\draw[-] (yil) to (zil);
\draw[-] (v3) to (zi);
\draw[-] (yi) to (v1);
\draw[-] (yi) to (v2);
\draw[-] (v2) to (yil);
\draw[-] (v2) to (v3);
\draw[-] (yi) to (v3);

\end{tikzpicture}
\end{minipage} 
\begin{minipage}[c]{.49\textwidth}
\begin{tikzpicture}[scale=.58]
\tikzstyle{every node}=[font=\scriptsize]
\tikzstyle{vertex}=[draw,circle,fill=black!25,minimum size=14pt,inner sep=1pt]
\tikzstyle{leafvertex}=[draw,circle,fill=black!50,minimum size=2pt,inner sep=2pt]

\def \radius {2cm}
\def \diff {2.2cm}

\node (Ci) at (90:2.5cm) {\large{\textbf{$C_i$}}};
\draw[darkgray] (0:0cm) circle (3.1cm);

\node[vertex] (yi) at (-18:\radius) {$y_i$};
\node[vertex] (v1) at (54:\radius) {$v_1$};
\node[vertex] (yil) at (126:\radius) {$y_i'$};
\node[vertex] (v2) at (198:\radius) {$v_2$};
\node[vertex] (v3) at (270:\radius) {$v_3$};
\node[vertex] (zi) at (-18:\radius+\diff) {$z_i$};
\node[vertex] (v4) at (198:\radius+\diff) {$v_4$};
\node[vertex] (zil) at (126:\radius+\diff) {$z_i'$};

\node[shift={(0:0.40cm)}] (nyi) at (yi) {$0$};
\node[shift={(0:0.55cm)}] (nv1) at (v1) {$\geq 2$};
\node[shift={(-15:0.60cm)}] (nv3) at (v3) {$\geq 2$};
\node[shift={(20:0.55cm)}](nv2) at (v2) {$\geq 3$};
\node[shift={(0:0.55cm)}] (nyil) at (yil) {$\geq 3$};
\node[shift={(0:0.60cm)}] (nzi) at (zi) {$\geq 1$};
\node[shift={(0:0.60cm)}] (nzil) at (zil) {$\geq 2$};
\node[shift={(-30:0.60cm)}] (nv4) at (v4) {$\geq 2$};

\draw[-] (yi) to (zi);
\draw[-] (v1) to (zi);
\draw[-] (v2) to (v4);
\draw[-] (yil) to (zil);
\draw[-] (v3) to (zi);
\draw[-] (yi) to (v1);
\draw[-] (yi) to (v2);
\draw[-] (v2) to (yil);
\draw[-] (v2) to (v3);
\draw[-] (yi) to (v3);

\end{tikzpicture}
\end{minipage}

\caption{Vertices of the component $C_i$ before and after the addition of $y_i$ to $S_i$.}
\label{fig:gert3Ci}
\end{figure}
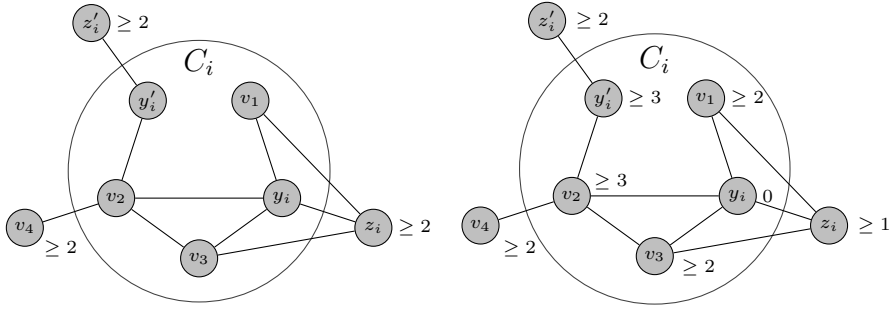

Let $S_{i+1} = S'_{i+1} \cup N_{\geq 3}(y'_i)$. By Lemma \ref{lemadistanciak}, we have that $t(G,S_{i+1},y'_i) \geq 3$. Thus, letting $u_{i+1} = y'_i$, we have that $t(G,S_{i+1},u_{i+1}) \geq 3$.

If we have that $N_2(u_i) \cap Y_i = \varnothing$, then $X_i = V(G)$. This is because $\{u_i\} \cup N_2(u_i) \cup N_{\geq 3}(u_i) \subseteq X_i$ and, for each vertex $q \in N(u_i)$, we have:
\begin{itemize}
\item If $q$ has a neighbor in $N_2(u_i)$, then $q$ is infected by this neighbor and $u_i$;

\item If $q$ has degree one, then $q \in T_0$ and, hence, $q \in X_i$;

\item If $q$ does not have a neighbor in $N_2(u_i)$ and has degree $\geq 2$, then let $C$ be the connected component of the subgraph induced by the vertices in $N(u_i)$ such that $q \in V(C)$. If $u \in V(C)$ then all vertices in $C$ are also infected, since all vertices are adjacent to $u_i \in X_i$. If $u \notin V(C)$, then either there is a vertex $v$ in $C$ that has a neighbor in $N_2(u_i)$ or every vertex in $C$ do not have a neighbor in $N_2(u_i)$. If there is a vertex $v$ in $C$ that has a neighbor in $N_2(u_i)$ then $v$ is infected by his neighbor in $N_2(u_i)$ and $u_i$ and, hence, all vertices in $C$ are in $X_i$. If every vertex in $C$ do not have a neighbor in $N_2(u_i)$ then $C$ is a connected component of the subgraph induced by $G - u_i$, where $V(C) \subseteq N(u_i)$, and $u \notin V(C)$, which means that there is a vertex $v$ in $V(C) \cap T_0$ and, thus, all vertices in $C$ are infected.

\end{itemize}
Thus, since all vertices in $N_2(u_i)$ are in $X_i$, then $X_i = V(G)$. In this case, the construction is over. The construction will eventually end. This is because, in each step, we add one vertex of $N_2(u_i)$ to $S$ and, even though the vertex $u_i$ potentially changes with each $i$, eventually $N_2(u_i) \cap Y_i = \varnothing$ because eventually there will be no more vertices of $G$ not infected. Since each step ensures that $S$ infects $u_i$ at time $\geq 3$, then, by the end of the construction, $S$ will be a hull set such that $t(G,S) \geq 3$. 
\end{proof}

Given a vertex $u$, a set $T_0 \in \mathcal{T}^u_0$ and a set of vertices $F$, such that $T_0 \cup F \cup N_{\geq 3}(u)$ infects $u$ at time 3, the construction of a hull set $S$ such that $t(G,S) \geq 3$ that is described in Lemma \ref{lema-geral3} can be implemented to run in time $O(mn^2)$. This is because, in each step $i$, since $G$ is connected, the sets $N(u_i),N_2(u_i)$ and $N_{\geq 3}(u_i)$ can be computed in time $O(m)$; finding one vertex, if there is any, belonging to the set $N_2(u_i) \cap Y_i$ that has at least one neighbor that is either not infected or infected at time $\leq 1$ by the current set $S$ can be done in time $O(n)$; and, finally, the set $Y_i$ can be updated in time $O(mn)$ (recall that $Y_i$ are the vertices not infected by $S_i$).

However, in order to decide if $t(G)\geq 3$, Lemma \ref{lema-geral3} suggests that we have to search for every set $T_0\in \mathcal{T}^u_0$, which clearly cannot be done in polynomial time. The next lemma shows us that testing just one such $T_0$ is sufficient.

\begin{lemma}
\label{lemagenept}
If there is a vertex $u$, a set $T_0 \in \mathcal{T}^u_0$ and a set of vertices $F$, with $|F|\leq 4$, such that $T_0\cup N_{\geq 3}(u)\cup F$ infects $u$ at time 3 then for every set $T_0' \in \mathcal{T}^u_0$ there is a set of vertices $F'$, where $|F'| \leq 4$, such that the set $T_0'\cup N_{\geq 3}(u)\cup F'$ infects $u$ at time 3.
\end{lemma}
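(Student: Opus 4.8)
The plan is to reduce to a single ``swap'' of one representative, and then to compare the two infection processes round by round near the component where the swap occurs.

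First I would observe that a member of $\mathcal{T}_0^u$ is nothing but a system of representatives: it contains exactly one vertex of each component $H$ of $G-v$ (as $v$ ranges over the cut vertices) with $u\notin V(H)$ and $V(H)\subseteq N(v)$, and no other vertex; call these the \emph{relevant} components. Using that $G$ is connected, one checks that distinct relevant components are vertex-disjoint. Hence $T_0$ and $T_0'$ differ only by reassigning the representative inside some relevant components, and one may pass from $T_0$ to $T_0'$ by changing one representative at a time, each intermediate set still lying in $\mathcal{T}_0^u$. Thus, by induction on the number of components in which $T_0$ and $T_0'$ disagree (base case $F'=F$), it suffices to prove the statement when $T_0'=(T_0\setminus\{w\})\cup\{w'\}$ with $w\neq w'$; and since $T_0'$ must still meet $V(H)$ for the relevant component $H$ containing $w$, necessarily $w'\in V(H)$ as well.

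For this single swap let $v$ be the cut vertex of $H$, put $D=V(H)$ and $N'=N(v)\setminus D$, and recall two structural facts: the only vertex outside $D$ adjacent to a vertex of $D$ is $v$, and every vertex of $D$ is adjacent to $v$, hence lies at distance exactly $1+d(v,u)$ from $u$. If $d(v,u)\geq 2$ then $w,w'\in N_{\geq 3}(u)$, so $T_0\cup N_{\geq 3}(u)=T_0'\cup N_{\geq 3}(u)$ and $F'=F$ works. So assume $d(v,u)\leq 1$; then $D\subseteq N_{\leq 2}(u)$, hence $N_{\geq 3}(u)\cap D=\varnothing$ and $S_0\cap D=\{w\}\cup(F\cap D)$. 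Now I would split on $|S_0\cap D|$. If $S_0\cap D=\{w\}$ (that is, $F\cap D\subseteq\{w\}$), set $F'=F\setminus D$, so that $S_0'\cap D=\{w'\}$. If $|S_0\cap D|\geq 2$, set $F'=(F\setminus D)\cup\{w\}$, so that $S_0'\cap D=\{w,w'\}$. In both cases $F'\setminus D=F\setminus D$, so $S_0$ and $S_0'$ have the same seeds outside $D$, and $|F'|\leq|F|\leq 4$.

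It then remains to show, by induction on the round $t$, that the infection processes started from $S_0$ and from $S_0'$ contain exactly the same vertices of $V(G)\setminus D$ at every round; since $u\notin D$ this yields $t(G,S_0',u)=t(G,S_0,u)=3$. For a vertex $q\notin D\cup\{v\}$ the inductive step is immediate, as all neighbours of $q$ lie outside $D$. Everything therefore comes down to the behaviour of $v$, and this is where the choice of $F'$ is used: in the first case one also maintains the invariant that, as long as $v$ is uninfected, the only infected vertex of $D$ is $w$ (respectively $w'$) in the two processes -- because a further vertex of $D$ would need two already-infected neighbours inside $D$ -- so at each round $v$ has the same number of infected neighbours, namely $1$ plus the common number of infected vertices of $N'$; in the second case $v$ already has two infected neighbours in $D$ at round $0$, so $v$ is infected by round $1$ in both processes regardless of which representative was taken (and here $v\neq u$, for otherwise $u$ would be infected far too early). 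The main obstacle, which this dichotomy is designed to circumvent, is that the internal $2$-neighbour dynamics of $H$ genuinely depend on which vertex of $H$ is seeded -- different vertices of $H$ may have very different neighbourhoods inside $H$ -- so one cannot argue by a symmetry of $G$; the point is that $H$ influences the rest of $G$ only through how quickly $v$ accumulates two infected neighbours, and that quantity is unaffected by the swap in each of the two regimes.
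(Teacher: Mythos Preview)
Your proof is correct and follows essentially the same strategy as the paper's: reduce to a single swap $T_0'=(T_0\setminus\{w\})\cup\{w'\}$ inside one relevant component, then argue that the infection of $u$ is unaffected because the component communicates with the rest of $G$ only through its cut vertex $v$.

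The differences are in the case split and level of detail. The paper assumes $F\cap(T_0\cup N_{\geq 3}(u))=\varnothing$ and splits on whether $w'\in F$: if so, $F'=(F\setminus\{w'\})\cup\{w\}$ gives $F'\cup T_0'\cup N_{\geq 3}(u)=F\cup T_0\cup N_{\geq 3}(u)$ identically, a one-line finish; if not, it keeps $F'=F$ and asserts (tersely) that $t(G,F\cup R',v)=t(G,F\cup R,v)$. You instead first dispose of $d(v,u)\geq 2$ (where both $w,w'$ already lie in $N_{\geq 3}(u)$), then split on $|S_0\cap D|$, and give an explicit round-by-round induction showing the two processes agree on $V(G)\setminus D$. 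Your argument is more careful --- you verify disjointness of relevant components (needed for the one-swap reduction) and spell out the invariant that with a single seed in $D$ no further vertex of $D$ is infected before $v$ --- whereas the paper leaves these points implicit. The paper's $w'\in F$ trick is slicker, but your dichotomy on $|S_0\cap D|$ accomplishes the same thing and makes the mechanism (that $D$ influences $u$ only through the infection time of $v$) more transparent.
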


\begin{proof}
It is sufficient to show that, for every $T_0'\in\mathcal{T}^u_0$ such that $T_0'-\{w'\}=T_0-\{w\}$, where $w\in T_0$ and $w'\in T'_0$, there is a set $F'$ with $|F'|\leq 4$ such that $F'\cup T_0'\cup N_{\geq 3}(u)$ also infects $u$ at time 3. By definition \ref{definicaot0u}, there is a separator $v$ of $G$ such that the vertices $w$ and $w'$ belong to the same connected component $C$ of the graph $G-v$, where $V(C)\subseteq N(v)$. If $u\neq v$, let $D$ be the connected component of $G-v$ such that $u\in V(D)$. Also, let $R=T_0\cup N_{\geq 3}(u)$ and $R'=T_0'\cup N_{\geq 3}(u)$. We can assume that $F\cap R=\varnothing$ (otherwise, let $F=F-R$).

First, suppose that $w'\in F$. Let $F'=(F-\{w'\})\cup\{w\}$. Since $F'\cup R'=F\cup R$ and $t(G,F\cup R,u)=3$, then $t(G,F' \cup R',u)=3$.

Finally, suppose that $w'\notin F$. Since $w,w'\in V(C)\subseteq N(v)$ and $C$ is a connected component, we have directly that $t(G,F\cup R',v)=t(G,F\cup R,v)$. Therefore, $t(G,F\cup R',u)=t(G,F\cup R,u)=3$. This is because, either $u=v$ (and we are done), or $u\neq v$ and consequently the infection time of any vertex not in $D$ can only affect the infection time of $u$ through $v$.
\end{proof}

\gertt*

\begin{proof}
Consider the following algorithm:\\
\begin{algorithm}[ht]
\label{algger3}
\DontPrintSemicolon
\KwIn{A connected graph $G$.}
\KwOut{\textbf{Yes}, if $t(G) \geq 3$. \textbf{No}, otherwise.}
\ForAll{$u \in V(G)$}{
	Find a set $T_0 \in \mathcal{T}^u_0$\;
	\ForAll{$F \subseteq V(G)$, where $|F| \leq 4$,}{
		\If{$F \cup T_0 \cup N_{\geq 3}(u)$ infects $u$ at time 3}{
			\Return {\textbf{Yes}}
		}
	}
}
\Return {\textbf{No}}

\caption{Algorithm that solves the Maximum Percolation Problem for $k = 3$}
\end{algorithm}

When the Algorithm \ref{algger3} outputs \textbf{Yes}, it means that there is a vertex $u$, a set $T_0 \in \mathcal{T}^u_0$, and a set $F \subseteq V(G)$, where $|F| \leq 4$, such that $F \cup T_0 \cup N_{\geq 3}(u)$ infects $u$ at time 3. Thus, by Lemma \ref{lema-geral3}, if the algorithm outputs \textbf{Yes}, then $t(G) \geq 3$. On the other hand, when the Algorithm \ref{algger3} outputs \textbf{No}, it means that, for every vertex $u$, there is a set $T_0 \in \mathcal{T}^u_0$ such that for all sets $F$, where $|F| \leq 4$, we have that $F \cup T_0 \cup N_{\geq 3}(u)$ infects $u$ at time $\neq 3$. Thus, applying the contrapositive of Lemma \ref{lemagenept}, we have that, for every vertex $u$, all sets $T_0 \in \mathcal{T}^u_0$ and all sets of vertices $F$, where $|F| \leq 4$, the set $F \cup T_0 \cup N_{\geq 3}(u)$ infects $u$ at time $\neq 3$. Therefore, by Lemma \ref{lema-geral3}, if the algorithm outputs \textbf{No} then $t(G) < 3$. Additionally, the Algorithm \ref{algger3} is an $O(mn^5)$-time algorithm because, given a vertex $u$, to find an arbitrary set in $\mathcal{T}^u_0$ and compute the set $N_{\geq 3}(u)$, it is necessary $O(m)$-time, and, given a set $F$, to test whether the set $F \cup T_0 \cup N_{\geq 3}(u)$ infects $u$ at time 3, it is necessary $O(m)$-time. Therefore, the Algorithm \ref{algger3} decides whether $t(G) \geq 3$, for any connected graph $G$, in $O(mn^5)$ time.
\end{proof}

\section{Maximum Percolation Time with parameter fixed $k=4$ in bipartite graphs}
\label{secaoBip4}

The following theorem is the main result of this section.

\begin{restatable}{theorem}{biptf}
Deciding whether $t(G)\geq 4$ is $O(m^2n^9)$-time solvable in bipartite graphs.
\end{restatable}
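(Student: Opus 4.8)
The plan is to reproduce, one level deeper, the template of Lemmas~\ref{lema-bip3} and~\ref{lema-geral3}: first establish a structural characterization of the form ``$t(G)\ge 4$ if and only if there is a vertex $u$, a set $T_0$ taken from a co-convex family depending on $u$ (the bipartite analogue of $\mathcal{T}_0^u$), and a set $F$ of constant size, such that $T_0\cup N_{\geq 4}(u)\cup F$ infects $u$ at time $4$'', and then convert it into an algorithm that ranges over $u$ and $F$, tests each candidate with an $O(m)$-time simulation of four infection rounds, and relies on an analogue of Lemma~\ref{lemagenept} to fix a single $T_0$ per $u$.

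For the necessary direction I would start, via Lemma~\ref{lemainfectaexato}, from a hull set $S''$ with $t(G,S'')=4$ and a vertex $u$ with $t(G,S'',u)=4$; force into $S''$ one vertex of every co-convex component hanging off a separator (Lemma~\ref{lemacoconvexo}), calling that forced part $T_0$; and replace $S''$ by $S''\cup N_{\geq 4}(u)$ without changing $t(\cdot,u)$ (Lemma~\ref{lemadistanciak}). Since $u$ is infected at time $4$, unrolling the infection rule yields a witness $F\subseteq S$, $|F|\le 2^4=16$, with $t(G,F,u)=4$; I would then prove a pruning claim analogous to Claim~\ref{claimmf} bringing $|F|$ down to a constant (one expects to discard one of the two depth-three branches of the infection cone, as the $k=3$ argument did to pass from $8$ to $4$). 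The delicate point, as in Lemma~\ref{lema-bip3}, is that after retaining only a minimal set of neighbours of $u$ some vertex on a shortest path from $u$ into $F$ may lack its ``second parent''; one then adjoins a suitable vertex of $N_2(u)$ or $N_3(u)$ and uses bipartiteness (every vertex of $N(u)$ other than the distinguished one is at distance $\ge2$ from the adjoined vertex, vertices of $N_2(u)$ are at distance $\ge2$ from those of $N_3(u)$, and so on) together with Lemmas~\ref{lemadistanciak} and~\ref{corolariosubset} to certify that the addition does not lower $t(G,\cdot,u)$ below $4$. Because the cone now has depth $4$, this analysis splits into noticeably more cases than the $k=3$ one: one follows the chain $u \leftarrow v \leftarrow N_2(u) \leftarrow N_3(u)$ and branches on whether the relevant vertex already has a neighbour in the current set and in which layer that neighbour sits.

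For the sufficient direction I would imitate the round-based construction in Lemma~\ref{lema-bip3}/Lemma~\ref{lema-geral3}: from $S_0=T_0\cup N_{\geq 4}(u)\cup F$, which infects $u$ at time $4$, repeatedly pick an uninfected vertex in $N_{\le 3}(u_i)$ — occasionally sliding the ``slow'' vertex $u_i$ to a neighbour, exactly as when a whole component of $G[Y_i]$ hangs off a single vertex that $T_0$ is therefore obliged to meet — add it to the set, and invoke Lemma~\ref{lemadistanciak} to guarantee the new set still infects some vertex at time $\ge 4$ (and $u$ at time $\ge 3$). Each round infects at least one new vertex, so the process ends; when $N_{\le 3}(u_i)\cap Y_i=\varnothing$ one checks, layer by layer and using $T_0$ for the components of $N(u_i)$ and $N_2(u_i)$ that hang off $u_i$, that every vertex is infected, so the final set is a hull set witnessing $t(G)\ge4$.

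For the algorithm: enumerate $u$ ($n$ choices), compute one $T_0$ in the co-convex family and $N_{\geq 4}(u)$ in $O(m)$ time, and enumerate all sets $F$ of the bounded size the pruning yields ($O(n^{8})$ sets), testing ``does $T_0\cup N_{\geq 4}(u)\cup F$ infect $u$ at time $4$?'' in $O(m)$ time each; an extra $O(m)$ factor enters from resolving the seed set (either through the clean analogue of Lemma~\ref{lemagenept}, or, failing that, an $O(m)$-size enumeration over the ambiguous seed choices), giving the claimed $O(m^2n^9)$. I expect the main obstacle to be not any single ingredient but the bulk of the depth-$4$ case analysis — in the necessity proof and in the pruning claim — and, in particular, getting the distance bookkeeping tight enough that every auxiliary vertex adjoined along the way is provably too far from $u$ to accelerate its infection.
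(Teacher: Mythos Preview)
Your template is right, but the instantiation of the ``forced seed'' family $T_0$ is where the plan breaks. You propose to take the bipartite analogue of $\mathcal{T}_0^u$, i.e.\ one vertex from every co-convex component hanging off a separator. In a bipartite graph, however, a connected component $H$ of $G-v$ with $V(H)\subseteq N(v)$ is an independent set, hence a single vertex of degree~$1$: the family $\mathcal{T}_0^u$ collapses to exactly the degree-$1$ set you already used for $k=3$. That is not enough at depth~$4$. The obstruction shows up in the sufficient direction: once every uninfected vertex in $N_2(x)\cup N_3(x)$ has its unique infected neighbour at time $\ge 2$, a residual component $C$ of $G[Y_i]$ may consist of just two adjacent degree-$2$ vertices, and then no choice of $y\in C$ produces a vertex of $C$ infected at time $\ge 4$ after adding $y$. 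The paper fixes this by introducing a different family $\Gamma_0^u$ (Definitions~\ref{definicaoMPQ}--\ref{definicaof0u}), built not from separators but from induced $P_3$'s $vxy$ with $\deg(x)=\deg(y)=2$; membership in $\Gamma_0^u$ forces one of $x,y$ into $T_0\cup N_{\ge 4}(u)$ (Claim~\ref{claimtduplo}), so every residual component has at least three vertices and hence two vertices at distance~$2$, which is what the time-$4$ argument needs.

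Two further points your plan does not anticipate. First, there is no analogue of Lemma~\ref{lemagenept} here: the paper does not show that one $T_0$ suffices, and the extra factor of $m$ in the running time comes from enumerating all of $\Gamma_0^u$, which has size $O(m)$ (Lemma~\ref{lemacompfzu}). Your hedge ``an $O(m)$-size enumeration over the ambiguous seed choices'' is in the right direction but only works once you have identified the right family to enumerate. Second, the characterization is not ``$T_0\cup N_{\ge 4}(u)\cup F$ infects $u$ at time $4$'' but ``\ldots infects \emph{some} vertex at time $4$'': forcing the $\Gamma_0^u$ seeds into a hull set $S''$ with $t(G,S'',u)=4$ can accelerate $u$, and the necessary direction of Lemma~\ref{lema-bip4} spends real work locating a replacement slow vertex $x\in P_v^u$ when that happens. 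Your outline, which keeps $u$ as the slow vertex throughout the necessary direction, would not survive this.
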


In order to prove this, we first define some important sets.

\begin{definition}
\label{definicaoMPQ}
Let $u,v$ be vertices of $G$. If $v\in N_{\geq 2}(u)$, define $M_v^u = P_v^u = Q_v^u = \varnothing$. If $v\in N[u]$, define $M_v^u$ to be the set of all vertices $x,y$ such that $x \neq u$, $y \neq u$, $vxy$ is an induced $P_3$, and $x$ and $y$ have degree two. Also, define $P_v^u = M_v^u \cap N(v)$ and $Q_v^u = M_v^u - P_v^u$.
\end{definition}

Note that, if $v \in N(u)$, then $P_v^u = M_v^u \cap N_2(u)$, and $Q_u^u = M_u^u \cap N_2(u)$. Also, for any $v\in N[u]$, $|P_v^u|=|Q_v^u|=|M_v^u|/2$.

\begin{definition}
\label{definicaof0u}
Let $u\in V(G)$, $v\in N(u)$ and $k_v=|P_v^u|$. Consider the sets $P_v^u=\{x_{v,1},\ldots,x_{v,k_v}\}$ and $Q_v^u=\{y_{v,1},\ldots,y_{v,k_v}\}$ such that $x_{v,i}$ and $y_{v,i}$ are adjacent for every $i\in\{1,\ldots,k_v\}$. Let $i\in\{1,\ldots,k_v\}$.
\begin{itemize}
\item Let $T'_0$ the set of all vertices in $M_v^u\cap N_2(u)$, for any $v\in N[u]$, and all vertices with degree one;
\item let $T^v_0$ the set of all vertices in $(M_{v'}^u - M_v^u) \cap N_2(u)$ for any $v'\in N[u]-\{v\}$, all vertices in $Q_v^u$ and all vertices with degree one;
\item let $T^{v,i}_0$ the set of all vertices in $(M_{v'}^u-M_v^u)\cap N_2(u)$ for any $v'\in N[u]-\{v\}$, all vertices in $(Q_v^u-\{y_{v,i}\})\cup\{x_{v,i}\}$ and all vertices with degree one.
\end{itemize}

Let $\Gamma^u_0$ be the family with $T_0'$, $T^v_0$ and $T^{v,i}_0$ for every $v\in N(u)$ and every $i\in\{1,\ldots,k_v\}$.
\end{definition}

\begin{lemma}
\label{lemacompfzu}
Given a vertex $u$, $|\Gamma^u_0|=O(m)$ and any set of the family $\Gamma^u_0$ can be computed in time $O(m)$.
\end{lemma}

\begin{proof}
First, we can obtain the set $N_2(u)$ in time $O(m)$ by breadth-first search. Following Definition \ref{definicaof0u}, we can build the set $T'_0$ in time $O(n)$ by checking, for any $q\in N_2(u)$, if $q$ has degree two and has a neighbor that also has degree two. Also, add to $T'_0$ all vertices with degree one.

For each $v\in N(u)$, we can build the set $T_0^v$ in time $O(n)$ by checking, for each vertex $q\in N_2(u)$, if $q$ has degree two and a neighbor that also has degree two. In this case, add $q$ to $T_0^v$, if $q$ and $v$ are non-adjacent; otherwise, add the neighbor of $q$ that is not $v$ to $T_0^v$. Also, add to $T_0^v$ all vertices with degree one.

For each $v\in N(u)$ and $i\in\{1,\ldots,k_v\}$, we can build the set $T^{v,i}_0$ in time $O(n)$ by checking, for each vertex $q\in N_2(u)$, if $q$ has degree two and neighbor with degree two. In this case, add $q$ to $T_0^v$, if either $q$ and $v$ are non-adjacent or $q=x_{v,i}$; otherwise, add the neighbor of $q$ that is not $v$ to $T^{v,i}_0$. Also, add to $T_0^v$ all vertices with degree one.

Therefore, since $G$ is connected and, thus, $n = O(m)$, any set in $\Gamma^u_0$ can be computed in time $O(m)$ and, since for each $v \in W$, there is $O(|N(v)|)$ sets in $\Gamma^u_0$, then $|\Gamma^u_0| = O(m)$.
\end{proof}

\begin{lemma}\label{lema-bip4}
Let $G$ be a connected bipartite graph. $t(G)\geq 4$ if and only if there is a vertex $u$, a subset $T_0\in\Gamma^u_0$ and a set of vertices $F$, where $|F|\leq 8$, such that $T_0 \cup F \cup N_{\geq 4}(u)$ infects some vertex at time 4.
\end{lemma}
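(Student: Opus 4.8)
The plan is to mirror the structure of Lemma~\ref{lema-geral3} and Lemma~\ref{lema-bip3}, but now at one level deeper in the infection process, since the target time is $4$ instead of $3$. The skeleton has two directions.

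\textbf{Necessity.} Suppose $t(G)\geq 4$. By Lemma~\ref{lemainfectaexato} there is a hull set $S''$ with $t(G,S'')=4$, so some vertex $x$ satisfies $t(G,S'',x)=4$; we will take $u$ to be a suitable vertex near $x$ (most likely $u=x$, or a vertex whose role is to localize the analysis of $N(x)$). By Lemma~\ref{lemacoconvexo} all degree-one vertices are in $S''$, and by Lemma~\ref{lemadistanciak} we may freely add $N_{\geq 4}(u)$ to $S''$ without decreasing the infection time of $u$ (equivalently of $x$). Since $x$ is infected at time $4$, a set $F_{0}\subseteq S''$ of size at most $2^{4}=16$ already forces $t(G,F_{0},x)=4$; an analogue of Claim~\ref{claimmf} (or a direct pruning argument) should bring $|F|$ down to $8$. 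The remaining point is to show $S''$ may be assumed to contain one of the sets of the family $\Gamma^u_0$. This is where the sets $M^u_v,P^u_v,Q^u_v$ enter: the pairs $x,y$ forming an induced $P_3$ $vxy$ with $x,y$ of degree two are exactly the ``co-convex-like'' configurations that any hull set must hit, but — unlike the $k=3$ case where a single degree-one obligation sufficed — at time $4$ the hull set has a genuine \emph{choice} within each such pair (it may contain $y_{v,i}\in Q^u_v$ or swap to $x_{v,i}\in P^u_v$), and which choice matters for whether $v$ (a neighbor of $u$) gets infected at time $\leq 1$ versus $\geq 2$. The families $T'_0$, $T^v_0$, $T^{v,i}_0$ enumerate precisely the relevant ``canonical'' choices relative to a distinguished neighbor $v$ of $u$; the argument is that, after replacing $S''$ by the canonical representative that agrees with $S''$ outside the chosen pairs, the infection time of $x$ is unchanged, by the same ``a separator/degree-two pair can only influence $u$ through a single vertex'' reasoning used in Lemma~\ref{lemagenept}. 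Bipartiteness is used here (as in Lemma~\ref{lema-bip3}) to guarantee that the chosen extra vertices in $F$ lying in $N_2(u)$ are at distance $\geq 2$ from the neighbors of $u$, so adding them does not accelerate the infection of $u$.

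\textbf{Sufficiency.} Conversely, given $u$, $T_0\in\Gamma^u_0$ and $F$ with $|F|\leq 8$ such that $S_0=T_0\cup F\cup N_{\geq 4}(u)$ infects some vertex at time $4$, we build a hull set $S\supseteq S_0$ with $t(G,S)\geq 4$ by the same round-by-round saturation used in Lemma~\ref{lema-geral3}: maintain a vertex $u_i$ with $t(G,S_i,u_i)\geq 4$, and while some vertex of $N_2(u_i)\cup N_3(u_i)$ is uninfected, add a carefully chosen uninfected vertex $y_i$ (together with $N_{\geq 4}$ of the relevant witness), arguing via Lemma~\ref{lemadistanciak} that the witness's infection time stays $\geq 4$; when $N_{\leq 3}(u_i)\cap Y_i=\varnothing$ one checks, case by case on the components of $G[N(u_i)]$ and $G[N_2(u_i)]$ and using that $T_0\in\Gamma^u_0$ forces a vertex into each co-convex component, that $X_i=V(G)$. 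The only real new subtlety over the $k=3$ proof is that the ``bad'' local configurations now live up to distance $3$ from $u_i$, so the component analysis in $G[N(u_i)]\cup G[N_2(u_i)]$ has more cases; the degree-two pair structure recorded by $M^u_v$ is exactly what makes these cases closable.

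\textbf{Main obstacle.} The hard part is the necessity direction, specifically showing that a hull set infecting $x$ at time $4$ can be normalized so that it contains one of the at-most-$O(m)$ canonical sets $T_0\in\Gamma^u_0$ while preserving $t(G,\cdot,x)=4$. Unlike $k=3$, where the obligatory vertices are rigid (degree-one vertices, components of $G-v$ inside $N(v)$), at $k=4$ the obligation sits on \emph{pairs} of degree-two vertices and the hull set's freedom inside each pair genuinely affects infection times one step out; pinning down that only a bounded, enumerable amount of this freedom is relevant — and that swapping to the canonical choice is ``invisible'' to $x$ because its influence factors through the separating neighbor $v$ — is the crux, and it is the step that forces the $T'_0$/$T^v_0$/$T^{v,i}_0$ trichotomy and the extra factor of $n$ (ranging over the distinguished $v$) and $m$ (ranging over $\Gamma^u_0$) in the final $O(m^2n^9)$ running time.
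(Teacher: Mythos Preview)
Your outline has the right two-direction skeleton, but there are genuine gaps in each direction that the paper's proof handles differently.

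\textbf{Necessity.} You propose to take $u=x$ and then ``replace $S''$ by the canonical representative'' so that it contains some $T_0\in\Gamma_0^u$. This is not what the paper does, and it is not clear your replacement preserves $t(\cdot,x)=4$: a hull set may mix the choices $x_{v,i}$ versus $y_{v,i}$ arbitrarily across $i$ and across $v$, so it need not be close to any of the $O(m)$ canonical $T_0$'s, and swapping inside one pair can change the infection time of the adjacent $v\in N(u)$ from $\le 1$ to $\ge 2$ or vice versa. The paper instead \emph{augments} $S''$ by adding all of $(M^u_{v'}\cap N_2(u))$ for the appropriate $v'$, and does a case split on whether there exists $v\in N(u)$ with $t(G,S'',v)\ge 3$ and $|P^u_v|\ge 2$ (or $|P^u_v|=1$ with its unique element outside $S''$). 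In the first case the vertex witnessing time $4$ is \emph{not} $u$ but some $x\in P^u_v\subseteq N_2(u)$, and the resulting $S$ contains a $T_0$ of type $T_0^{v}$ or $T_0^{v,i}$; in the second case the witness stays $u$ and $S$ contains $T_0'$. So the trichotomy $T_0'/T_0^v/T_0^{v,i}$ arises from this case split, not from a Lemma~\ref{lemagenept}-style interchange argument; your sketch does not see that the time-$4$ witness may migrate to $N_2(u)$.

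\textbf{Sufficiency.} You assert that ``$T_0\in\Gamma^u_0$ forces a vertex into each co-convex component''. This is false: unlike $\mathcal{T}_0^u$ in Lemma~\ref{lema-geral3}, the family $\Gamma_0^u$ only guarantees one vertex in each degree-one singleton and in each adjacent pair of degree-two vertices (Claim~\ref{claimtduplo} in the paper), not in arbitrary co-convex sets. The paper's endgame is therefore different from what you describe. It keeps the witness $x$ fixed (it does not run a $u_i$-changing loop in the style of Lemma~\ref{lema-geral3}) and proceeds in three phases: first add vertices of $Y_i\cap N_2(x)$ with no neighbour infected at time $\ge 2$; then do the same for $Y_i\cap N_3(x)$; once neither exists, every component $C_i$ of $G[Y_i]$ has each vertex seeing exactly one $X_i$-vertex, infected at time $\ge 2$. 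The degree-two-pair property of $\Gamma_0^u$ (together with $N_{\ge 4}(u)\subseteq S_0$) then forces $|C_i|\ge 3$, so by bipartiteness $C_i$ contains two vertices $y_i,y_i'$ at distance $2$; adding $y_i$ makes $y_i'$ the new time-$\ge 4$ witness, and a single bulk addition finishes. Your sketch misses both the three-phase structure and the role of the $|C_i|\ge 3$ argument, and imports a co-convex covering property that $\Gamma_0^u$ simply does not have.
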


\begin{proof}

First, let us prove the necessary condition. If $t(G) \geq 4$, then there is a hull set $Z$ and a vertex $u$ such that $t(G,Z,u) = 4$. By Lemma \ref{lemacoconvexo}, we have that all vertices with degree one are in $Z$. By Lemma \ref{lemadistanciak}, we have that the hull set $S'' = Z \cup N_{\geq 4}(u)$ infects $u$ at time 4. Let us prove that there is a hull set $S\supseteq S''$ which infects some vertex $x$ at time 4 and contains a set $T_0\in\Gamma_0^u$. Let us divide the proof in two cases.

The first case occurs when there is a vertex $v \in N(u)$ such that $t(G,S'',v) \geq 3$ and either $|P_v^u| \geq 2$ or $|P_v^u| = 1$ and the only vertex $x$ in $P_v^u$ is not in $S''$. Since $t(G,S'',v) \geq 3$, we have that $v$ does not have two neighbors in $S''$. Thus, if $|P_v^u| \geq 2$, by Lemma \ref{lemacoconvexo}, there is at least one vertex $x$ in $P_v^u$ infected at time $\geq 4$ by $S''$. On the other hand, if $|P_v^u| = 1$, then the only vertex $x$ in $P_v^u$ is not in $S''$ and, by Lemma \ref{lemacoconvexo}, we have that $x$ is infected at time $\geq 4$ by $S''$. At any case, there is a vertex $x \in P_v^u$ infected at time $\geq 4$ by $S''$. By Lemma \ref{lemainfectaexato}, we have that there is a hull set $S' \supseteq S''$ that infects $x$ at time 4 and $v$ at time 3. Recall that $v \in N(u)$ and $x \in N_2(u)$. Let $S = S' \cup (N_2(u) \cap \bigcup_{v'\in \{u\} \cup (N(u)-\{v\})} (M_{v'}^u - M_{v}^u))$. Note that, since $v$ is infected by $S''$ at time 3 and $x$ at time 4, then $v$ must have degree at least 3. Therefore, $v \notin M_u^u$ and, thus, $x \notin M_u^u$. Since all vertices in $S - S'$ are in $N_2(u)$, $G$ is bipartite and $v \notin S - S'$, then each vertex $w \in S - S'$ is either at distance 2 or at distance $\geq 4$ of $x$. If $w$ is at distance 2 of $x$, then there is a vertex $z$ adjacent to both $w$ and $x$ and, since $z$ cannot be $v$, because $w$ would be in $M_v^u$, and $x$ has degree two, then $z$ is in $Q_v^u$ and also in $S'$, which means that $w$ is at distance $\geq 3$ of $v$. Therefore, when we add $w$ to $S'$ to form the set $S$, we have that the resulting set still infects $z$ at time 0 and, by Lemma \ref{lemadistanciak}, $v$ at time 3 and, thus, infects $x$ at time 4. On the other hand, if $w$ is at distance $\geq 4$ of $x$, and, therefore, at distance $\geq 3$ of $v$, by Lemma \ref{lemadistanciak} applied for each vertex of $S-S'$, we have that $S$ infects $v$ at time 3 and $x$ at time 4.

The second case occurs when, for every vertex $v\in N(u)$, $P_v^u = \varnothing$ or $t(G,S'',v)$ $\leq 2$ or $|P_v^u| = 1$ and the only vertex $x$ in $P_v^u$ is in $S''$. Let $S = S'' \cup (N_2(u) \cap \bigcup_{v \in N[u]} M_{v}^u)$. Then, for each vertex $w \in S - S''$, we have that either $w$ is already in $S''$ or, if $w$ is not in $S''$, then either $w$ is in $P_v^u$ for some vertex $v \in N(u)$ or $w$ is in $Q_u^u$. 

If $w$, which is not in $S''$, is in $P_v^u$ for some vertex $v \in N(u)$, then, since $|P_v^u| > 0$ and, if $|P_v^u| = 1$ we would have that the only vertex in $P_v^u$ would be not in $S''$, $v$ is infected at time $\leq 2$ by $S''$. Note that, in this case, $v$ is the only vertex in $N(u)$ infected at time $\leq 2$ by $S''$. Thus, for each vertex $v' \in N(u) - \{v\}$, since $G$ is bipartite and $w$ has degree two, we have either $w$ is at distance $\geq 3$ of $v'$ or $w$ is adjacent to $v'$. But, in fact, we have that $w$ cannot be adjacent to $v'$. This is because, if $w$ is adjacent to $v'$, then $v'$ has degree two, because, since $w \in P_v^u$, $v'$ would be in $Q_v^u$, and is adjacent only to $u$ and $w$. Therefore, since $w \notin S''$, then, by Lemma \ref{lemacoconvexo}, $v' \in S''$, and, therefore, since $v'$ and $v$ are infected respectively at times 0 and 2 by $S''$, then $u$ would be infected by $S''$ at time $\leq 3$. Therefore $w$ is at distance $\geq 3$ of $v'$ and, since $v$ is infected at time $\leq 2$ and $u$ at time 4 by $S''$, then $v'$ is infected at time $\geq 3$ by $S''$, and, therefore, by Lemma \ref{lemadistanciak}, when we add $w$ to $S''$ to form $S$, the resulting set infects all vertices in $N(u)$ that are infected at time $\geq 3$ by $S''$ at time $\geq 3$. 

On the other hand, if $w$, which is not in $S''$, is in $Q_u^u$, then $w$ cannot be adjacent to any vertex in $N(u)-\{z\}$, where $z$ is the only vertex in $N(w) \cap P_u^u$. This is because, since $w \notin S''$, by Lemma \ref{lemacoconvexo}, then $z \in S''$. Therefore, since $S''$ infects $u$ at time 4, all vertices in $N(u)-\{z\}$ are infected at time $\geq 3$ by $S''$. But, if $w$ is adjacent to some vertex $v \in N(u)-\{z\}$, since $w \in P_v^u$ and $z \in Q_v^u$, then we would have a vertex $v \in N(u)$ such that $t(G,S'',v) \geq 3$ and either $|P_v^u| \geq 2$ or $|P_v^u| = 1$ and the only vertex $w$ in $P_v^u$ is not in $S''$, which falls in the first case. Therefore, since $G$ is bipartite and $w$ is not adjacent to any vertex in $N(u)-\{z\}$, then $w$ is at distance $\geq 3$ of all vertices in $N(u)-\{z\}$. Therefore, by Lemma \ref{lemadistanciak}, when we add $w$ to $S''$ to form $S$, the resulting set infects all vertices of $N(u)$ that are infected at time $\geq 3$ by $S''$ at time $\geq 3$.   

Hence, in any of the two cases, we have that $S$ infects all vertices in $N(u)$, except at most by one, at time $\geq 3$ and, therefore, $S$ infects $u$ at time $\geq 4$. Also, by Lemma \ref{corolariosubset}, $S$ infects $u$ at time $\leq 4$ and, therefore, $S$ infects $u$ at time 4. Also, we have that, by Definition \ref{definicaof0u}, there is a hull set $S$ and a set $T_0 \in \Gamma_0^u$ such that $T_0 \subseteq S$.

Since $S$ infects some vertex $x$ at time 4 and, to be infected, each vertex needs only two neighbors infected, then there is a set $F \subseteq S$ that infects $x$ at time 4 such that $|F| \leq 2^4 = 16$. Thus, since there is a set $T_0 \in \Gamma^u_0$ such that $T_0 \subseteq S$ and $N_{\geq 4}(u) \subseteq S$, then $F \subseteq T_0 \cup N_{\geq 4}(u) \cup F \subseteq S$. By Lemma \ref{corolariosubset}, we have that $4 = t(G,S,x) \leq t(G,T_0 \cup N_{\geq 3}(u) \cup F,x) \leq t(G,F,x) = 4$. Therefore, there is a set $T_0 \in \Gamma^u_0$, such that the set $T_0 \cup N_{\geq 4}(u) \cup F$, with $|F| \leq 16$, infects the vertex $x$ at time 4. Furthermore, we can decrease the size of $F$ to 8 by the following claim, which is proved in Section \ref{apendClaim}.

\begin{restatable}{claim}{claimbmf}\label{claimbmf}
Let $G$ be a bipartite graph and $u,x\in V(G)$.
If $G$ has a hull set which contains a set $T_0\in\Gamma_0^u$ and infects $x$ at time 4, then there exists a set $F$ with at most 8 vertices and a hull set $S$ such that $F\subseteq S$ and $t(G,F,x)=t(G,S,x)=4$.
\end{restatable}

Therefore, $T_0\cup N_{\geq 4}(u)\cup F$ infects the vertex $x$ at time 4 and we conclude the proof of the necessary condition.

Now, let us prove the sufficient condition. Suppose that there is a vertex $u$, a set $F$, with $|F| \leq 8$, and a set $T_0 \in \mathcal{T}^u_0$ such that the set $S_0 = T_0 \cup N_{\geq 4}(u) \cup F$ infects some vertex $x$ at time 4. We will show how to construct a hull set $S$ such that $t(G,S) \geq 4$. We begin with $S=S_0$. Let $S_i$ be the constructed set at step $i$. Each step adds one vertex to $S$ and, at the end of each step, it is guaranteed that $S_i$ infects, at first, $x$ at time 4 and, in the last step, some vertex at time $\geq 4$. Let $X_i$ be the set of vertices infected by $S_i$, and let $Y_i = V(G) - X_i$. It is worth noting that a vertex in $Y_i$ is adjacent to at most one vertex in $X_i$ and, since it must have degree at least two, it is also adjacent to at least one vertex in $Y_i$. If $S_0$ is a hull set, we stop the construction because, since $t(G,S_0,x) = 4$, then $t(G,S_0) \geq 4$.

If not, for $i \geq 0$, in the $(i+1)$th step of the construction, suppose that there exists a vertex $y_i \in Y_i \cap N_2(x)$ with no neighbor infected by $S_i$ at time $\geq 2$. Let $S_{i+1}=S_i\cup\{y_i\}$. Clearly, $x$ has at most one neighbor infected by $S_i$ at time $\leq 2$ and, by the choice of $y_i$, $x$ is not adjacent to $y_i$. Let us prove that $t(G,S_{i+1},x) = 4$. We have that:
\begin{itemize}
\item For every vertex $v \in N(x) \cap N(y_i)$ that is not infected by $S_i$, we have that $t(G,S_{i+1},v)$ $\geq 3$. This is because, since every neighbor of $v$, except $x$, is not infected by $S_i$ and, since $G$ is bipartite, every neighbor of $v$ is at distance $\geq 2$ of $y_i$, then, by Lemma \ref{lemadistanciak}, for every vertex $s\in N(v)-\{x\}$, $t(G,S_{i+1},s)\geq 2$ and, thus, $t(G,S_{i+1},v)\geq 3$.
\item For every vertex $v \in N(x) \cap N_{\geq 3}(y_i)$ such that $t(G,S_i,v) \geq 3$, by Lemma \ref{lemadistanciak}, we have that $t(G,S_{i+1},v) \geq 3$. 
\end{itemize}
Additionally, there is at most one vertex in $N(x)$ infected at time $\leq 2$ by $S_i$ and, since $y_i \in N_2(x)$ and $G$ is bipartite, $N(x) \cap N_2(y_i) = \varnothing$. Thus, we have that $t(G,S_{i+1},x) \geq 4$. Also, since $S_i \subseteq S_{i+1}$, by Lemma \ref{corolariosubset}, we have that $t(G,S_{i+1},x) \leq 4$. Therefore $t(G,S_{i+1},x) = 4$.

If all the vertices in the set $Y_i\cap N_2(x)$ have a neighbor infected by $S_i$ at time $\geq 2$, suppose that there is a vertex $y_i \in Y_i\cap N_3(x)$ with no neighbor infected by $S_i$ at time $\geq 2$. Let $S_{i+1}=S_i\cup\{y_i\}$. Let us prove that $t(G,S_{i+1},x) = 4$. Let $v$ be any vertex in $N(x)$ such that $t(G,S_i,v) \geq 3$. We have that:
\begin{itemize}
\item For every vertex $s \in N(v) \cap N(y_i)$ that is not infected by $S_i$, we have that $t(G,S_{i+1},s)$ $\geq 2$. This is because, since $s$ is in $N_2(x)$, then every neighbor of $s$, except one, say $q$, which is infected at time $\geq 2$ by $S_i$, is not infected by $S_i$ and, additionally, every neighbor of $s$ is at distance $\geq 2$ of $y_i$, then, by Lemma \ref{lemadistanciak}, for every vertex $r \in N(s)$, $t(G,S_{i+1},r)$ $\geq 2$ and, thus, $t(G,S_{i+1},s)$ $\geq 3 \geq 2$.
\item For every vertex $s \in N(v) \cap N_{\geq 3}(y_i)$ such that $t(G,S_i,s)$ $\geq 2$, by Lemma \ref{lemadistanciak}, we have that $t(G,S_{i+1},s)$ $\geq 2$. 
\end{itemize}

Additionally, there is at most one vertex in $N(v)$ infected at time $\leq 1$ by $S_i$ and, since $y_i \in N_3(x)$ and $G$ is bipartite, $N(v) \cap N_2(y_i) = \varnothing$. Thus, we have that $t(G,S_{i+1},v) \geq 3$ and, therefore, since there is at most one vertex in $N(x)$ infected at time $\leq 2$ by $S_i$, $t(G,S_{i+1},x) \geq 4$. Also, since $S_i \subseteq S_{i+1}$, by Lemma \ref{corolariosubset}, we have that $t(G,S_{i+1},x) \leq 4$. Therefore $t(G,S_{i+1},x) = 4$. It is worth noting that $y_i$ is at distance $\geq 2$ of every vertex that is infected at time $\geq 2$ by $S_i$, which, by Lemma \ref{lemadistanciak}, implies that it is not possible to go back to the previous state, i.e., it is not possible that there is a vertex in $Y_{i+1} \cap N_2(x)$ with no neighbor infected by $S_{i+1}$ at time $\geq 2$.

When all vertices in the set $Y_i\cap N_2(x)$ and in the set $Y_i\cap N_3(x)$ have a neighbor infected by $S_i$ at time $\geq 2$, let $C_i$ be any connected component of $G[Y_i]$. We have that every vertex of $C_i$ has exactly one neighbor outside $C_i$, which is infected at time $\geq 2$ by $S_i$. 

We have that $C_i$ has at least 3 vertices because, otherwise, we would have in $C_i$ either only one vertex of degree one, which would be in $T_0$, or two adjacent vertices of degree two, that are not $u$ because $u \in X_i$. If the latter happens, we have that at least one of the two vertices in $C_i$ would be in $S_i$ because of the following:

\begin{restatable}{claim}{claimtduplo}\label{claimtduplo}
For all pair of neighbor vertices $x,y \neq u$ that have degree 2, we have that $\{x,y\} \cap (T_0 \cup N_{\geq 4}(u)) \neq \varnothing$.
\end{restatable}

\begin{proof}
Since $G$ is bipartite, we can assume, without loss of generality, that $x \in N_k(u)$, $y \in N_{k+1}(u)$. If $k=3$, then $y \in N_{\geq 4}(u)$. So, assume that $1 \leq k \leq 2$. Since $x \in N_k(u)$, then there a vertex $v \in N_{k-1}(u)$ that is neighbor of $x$. Thus, $x \in P_v^u$ and $y \in Q_v^u$. Since, by Definition \ref{definicaof0u}, for each vertex $v \in N[u]$, $w \in P_v^u$ and $z \in Q_v^u$, $T_0$ has either $w$ or $z$, then either $x$ or $y$ is in $T_0$.
\end{proof}

Therefore, in either case, one vertex of $C_i$ would also be in $T_0$ and, consequently, in $X_i$, which would be a contradiction. Thus, since the graph is bipartite, there are two vertices $y_i$ and $y_i'$ in $V(C_i)$ that are at distance 2 of each other. Let $S_{i+1}' = S_i \cup \{y_i\}$. Let us prove that $S_{i+1}'$ infects $y_i'$ at time $\geq 4$. Let $z_i$ and $z_i'$ be the neighbors of $y_i$ and $y_i'$ outside of $C_i$ respectively. Also, let $v$ be any vertex in $N(y_i) \cap V(C_i)$ and $v'$ his neighbor outside $C_i$. We have that, since $G$ is bipartite, $y_i$ cannot be neighbor of $v'$ and, since $v'$ is infected by $S_i$ at time $\geq 2$ then, by Lemma \ref{lemadistanciak}, $S_{i+1}'$ infects $v'$ at time $\geq 2$. Also, since all vertices in $N(v) \cap V(C_i)$ are at distance $\geq 2$ of $y_i$, except $y_i$ itself, by Lemma \ref{lemadistanciak}, they all are infected at time $\geq 2$ by $S_{i+1}'$. Thus, $S_{i+1}'$ infects $v$ at time $\geq 3$. Therefore, since all neighbors of $y_i'$ in $V(C_i)$ are either at distance $\geq 3$ of $y_i$, and then, by Lemma \ref{lemadistanciak}, are infected by $S_{i+1}'$ at time $\geq 3$, or at distance 1 of $y_i$, which are also infected at time $\geq 3$ by $S_{i+1}'$, we have that $S_{i+1}'$ infects at most one neighbor of $y_i'$ at time $\leq 2$, which is $z_i'$. Since all vertices in $C_i$ are infected by $S_{i+1}'$, then $y_i'$ is infected by $S_{i+1}'$ at time $\geq 4$. In Figure \ref{fig:bipt4Ci}, we can see an example of how the infection times of the vertices in $C_i$ will be like before and after we add $y_i$ to $S_i$.

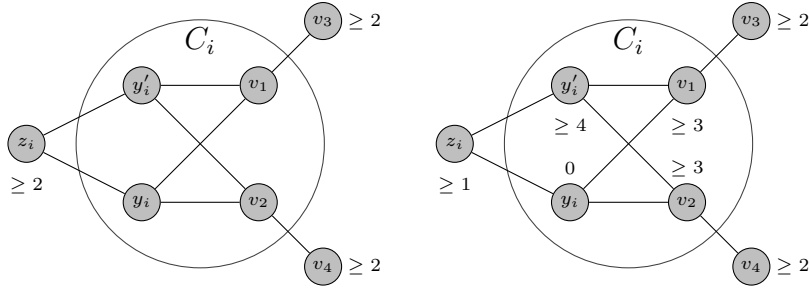
\begin{figure}[ht]
\centering
\begin{minipage}[c]{.46\textwidth}
\begin{tikzpicture}[scale=.55]
\tikzstyle{every node}=[font=\scriptsize]
\tikzstyle{vertex}=[draw,circle,fill=black!25,minimum size=14pt,inner sep=1pt]
\tikzstyle{leafvertex}=[draw,circle,fill=black!50,minimum size=2pt,inner sep=2pt]

\def \radius {2cm}
\def \diff {2.2cm}

\node (Ci) at (90:2.5cm) {\large{\textbf{$C_i$}}};
\draw[darkgray] (0:0cm) circle (3cm);

\node[vertex] (yi) at (225:\radius) {$y_i$};
\node[vertex] (v1) at (45:\radius) {$v_1$};
\node[vertex] (yil) at (135:\radius) {$y_i'$};
\node[vertex] (v2) at (315:\radius) {$v_2$};

\node[vertex] (zi) at (180:\radius+\diff) {$z_i$};
\node[vertex] (v3) at (45:\radius+\diff) {$v_3$};
\node[vertex] (v4) at (315:\radius+\diff) {$v_4$};

\node[shift={(-90:0.55cm)}] (nzi) at (zi) {$\geq 2$};
\node[shift={(0:0.55cm)}] (nv4) at (v4) {$\geq 2$};
\node[shift={(0:0.55cm)}] (nv3) at (v3) {$\geq 2$};

\draw[-] (yi) to (v1);
\draw[-] (yi) to (v2);
\draw[-] (yil) to (v1);
\draw[-] (yil) to (v2);

\draw[-] (yi) to (zi);
\draw[-] (yil) to (zi);
\draw[-] (v1) to (v3);
\draw[-] (v2) to (v4);

\end{tikzpicture}
\end{minipage} 
\begin{minipage}[c]{.46\textwidth}
\begin{tikzpicture}[scale=.55]
\tikzstyle{every node}=[font=\scriptsize]
\tikzstyle{vertex}=[draw,circle,fill=black!25,minimum size=14pt,inner sep=1pt]
\tikzstyle{leafvertex}=[draw,circle,fill=black!50,minimum size=2pt,inner sep=2pt]

\def \radius {2cm}
\def \diff {2.2cm}

\node (Ci) at (90:2.5cm) {\large{\textbf{$C_i$}}};
\draw[darkgray] (0:0cm) circle (3cm);

\node[vertex] (yi) at (225:\radius) {$y_i$};
\node[vertex] (v1) at (45:\radius) {$v_1$};
\node[vertex] (yil) at (135:\radius) {$y_i'$};
\node[vertex] (v2) at (315:\radius) {$v_2$};

\node[vertex] (zi) at (180:\radius+\diff) {$z_i$};
\node[vertex] (v3) at (45:\radius+\diff) {$v_3$};
\node[vertex] (v4) at (315:\radius+\diff) {$v_4$};

\node[shift={(90:0.45cm)}] (nyi) at (yi) {$0$};
\node[shift={(-90:0.55cm)}] (nyil) at (yil) {$\geq 4$};
\node[shift={(-90:0.55cm)}] (nv1) at (v1) {$\geq 3$};
\node[shift={(90:0.45cm)}] (nv2) at (v2) {$\geq 3$};

\node[shift={(-90:0.55cm)}] (nzi) at (zi) {$\geq 1$};
\node[shift={(0:0.55cm)}] (nv4) at (v4) {$\geq 2$};
\node[shift={(0:0.55cm)}] (nv3) at (v3) {$\geq 2$};

\draw[-] (yi) to (v1);
\draw[-] (yi) to (v2);
\draw[-] (yil) to (v1);
\draw[-] (yil) to (v2);

\draw[-] (yi) to (zi);
\draw[-] (yil) to (zi);
\draw[-] (v1) to (v3);
\draw[-] (v2) to (v4);

\end{tikzpicture}
\end{minipage}

\caption{Vertices of the component $C_i$ before and after the addition of $y_i$ to $S_i$.}
\label{fig:bipt4Ci}
\end{figure}

If $y_i'$ is in $N_2(x)$ (resp. $N(x)$ or $N_3(x)$), let $S_{i+1} = S_{i+1}' \cup ((Y_i - V(C_i)) \cap N_2(x))$ (resp. $S_{i+1} = S_{i+1}' \cup ((Y_i - V(C_i)) \cap (N(x) \cup N_3(x)))$). Since $G$ is bipartite, then each vertex in $S_{i+1} - S_{i+1}'$ are at distance $\geq 3$ of all the neighbors of $y_i'$ in $C_i$ and, therefore, since all neighbors of $y_i'$ in $C_i$ are infected at time $\geq 3$ by $S_{i+1}'$, applying Lemma \ref{lemadistanciak} once for each vertex in $S_{i+1} - S_{i+1}'$, we have that $S_{i+1}$ infects all neighbors of $y_i'$ in $C_i$ at time $\geq 3$. Thus, since $y_i'$ has only one neighbor outside $C_i$, then $S_{i+1}$ infects $y_i'$ at time $\geq 4$. Also, since there is at least one vertex in $N_2(x)$ and one vertex in either $N(x)$ or $N_3(x)$ in each connected component of $G[Y_i]$, because each connected component of $G[Y_i]$ has at least three vertices, then all connected components of $G[Y_i]$ are infected by $S_{i+1}$ and, thus, $S_{i+1}$ is a hull set that infects $G$ at time $\geq 4$. At this point, we stop the construction and we have the constructed set $S$ is a hull set such that $t(G,S) \geq 4$.
\end{proof}

Given a vertex $u$, a set $T_0\in\Gamma^u_0$ and a set of vertices $F$, such that $T_0\cup F\cup N_{\geq 3}(u)$ infects some vertex $x$ at time 4, the construction to build a hull set $S$ such that $t(G,S)\geq 4$ that is described in Lemma \ref{lema-bip4} can be implemented to run in time $O(mn^2)$. This is because, since $G$ is connected, the sets $N(u),N_2(u),N_3(u)$ and $N_{\geq 4}(u)$ can be computed in time $O(m)$. Also, for each step $i$, finding one vertex, if there is any, belonging to the set $N_2(u) \cap Y$ and the set $N_3(u) \cap Y$ that has at least one neighbor that is either not infected or infected at time $\leq 1$ by the current set $S$ can be done in time $O(n)$ and updating the set $Y$ can be done in time $O(mn)$. Finally, the last step can be done in time $O(m)$.

Now, let us prove the main theorem of the section.

\biptf*

\begin{proof}
Consider the following algorithm:\\
\begin{algorithm}[ht]
\label{algbip4}
\DontPrintSemicolon
\KwIn{A bipartite connected graph $G$.}
\KwOut{\textbf{Yes}, if $t(G) \geq 4$. \textbf{No}, otherwise.}
\ForAll{$u \in V(G)$}{
	\ForAll{$T_0 \in \Gamma_0^u$}{
		\ForAll{$F \subseteq V(G)$, where $|F| \leq 8$,}{
			\If{$F \cup T_0 \cup N_{\geq 4}(u)$ infects any vertex at time 4}{
				\Return {\textbf{Yes}}
			}
		}
	}
}
\Return {\textbf{No}}

\caption{Algorithm that solves the Maximum Percolation Problem for $k = 4$ in bipartite graphs}
\end{algorithm}

The Algorithm \ref{algbip4} outputs \textbf{Yes} if and only if, given a bipartite graph $G$, there is a vertex $u$, a set $T_0 \in \Gamma^u_0$, and a set $F \subseteq V(G)$, where $|F| \leq 8$, such that $F \cup T_0 \cup N_{\geq 4}(u)$ infects some vertex at time 4. Thus, by Lemma \ref{lema-bip4}, the Algorithm \ref{algbip4} outputs \textbf{Yes} if and only if $t(G) \geq 4$. Furthermore, we have that, given a vertex $u$, by Lemma \ref{lemacompfzu}, there is $O(m)$ sets in $\Gamma_0^u$ and each one can be computed at time $O(n^2)$. Additionally, given a set $F$, to test whether the set $F \cup T_0 \cup N_{\geq 4}(u)$ infects some vertex at time 4, it is necessary $O(m)$-time. Thus, the Algorithm \ref{algbip4} runs at time $O(n \cdot (m \cdot (m + n^8 \cdot m))) = O(m^2n^9)$. Therefore, the Algorithm \ref{algbip4} decides whether $t(G) \geq 4$, for any bipartite connected graph $G$, in $O(m^2n^9)$ time.
\end{proof}

\renewcommand{\thelemma}{\Alph{section}.\arabic{lemma}}
\renewcommand{\thecorollary}{\Alph{section}.\arabic{lemma}.\arabic{corollary}}
\renewcommand{\thefact}{\arabic{fact}}

\section{Technical lemmas}
\label{apendClaim}

\subsection{Proof of Claim \ref{claimmf}}

\claimmf*
\begin{proof}
First, let us start by defining the set $C_{H,v,t}$. For any vertex $v$ infected by some hull set $H$ at time $t$, let us define $C_{H,v,t}$ to be any minimal subset of $H$ such that the set $C_{H,v,t}$ infects $v$ at time $t$. In general, if $v$ is infected at time $t$ by $H$, then $|C_{H,v,t}| \leq 2^t$ because every vertex needs only two infected neighbors to be infected at any time $\geq 1$. Additionally, if $v$ is infected at time $t \geq 1$ and has a neighbor vertex infected at time 0 by $H$, since $v$ must have a neighbor $v'$ infected at time $t-1$, then, we have that $|C_{H,v,t}| \leq |C_{H,v',t-1}| + 1 \leq 2^{t-1} + 1$. Particularly, if $v$ is infected at time 2 and has a neighbor infected at time 0 by $H$, then $|C_{H,v,t}| \leq 3$.\\

We will start the proof by demonstrating the following fact:

\begin{fact}
\label{fato1}
There is a hull set $S'\supseteq S$ and a vertex $v\in N(u)$ infected at time 2 by $S$, such that $u$ is also infected at time 3 by $S'$, $v$ is also infected at time 2 by $S'$ and $v$ has a neighbor in $S'$.
\end{fact}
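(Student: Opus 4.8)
## Proof Proposal for Fact \ref{fato1}

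The plan is to start from the given hull set $S$ and the vertex $u$ with $t(G,S,u)=3$, and to carefully massage $S$ — adding vertices only from $N_{\geq 3}(u)$ or from strategically chosen locations — so that we land on a neighbor $v$ of $u$ that (a) is infected at time exactly $2$, and (b) has an honest neighbor already in the starting set. The first observation is that since $u$ is infected at time $3$, it has at least two neighbors infected by time $2$ and at most one neighbor infected by time $\leq 1$; pick any neighbor $v$ of $u$ with $t(G,S,v)=2$. The danger is that \emph{all} of $v$'s time-$2$ infection could come from vertices that are themselves infected at time $1$ (not time $0$), so $v$ need not have a neighbor in $S$ itself. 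The job of the Fact is to fix exactly this.

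First I would handle the easy case: if some neighbor $v$ of $u$ with $t(G,S,v)=2$ already has a neighbor in $S$, take $S'=S$ and we are done (we may also throw in $N_{\geq 3}(u)$ using Lemma \ref{lemadistanciak} to be safe, which does not change any infection time of $u$). So assume no such $v$ exists. Then I would look at a neighbor $v$ of $u$ infected at time $2$ and one of its neighbors $w$ with $t(G,S,w)=1$; $w$ in turn has two neighbors in $S$, say $w_1,w_2\in S$. The idea is to add to $S$ a vertex that directly infects $v$ at time $0$ or $1$ without accelerating $u$. A natural candidate is $w$ itself: let $S'=S\cup\{w\}$. Now $w$ is a neighbor of $v$ in $S'$, so $v$ has a neighbor in $S'$. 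The key checks are then: (i) $v$ is still infected at time $2$ by $S'$ — it cannot get infected at time $1$ because its only time-$0$ neighbor in $S'$ is $w$ (any other neighbor of $v$ in $S$ would have put $v$ in $S_{(1)}$ already, contradicting $t(G,S,v)=2$, and here we are in the case where $v$ has no neighbor in $S$ at all); and (ii) $u$ is still infected at time $3$ by $S'$, not earlier. For (ii) I would argue that $w\in N_2(u)$, and check that adding a single vertex of $N_2(u)$ cannot drop $u$'s infection time below $3$: indeed $u$'s neighbors other than $v$ that were infected at time $\geq 2$ remain infected at time $\geq 2$ by Lemma \ref{lemadistanciak} (they are at distance $\geq 2$ from $w$ since the graph structure / bipartiteness-type argument forces it), and $u$ has at most one neighbor infected at time $\leq 1$; so $u\notin S'_{(2)}$, giving $t(G,S',u)\geq 3$, while $S\subseteq S'$ and Lemma \ref{corolariosubset} give $t(G,S',u)\leq 3$. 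Finally, since $t(G,S,v)=2$ and we have not accelerated $v$, $t(G,S',v)=2$ as well.

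I expect the main obstacle to be the distance bookkeeping in step (ii): one must rule out that inserting $w$ (or whichever auxiliary vertex is chosen) creates a shortcut that infects $u$ at time $2$, and also that it does not infect $v$ at time $1$. Both reduce to showing the newly added vertex is ``far enough'' from $u$ and sits in the right place relative to $v$, which is precisely the regime where Lemma \ref{lemadistanciak} applies, but one has to be careful that after adding $w$ the set of time-$\geq 2$ neighbors of $u$ does not shrink to a single vertex. A secondary subtlety is the choice of $v$ and $w$ when $u$ has several time-$2$ neighbors: I would fix $v$ to be one whose time-$2$ status is ``robust'' (e.g.\ choose $v$ and $w$ so that $w$ has a neighbor $z\neq v$ with $t(G,S,z)\leq 1$, which is what later parts of the proof seem to exploit), but for the Fact itself the weaker statement above suffices. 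Once Fact \ref{fato1} is in hand, the remainder of the Claim will follow by replacing $S$ with this $S'$ and bounding $|C_{S',u,3}|$ using the refined estimate $|C_{S',u,3}|\leq |C_{S',v,2}|+1\leq 3+1=4$, since $v$ now has a neighbor in $S'$.
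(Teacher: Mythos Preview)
Your overall plan coincides with the paper's: in the easy case take $S'=S$, and otherwise pick a time-$1$ neighbor $w$ of $v$ and set $S'=S\cup\{w\}$. The check that $v$ stays at time $2$ is fine.

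The gap is in your justification of step (ii). Fact~\ref{fato1} lives inside Claim~\ref{claimmf}, which is stated for \emph{general} graphs (Lemma~\ref{lema-geral3}), not bipartite ones, so your ``bipartiteness-type argument'' that every time-$\geq 2$ neighbor $z$ of $u$ is at distance $\geq 2$ from $w$ is simply false in general: such a $z$ may well be adjacent to $w$, and then Lemma~\ref{lemadistanciak} does not apply. The paper does not use Lemma~\ref{lemadistanciak} here at all; instead it argues directly that every $z\in N(u)$ with $t(G,S,z)\geq 2$ has at most one neighbor in $S'=S\cup\{w\}$. Indeed, if $t(G,S,z)=2$ then by the case hypothesis $z$ has no neighbor in $S$, hence at most one (namely $w$) in $S'$; and if $t(G,S,z)\geq 3$ then $z$ cannot be adjacent to both $w$ (infected at time $1$ by $S$) and some vertex of $S$, for that would force $t(G,S,z)\leq 2$. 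Either way $z$ has at most one neighbor in $S'$, so $t(G,S',z)\geq 2$, and hence $t(G,S',u)\geq 3$. Replace your distance argument with this neighbor-counting argument and the proof goes through.

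A minor side remark: your closing sentence overstates what Fact~\ref{fato1} buys. It yields $|C_{S',v,2}|\leq 3$, but the bound $|C_{S',u,3}|\leq 4$ does not follow immediately; the paper still needs a further case analysis (and a companion Fact~\ref{fato2}) to finish Claim~\ref{claimmf}.
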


\begin{proof}
Let $D$ be the set of vertices in $N(u)$ that are infected at time 2 by $S$. If there is a vertex $v \in D$ that has a neighbor in $S$, then just let $S' = S$ and we are done.

If not, then no vertex in $D$ has a neighbor in $S$. Let $v$ be any vertex in $D$, and $w$ be a vertex in $N(v) \cap N_2(u)$ that is infected at time 1 by $S$. Let $S' = S \cup \{w\}$. We have that:
\begin{itemize}
\item For every vertex $z$ in $D$, $z$ does not have neighbors in $S$. Thus, $z$ will have, at most, one neighbor in $S'$ because $S' = S \cup \{w\}$. Therefore, $z$ is infected at time $\geq 2$ by $S'$.

\item For every vertex $z$ adjacent to $u$ and infected at time $t$ by $S$, where $t \geq 3$, we have that $z$ cannot be adjacent to $w$ and another vertex in $S$ because, otherwise, it would be infected at time $\leq 2$ by $S$. Therefore, $z$ has at most one neighbor in $S'$ and, hence, is infected by $S'$ at time $\geq 2$.
\end{itemize}

Thus, we have that $t(G,S',u) \geq t(G,S,u)$, and, since $S \subseteq S'$, we also have that, by Lemma \ref{corolariosubset}, $t(G,S',u) \leq t(G,S,u)$. Therefore, $t(G,S',u) = t(G,S,u) = 3$. Additionally, we have that the vertex $v$, which is infected at time 2 by $S'$, as we have shown before, has a neighbor in $S'$. Since $S \subseteq S'$, then we have that $S'$ is the hull set we are looking for.
\end{proof}

Thus, let $S'$ be the hull set as described in the Fact \ref{fato1}. Since $v$ has one neighbor in $S'$, we have that $|C_{S',v,2}| \leq 3$.

Suppose that there is a vertex $w$ in $N(u) \cap S'$, let $F = C_{S',v,2} \cup \{w\}$ and $R = S'$. Since $t(G,F,u) \leq 3$, $|F| \leq 4$, $F \subseteq R$ and $t(G,R,u) = 3$, then $F$ and $R$ are the sets that we are looking for. 

Now, suppose that there is a vertex $w$ in $N(u)$ such that $t(G,S',w) = 1$. If $w$ is adjacent to some vertex $z \in N(u)$ such that $z$ has a neighbor $s$ in $S'$, then let $F = C_{S',w,1} \cup \{s\}$ and $R = S'$. Since $t(G,F,s) = 0$ and $t(G,F,w) = 1$ then $t(G,F,z) \leq 2$ and, therefore, $t(G,F,u) \leq 3$. Since $t(G,F,u) \leq 3$, $|F| \leq 4$, $F \subseteq R$ and $t(G,R,u) = 3$, then $F$ and $R$ are the sets that we are looking for. However, if $w$ has no neighbor in $N(u)$ that has a neighbor in $S'$, let $F = C_{S',v,2} \cup \{w\}$ and $R = S' \cup \{w\}$. We have that $t(G,F,u) \leq 3$ because $t(G,F,v) \leq 2$ and $t(G,F,w) = 0$. Additionally, since all vertices in $N(u)-\{w\}$ are infected by $S'$ at time $\geq 2$ and all vertices in $N(u)$ that are adjacent to $w$ are not adjacent to any vertex in $S'$, then all vertices in $N(u)-\{w\}$ have at most one neighbor in $R$. Thus, all vertices in $N(u)-\{w\}$ are infected at time $\geq 2$. Therefore, we have that $t(G,R,u) \geq 3$, and, since $t(G,F,u) \leq 3$, $|F| \leq 4$, $F \subseteq R$ and $t(G,R,u) \geq 3$, then $F$ and $R$ are the sets that we are looking for. 

Henceforth, we will consider only the cases where all vertices in $N(u)$ are infected at time $\geq 2$ by $S'$.

Suppose that there is a vertex $v'$ in $N(u)-\{v\}$ that has no neighbors in $N(u)$ that has some neighbor in $S'$. In this case, let $F = C_{S',v,2} \cup \{v'\}$ and $R = S' \cup \{v'\}$. We have that $t(G,F,u) \leq 3$ because $t(G,F,v) \leq 2$ and $t(G,F,v') = 0$. Additionally, since all vertices in $N(u)-\{v'\}$ are infected by $S'$ at time $\geq 2$ and all vertices in $N(u)$ that are adjacent to $v'$ are not adjacent to any vertex in $S'$, then all vertices in $N(u)-\{v'\}$ have at most one neighbor in $R$. Since all vertices in $N(u)$ are infected by $S'$ at time $\geq 2$, then all vertices in $N(u)-\{v'\}$ are infected by $R$ at time $\geq 2$. Therefore, we have that $t(G,R,u) \geq 3$, and, since $t(G,F,u) \leq 3$, $|F| \leq 4$, $F \subseteq R$ and $t(G,R,u) \geq 3$, then $F$ and $R$ are the sets that we are looking for. 

From now on, suppose that every vertex in $N(u)-\{v\}$ has a neighbor vertex in $N(u)$ that has a neighbor in $S'$. Let us show the following fact:

\begin{fact}
\label{fato2}
There is a hull set $S'' \supseteq S'$ that infects $u$ at time 3 and there is a vertex $v' \in N(u) - \{v\}$ that is infected at time 2 by $S'$ such that $v'$ is infected at time 2 by $S''$ and has a neighbor $w$ in $S''$.
\end{fact}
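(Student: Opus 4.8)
The plan is to first locate at least two neighbours of $u$ infected at time exactly $2$ by $S'$, and then, unless one of them other than $v$ already touches $S'$, to add a single well-chosen vertex at distance $2$ from $u$ so that such a neighbour gains a neighbour in the hull set without changing the infection time of $u$. Concretely, set $D=\{z\in N(u):t(G,S',z)=2\}$. Since $u\in S'_{(3)}\setminus S'_{(2)}$, at least two neighbours of $u$ lie in $S'_{(2)}$, and as every neighbour of $u$ is infected at time $\geq 2$ by $S'$, each of these lies in $D$; hence $|D|\geq 2$. Since $v\in D$, we may choose $v'\in D\setminus\{v\}$.

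\textbf{Case 1: some vertex of $D\setminus\{v\}$ has a neighbour in $S'$.} Take such a vertex as $v'$ and put $S''=S'$; then $S''\supseteq S'$ is a hull set, $t(G,S'',u)=3$, $v'\in D$ is infected at time $2$, and $v'$ has a neighbour in $S'\subseteq S''$, which is what the fact requires. \textbf{Case 2: no vertex of $D\setminus\{v\}$ has a neighbour in $S'$.} Pick any $v'\in D\setminus\{v\}$. It has no neighbour in $S'$, so, being infected at time $2$, it has at least two neighbours infected at time exactly $1$ by $S'$. Each of these is at distance $\leq 2$ from $u$, is not $u$, and is not a neighbour of $u$ (those are infected at time $\geq 2$ by $S'$), hence lies in $N_2(u)$; let $w$ be one of them and set $S''=S'\cup\{w\}$, a hull set containing $S'$ in which $v'$ now has the neighbour $w$.

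It then remains to verify that $v'$ is still infected at time exactly $2$ by $S''$ and that $t(G,S'',u)=3$. For the first point: $v'$ has exactly one neighbour in $S''$, namely $w$, so $v'\notin S''_{(1)}$, while at time $1$ both $w$ and the other time-$1$ neighbour of $v'$ (still infected at time $\leq 1$ by $S''$, by Lemma \ref{corolariosubset}) are infected, so $v'\in S''_{(2)}$. For the second point: $t(G,S'',u)\leq 3$ by Lemma \ref{corolariosubset}, and for the reverse inequality I would show that at most one neighbour of $u$ lies in $S''_{(1)}$. No neighbour of $u$ belongs to $S''$ (none lies in $S'$, and $w\notin N(u)$); and if a neighbour $z$ of $u$ lay in $S''_{(1)}$, then, having at most one neighbour in $S'$ (since $z\notin S'_{(1)}$), it would have exactly one neighbour in $S'$ together with $w$, so in the $S'$-process $z$ would be infected by time $2$ through that $S'$-neighbour (time $0$) and $w$ (time $1$); hence $z\in D$ and $z$ has a neighbour in $S'$, which forces $z=v$ by the Case 2 hypothesis. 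Therefore $u\notin S''_{(2)}$, so $t(G,S'',u)\geq 3$, and $S''$, $v'$, $w$ witness the fact.

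The main obstacle is precisely this last point of Case 2: ruling out that inserting $w$ creates a shortcut that accelerates the infection of $u$ to time $\leq 2$. What makes it work is the observation that any neighbour of $u$ newly pushed into $S''_{(1)}$ must already have belonged to $D$ and had a neighbour in $S'$, and the Case 2 dichotomy pins the only such vertex down to be $v$; everything else is routine monotonicity bookkeeping via Lemma \ref{corolariosubset} (and, if one prefers to phrase the distance arguments that way, Lemma \ref{lemadistanciak}).
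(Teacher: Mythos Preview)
Your proof is correct and follows essentially the same approach as the paper's: both handle the trivial case where some vertex of $D\setminus\{v\}$ already touches $S'$, and otherwise add a single time-$1$ neighbour $w\in N_2(u)$ of a chosen $v'\in D\setminus\{v\}$, then verify that $u$ is still infected at time $3$ and $v'$ at time $2$. The only cosmetic difference is in the final verification: the paper separately argues that vertices in $D\setminus\{v\}$ and vertices in $N(u)$ with $S'$-time $\geq 3$ each keep at most one neighbour in $S''$, whereas you argue the contrapositive directly (any $z\in N(u)\cap S''_{(1)}$ must lie in $D$ with a neighbour in $S'$, hence equals $v$); these are the same argument reorganised.
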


\begin{proof}
This proof is similar to the proof of the Fact \ref{fato1}. If there is a vertex $v'$ in $N(u)-\{v\}$ that is infected at time 2 by $S'$ and that has a neighbor in $S'$, then, letting $S'' = S'$, $S''$ is the set we are looking for. Then, let us assume that there is no vertex in $N(u) - \{v\}$ that is infected at time 2 by $S'$ and that has a neighbor in $S'$. Then, since all vertices in $N(u)$ are infected by $S'$ at time $\geq 2$ and $t(G,S',u) = 3$, there must be at least one vertex in $N(u) - \{v\}$ infected at time 2 by $S'$. Let $v'$ be such vertex. Let $S'' = S' \cup \{w\}$, where $w$ is in $N(v')$ and is infected at time 1 by $S'$. We have that $w \in N_2(u)$ because, otherwise, there would be a vertex infected by $S'$ at time 1 in $N(u)$. Let $D$ be the set of vertices in $N(u)$ that are infected at time 2 by $S'$. We have that:

\begin{itemize}
\item For every vertex $z$ in $D - \{v\}$, $z$ does not have neighbors in $S'$. Thus, $z$ will have at most one neighbor in $S''$, because $S'' = S' \cup \{w\}$, and, hence, $t(G,S'',z) \geq 2$.

\item For every vertex $z$ infected at time $t$, where $t \geq 3$, by $S'$, we have that $z$ cannot be adjacent to $w$ and another vertex in $S'$ because, otherwise, it would be infected at time $\leq 2$ by $S'$. Therefore, $z$ has at most one neighbor in $S''$ and, hence, is infected by $S''$ at time $\geq 2$.
\end{itemize}

Tus, we have that $v$ is the only vertex that can be infected at time $\leq 1$ by $S''$. Also, since all vertices in $N(u)-\{v\}$ are infected at time $\geq 2$ by $S''$, we have that $t(G,S'',u) \geq 3$. Since $S' \subseteq S''$, then, by Lemma \ref{corolariosubset}, we also have that $t(G,S'',u) \leq 3$ and, thus, $t(G,S'',u) = 3$. Also, by Lemma \ref{corolariosubset}, $t(G,S'',v') = 2$. Then, since $t(G,S'',w) = 0$, $S''$ is the set that we are looking for.

\end{proof}

Thus, suppose that $v$ is infected at time 1 by $S''$. Since $v$ is infected at time $2$ by $S'$ and 1 by $S''$, then $v$ must be adjacent to $w$ and to some other vertex $r$, which is in $S'$ and, therefore, it is also in $S''$. Let $x$ be a vertex in $N(v')$ that is infected at time 1 by $S''$. Let $F = \{r,w\} \cup C_{S'',x,1}$ and $R = S''$. Since $t(G,F,w) = 0$ and $t(G,F,x) = 1$, then $t(G,F,v') = 2$. Hence, $t(G,F,u) \leq 3$ because $t(G,F,v') \leq 2$ and, since $t(G,F,r) = 0$ and $t(G,F,w) = 0$, $t(G,F,v) = 1$. Thus, since $t(G,F,u) \leq 3$, $|F| \leq 4$, $F \subseteq R$ and $t(G,R,u) = 3$, then $F$ and $R$ are the sets that we are looking for.

Thus, henceforth, suppose that $v$ is not infected at time $\leq 1$ by $S''$. Since, in the proof of the Fact \ref{fato2}, we showed that $v$ was the only vertex that could be infected by $S''$ at time $\leq 1$, then, there is no vertex in $N(u)$ infected at time $\leq 1$ by $S''$. Let $x$ be a vertex in $N_2(u) \cap N(v')$ that is infected at time 1 by $S''$.

Also, if there is a vertex $z \in N(u) - \{v'\}$ that is adjacent to $x$ and to some vertex in $q \in S''$, we have that $z$ is infected at $\leq 2$ by $S''$. Since, we assumed previously that there is no vertex in $N(u)$ that is infected at time $\leq 1$ by $S''$, we have that $z$, in fact, is infected at time 2 by $S''$. Let $F = \{q,w\} \cup C_{S'',x,1}$ and $R = S''$. We have that $t(G,F,z) \leq 2$ because $t(G,F,q) = 0$ and $t(G,F,x) \leq 1$. Also, $t(G,F,v') \leq 2$ because $t(G,F,w) = 0$ and $t(G,F,x) \leq 1$. Therefore, $t(G,F,u) \leq 3$. Since, we have that $q,w \in S''$, then $F \subseteq R$, and, therefore, since $t(G,F,u) \leq 3$, $|F| \leq 4$, $F \subseteq R$ and $t(G,R,u) = 3$, then $F$ and $R$ are the sets that we are looking for.

If, however, there is no vertex in $N(u)-\{v'\}$ that is adjacent to $x$ and to some vertex in $S''$, let $R = S'' \cup \{x\}$. Since there is no vertex in $N(u) \cap S''$ and there is no vertex in $N(u)$, except $v'$, adjacent to two vertices in $S'' \cup \{x\}$, then all vertices in $N(u) - \{v'\}$ are infected by $R$ at time $\geq 2$. Thus, $t(G,R,v') = 1$ and $t(G,R,u) \geq 3$. Let $q \in N(u)$ be the neighbor of $v'$ that has a neighbor $c \in S'$. Since  $S' \subseteq S'' \subseteq R$, then $c \in R$. Also, let $F = C_{R,v',1} \cup \{c\}$. We have that $t(G,R,q) \leq 2$ because $t(G,F,v') = 1$ and $t(G,F,c) = 0$ and, therefore, since $t(G,R,q) \leq 2$ and $t(G,F,v') = 1$, $t(G,F,u) \leq 3$. Since $t(G,F,u) \leq 3$, $|F| \leq 4$, $F \subseteq R$ and $t(G,R,u) \geq 3$, then $F$ and $R$ are the sets that we are looking for.
\end{proof}

\subsection{Proof of Claim \ref{claimbmf}}

\claimbmf*
\begin{proof}
First, let us start by defining the set $C_{H,v,t}$. For any vertex $v$ infected by some hull set $H$ at time $t$, let us define $C_{H,v,t}$ to be any minimal subset of $H$ such that the set $C_{H,v,t}$ infects $v$ at time $t$. In general, if $v$ is infected at time $t$ by $H$, then $|C_{H,v,t}| \leq 2^t$ because every vertex needs only two infected neighbors to be infected at any time $\geq 1$. Additionally, if $v$ is infected at time $t \geq 1$ and has a neighbor vertex infected at time 0 by $H$, since $v$ must have a neighbor $v'$ infected at time $t-1$, then, we have that $|C_{H,v,t}| \leq |C_{H,v',t-1}| + 1 \leq 2^{t-1} + 1$.\\

Let us begin the proof by proving the following fact:
\begin{fact}
There is a hull set $Z \supseteq S$ that infects $x$ at time 4 and some vertex $v \in N(x)$ at time 3 such that $|C_{Z,v,3}| \leq 4$.
\end{fact}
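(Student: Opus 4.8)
The plan is to first pin down the neighbour of $x$ we will use, then to enlarge $S$ just enough to make the witnessing infection of that neighbour short, while controlling the side effects with Lemmas \ref{lemadistanciak} and \ref{corolariosubset}.

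\emph{Locating $v$ and $v'$.} Since $t(G,S,x)=4$, the vertex $x$ has at most one neighbour in $S_{(2)}$ but at least two neighbours in $S_{(3)}$, so some neighbour $v$ of $x$ satisfies $t(G,S,v)=3$; fix such a $v$. Being infected at time $3$, $v$ has at most one neighbour infected by $S$ at time $\le 1$ and at least one neighbour $v'$ with $t(G,S,v')=2$. Because $G$ is bipartite and $v\in N(x)$, the vertex $v'$ lies at even distance from $x$, and it is not $x$ itself (as $x$ is infected at time $4\neq 2$), so $v'\in N_2(x)$.

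\emph{What we need.} By the bounds on the sets $C_{H,\cdot,\cdot}$ recalled at the start of the proof, if $v'$ has a neighbour in a hull set $Z$ then $|C_{Z,v',2}|\le 2+1=3$, and if in addition $v$ has a neighbour in $Z$ then, using $v'$ as the time-$2$ predecessor of $v$, $|C_{Z,v,3}|\le |C_{Z,v',2}|+1\le 4$. Hence it suffices to construct a hull set $Z\supseteq S$ with $t(G,Z,x)=4$, $t(G,Z,v)=3$, $t(G,Z,v')=2$, in which \emph{both} $v$ and $v'$ have a neighbour. Note that $Z\supseteq S$ is automatically a hull set, and, by Lemma \ref{corolariosubset}, automatically satisfies $t(G,Z,x)\le 4$ and $t(G,Z,v)\le 3$; so the real content is to add vertices to $S$ without speeding up the infection of $x$ or of $v$.

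\emph{The construction.} If $v$ and $v'$ already have neighbours in $S$, take $Z=S$. Otherwise we augment $S$, imitating Fact \ref{fato1} in the proof of Claim \ref{claimmf}, but applied at two levels: when $v'$ has no neighbour in $S$ we add a neighbour $v''$ of $v'$ with $t(G,\cdot,v'')=1$, and when (after this) $v$ has no neighbour in $S$ we add a suitable early-infected neighbour of $v$; in each case we also re-add $N_{\ge 4}(x)$ if needed. Bipartiteness controls the distances of the added vertices from $x$ (a time-$1$ predecessor of $v'$ lies in $N(x)\cup N_3(x)$, an early predecessor of $v$ lies in $N_2(x)$), so Lemma \ref{lemadistanciak} keeps the neighbours of $x$ that were infected at time $\ge 3$ infected at time $\ge 3$, keeps $v$ infected at time $\ge 3$, and keeps $v'$ infected at time $2$; Lemma \ref{corolariosubset} then gives the matching upper bounds. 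The iteration uses at most two additions, after which $Z$ has the required structure and $|C_{Z,v,3}|\le 4$.

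\emph{The main obstacle.} The delicate point is exactly the one already met in the second case of Fact \ref{fato1} and in the proof of Lemma \ref{lema-bip4}: when the vertex we want to add sits at distance $2$ from $x$ (a predecessor of $v$), or when a predecessor of $v'$ that we want to add is itself a neighbour of $x$, Lemma \ref{lemadistanciak} no longer applies and adding it could in principle make $x$ percolate before time $4$. In those sub-cases one must argue directly, using that $G$ is bipartite and that $S$ already contains every degree-one vertex and the relevant degree-two vertices — this is where the hypothesis $S\supseteq T_0\in\Gamma^u_0$ is used, exactly as in Claim \ref{claimtduplo} — to show either that the dangerous neighbour of $x$ is already in $S$ (so nothing changes) or that the sub-case cannot occur. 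Carrying out this case analysis is the bulk of the work; once it is settled, the layered counting of the previous step yields $|C_{Z,v,3}|\le 4$ and the Fact follows.
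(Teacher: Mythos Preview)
Your proposal is a sketch rather than a proof: you explicitly defer ``the bulk of the work'' to an unperformed case analysis. That deferred part is precisely the content of the Fact, so what remains after removing it is just the easy observation that if $v$ and some time-$2$ neighbour $v'$ of $v$ each have a neighbour in the hull set, then $|C_{Z,v,3}|\le 4$. The paper spends essentially its entire proof on the part you skip.

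More seriously, the mechanism you propose for the hard case is wrong. You claim that the obstacle is handled ``using that \ldots $S\supseteq T_0\in\Gamma^u_0$ \ldots exactly as in Claim \ref{claimtduplo}''. But the paper's proof of this Fact never uses $T_0$, $\Gamma^u_0$, or Claim \ref{claimtduplo}; that machinery only enters later, in the sufficiency direction of Lemma \ref{lema-bip4}. The actual leverage is different: at each stage the paper works under the running hypothesis that \emph{no} vertex $q\in N(x)$ infected at time $3$ satisfies $|C_{\cdot,q,3}|\le 4$, and shows that if the added vertex accelerated some $q\in N(x)$ to time $\le 2$, one could exhibit such a $q$ with $|C_{\cdot,q,3}|\le 4$ already --- a contradiction. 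This is a self-referential argument, not a $T_0$-style co-convexity argument.

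Finally, your decision to fix $v$ (and $v'$) arbitrarily at the outset is itself a gap. The paper does not do this: it chooses $v$ via a cascading case distinction (first a time-$3$ neighbour of $x$ with a neighbour in $S$; failing that, one with a time-$1$ neighbour \emph{and} a time-$2$ neighbour having a neighbour in $S$; failing that, one with a time-$1$ neighbour; failing that, an arbitrary one with a time-$2$ neighbour $r$). This hierarchy is what makes the subsequent argument go through --- for instance, when $r$ is chosen at time $2$, the argument explicitly uses that the earlier cases were empty to rule out problematic neighbours of $x$. With an arbitrary initial choice of $v$, adding your $v''$ can genuinely drop some other $w\in N(x)$ to time $\le 2$ without any contradiction available.
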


\begin{proof}
If there is some vertex $v \in N(x)$ infected at time 3 by $S$ such that $|C_{S,v,3}| \leq 4$, let $Z = S$ and we are done. So, assume that that is not the case.

Let us prove that there is a hull set $Z \supseteq S$ that infects $x$ at time 4 and some vertex $v \in N(x)$ at time 3 such that $v$ has a neighbor in $Z$ and another neighbor $s$ that is infected at time 2 and that has also a neighbor in $Z$. This should be enough because, if $s$ has a neighbor in $Z$, then $|C_{Z,s,2}| \leq 3$ and, if $v$ also has a neighbor in $Z$, then $|C_{Z,v,3}| \leq |C_{Z,s,2}| + 1 \leq 3 + 1 = 4$.

Since there is no vertex $v \in N(x)$ infected at time 3 by $S$ such that $|C_{S,v,3}| \leq 4$, then there is no vertex $v \in N(x)$ that is infected at time 3 by $S$ such that $v$ has a neighbor in $S$ and another neighbor $s$ that is infected at time 2 and has a neighbor in $S$.

So, if there is a vertex in $N(x)$ infected at time 3 by $S$ that has a neighbor in $S$, let $v$ be that vertex and let $P = S$. If not, let $v$ be any vertex in $N(x)$ that is infected at time 3 by $S$, has a neighbor $r$ infected at time 1 by $S$ and has another neighbor $s$ infected at time 2 by $S$ that has a neighbor in $S$. If there is no such vertex $v$, let $v$ be any vertex in $N(x)$ that is infected at time 3 by $S$ and has a neighbor $r$ infected at time 1 by $S$, or, if there is no such vertex $v$, let $v$ be any vertex in $N(x)$ that is infected at time 3 by $S$ and let $r$ be any vertex adjacent to $v$ infected at time 2 by $S$. Let $P = S \cup \{r\}$. 

We have that, since $G$ is bipartite, all vertices infected at time $\geq 2$ by $S$ that are adjacent to any vertex in $N(x)$ are at distance $\geq 2$ of $r$, except for $r$ itself, and, therefore, by Lemma \ref{lemadistanciak}, all vertices infected at time $\geq 2$ by $S$ that are adjacent to any vertex in $N(x)$ are infected at time $\geq 2$ by $P$. 

Additionally, if $r$ is infected at time 2 by $S$, by the choice of $r$, there is no vertex in $N(x)$ infected at time 3 by $S$ that has one neighbor infected at time $\leq 1$ by $S$ and, since all vertices infected at time $\geq 4$ by $S$ cannot be adjacent to $r$ and another vertex infected at time $\leq 1$ by $S$, then, if $r$ is infected at time 2 by $S$, all vertices in $N(x)$ that are infected at time $\geq 3$ by $S$ have at most one neighbor infected at time $\leq 1$ by $P$.

If $r$ is infected at time 1 by $S$, then all vertices in $N(x)$ infected at time $\geq 3$ by $S$ cannot be adjacent to $r$ and another vertex infected at time $\leq 1$ by $S$ because, otherwise, they would be infected at time $\leq 2$ by $S$. Then, if $r$ is infected at time 1 by $S$, all vertices in $N(x)$ that are infected at time $\geq 3$ by $S$ have at most one neighbor infected at time $\leq 1$ by $P$.

Thus, whichever the infection time of $r$ by $S$, we have that all vertices in $N(x)$ that are infected at time $\geq 3$ by $S$ have at most one neighbor infected at time $\leq 1$ by $P$. Thus, all vertices in $N(x)$ that are infected at time $\geq 3$ by $S$ are infected at time $\geq 3$ by $P$. Hence, $v$ is infected at time $\geq 3$ and $x$ at time $\geq 4$ by $P$. Thus, by Lemma \ref{corolariosubset}, $v$ is infected at time 3 and $x$ at time 4 by $P$.

Thus, there is a vertex $v$ in $N(x)$ that is infected at time 3 by $P$ such that $v$ has a neighbor that is infected at time 0 by the hull set $P$ and $x$ is infected at time 4 by $P$. At this point, if there is some vertex $v'$ in $N(x)$ such that $P$ infects $v'$ at time 3 and $|C_{P,v',3}| \leq 4$, let $Z = P$ and we are done. If not, henceforth, assume that there is not such vertex $v'$.

Note that if $v$ has a neighbor $r$ infected at time 1 by $S$ and has another neighbor infected at time 2 by $S$ that has a neighbor in $S$, then $|C_{P,v,3}| \leq 4$. Thus, since there is no vertex $v'$ in $N(x)$ such that $P$ infects $v'$ at time 3 and $|C_{P,v',3}| \leq 4$, and all vertices infected at time $\geq 2$ by $S$ and adjacent to some vertex in $N(x)$ are infected at time $\geq 2$ by $P$, then there is no vertex in $N(x)$ infected at time 3 by $P$ that has a neighbor infected at time 1 by $P$ and has another neighbor infected at time 2 by $P$ that has a neighbor in $P$.

Thus, we have that all vertices in $N(x)$ infected at time 3, including $v$, that are neighbor of some vertex in $P$ do not have a neighbor infected at time 2 that also has a neighbor in $P$. Then, let $s$ be any vertex adjacent to $v$ that is infected at time 2 by $P$. Since $v$ has a neighbor in $P$, we have that $s$ does not have a neighbor in $P$. Let $z$ be any vertex adjacent to $s$ that is infected at time 1 by $P$. Let $Z = P \cup \{z\}$. Let us prove that $Z$ infects $x$ at time 4, $v$ at time 3 and $s$ at time 2. Using the Lemma \ref{corolariosubset}, we have that is enough to prove that $Z$ infects $x$ at time $\geq 4$, $v$ at time $\geq 3$ and $s$ at time $\geq 2$. First, since $s$ has no neighbor in $P$, then it has at most one neighbor in $Z$. Therefore $s$ is infected at time $\geq 2$ by $Z$. Now, assume, by contradiction, that $Z$ infects some vertex $q \in N(x)$ that is infected at time $\geq 3$ by $P$ at time $\leq 2$. Since $q \neq z$ and $P$ infects $q$ at time 3 and $Z$ infects $q$ at time $\leq 2$, by Lemma \ref{lemadistanciak}, the vertices $z$ and $q$ are at distance either 1 or 2 of each other, but, since $G$ is bipartite, $z$ and $q$ are at distance 2 of each other. Therefore, since $q$ cannot have a neighbor infected at time 1 by $P$ and another neighbor $q'$ that is infected at time $\geq 2$ by $P$ that has a neighbor in $P$, we have that only two cases can happen.

In the first case, the vertex $q$ has a neighbor in $Z$ and another neighbor $q'$ that is infected at time $\geq 2$ by $P$ and at time 1 by $Z$. In this case, we have that $z$ is adjacent to $q'$ and, since $Z = P \cup \{z\}$, $q'$ must have a neighbor in $P$. Since $z$ and $q$ are at distance 2 of each other, then the vertex adjacent to $q$ in $Z$ cannot be $z$ and, therefore, $q$ has a neighbor in $P$. Thus, since $z$ is infected at time 1 by $P$ and $q$ and $q'$ have a neighbor in $P$, then, in fact, $q'$ is infected at time 2 and $q$ is infected at time 3 by $P$. Thus, since $q'$ has a neighbor in $P$, then $|C_{P,q',2}| \leq 3$. Also, since $q$ has a neighbor in $P$, we have that $|C_{P,q,3}| \leq |C_{P,q',2}| + 1 \leq 3 + 1 = 4$, which is a contradiction because we assumed that there is no vertex $v$ in $N(x)$ that is infected at time 3 by $P$ such that $|C_{P,v,3}| \leq 4$.

In the second, case we have that $q$ has two neighbors $q_1$ and $q_2$ that are infected at time $\geq 2$ by $P$ and at time 1 by $Z$. In this case, we have that $z$ is adjacent to both $q_1$ and $q_2$ and, also, we have that both $q_1$ and $q_2$ have a neighbor in $Z$ that is different from $z$, implying that both $q_1$ and $q_2$ have a neighbor in $P$. Thus, since $q_1$ and $q_2$ have a neighbor in $P$ and $z$ is infected at time 1 by $P$, then $q_1$ and $q_2$ are infected at time 2 by $P$ and, therefore, $q$ is infected at time 3 by $P$. So, since $z$ is adjacent to both $q_1$ and $q_2$, $z$ is infected by $P$ at time 1 and both $q_1$ and $q_2$, which are infected at time 2 by $P$, have a neighbor in $P$, then $|C_{P,q,3}| \leq |C_{P,q_1,2}| + |C_{P,q_2,2}| \leq |C_{P,z,1}| + 1 + 1 = 4$, which is a contradiction because we assumed that there is no vertex $v$ in $N(x)$ infected at time 3 by $P$ such that $|C_{P,v,3}| \leq 4$.

Thus, we have that $Z$ infects at time $\geq 3$ all vertices in $N(x)$ that are infected at time $\geq 3$ by $P$. Hence, $Z$ infects $v$ at time $\geq 3$ and $x$ at time $\geq 4$. Also, since $v$ and $s$ have a neighbor in $Z$, then $Z$ is the set we are looking for.
\end{proof}

Thus, there is a hull set $Z$ that is superset of $S$ that infects $x$ at time 4 and some vertex $v$ in $N(x)$ at time 3 such that $|C_{Z,v,3}| \leq 4$. Thus, if there is a vertex $v'$ in $N(x)$ that is infected at time $t \leq 2$ by $Z$, let $F = C_{Z,v,3} \cup C_{Z,v',t}$. Thus, we have that $|F| \leq |C_{Z,v,3}| + |C_{Z,v',t}| \leq 4 + 2^t \leq 8$. Since $F$ infects $v$ at time $\leq 3$ and $v'$ at time $\leq 2$, then $F$ infects $x$ at time $\leq 4$. Additionally, since $T_0 \cup N_{\geq 4}(u) \subseteq S \subseteq Z$ and $F \subseteq Z$, then $T_0 \cup N_{\geq 4}(u) \cup F \subseteq Z$. Since $x$ is infected by $Z$ at time 4 and by $F$ at time $\leq 4$, $|F| \leq 8$, and $T_0 \cup N_{\geq 4}(u) \cup F \subseteq Z$, letting $R = Z$, $R$ and $F$ are the sets we are looking for.

So, henceforth, assume that all vertices in $N(x)$ are infected at time $\geq 3$ by $Z$. Then, in this case, let us prove the following fact:

\begin{fact}
There is a hull set $R \supseteq Z$ and a vertex $v' \neq v$ in $N(x)$ that is infected at time 3 by $Z$ such that $R$ infects $x$ at time 4, infects $v$ at time $t \leq 3$, infects $v'$ at time 3, $|C_{R,v,t}| \leq 4$ and $|C_{R,v',3}| \leq 4$.
\end{fact}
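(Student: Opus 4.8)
The plan is to prove this fact by re-running, inside $N(x)\setminus\{v\}$, the same two-stage construction that was used for the preceding fact about $v$. I would begin with the structural observation that guarantees a candidate: since $Z$ infects $x$ at time $4$ and every vertex of $N(x)$ is infected at time $\ge 3$ by $Z$, the vertices infected at time $\le 3$ by $Z$ include at least two neighbours of $x$, so some $v'\in N(x)\setminus\{v\}$ has $t(G,Z,v')=3$. If this $v'$ already satisfies $|C_{Z,v',3}|\le 4$, I take $R=Z$: then $t(G,R,v')=3$, while the conditions on $v$ hold because $t(G,R,v)=t(G,Z,v)=3$ and $|C_{R,v,3}|=|C_{Z,v,3}|\le 4$ are already known.

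In the remaining case no vertex of $N(x)\setminus\{v\}$ infected at time $3$ by $Z$ has a certificate of size $\le 4$, and I would reproduce the stage-1 step: using the same priority cascade as before (first seek a $v'\in N(x)\setminus\{v\}$ with $t(G,Z,v')=3$ having a neighbour infected at time $1$ by $Z$ and a further neighbour infected at time $2$ by $Z$ with a neighbour in $Z$, then relax the requirements successively), pick an auxiliary vertex $r$ adjacent to the chosen $v'$ and infected at time $1$ or $2$ by $Z$, and set $P=Z\cup\{r\}$ (or $P=Z$ if the chosen $v'$ already has a neighbour in $Z$). The verification copies the preceding proof: as $G$ is bipartite and $r\in N_2(x)$, every vertex infected at time $\ge 2$ by $Z$ that is adjacent to some vertex of $N(x)$ lies on $x$'s side of the bipartition, hence is at distance $\ge 2$ from $r$, so by Lemma~\ref{lemadistanciak} it keeps infection time $\ge 2$ under $P$; and the cascade forbids any vertex of $N(x)$ infected at time $\ge 3$ by $Z$ from acquiring a second neighbour infected at time $\le 1$ under $P$. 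Hence $t(G,P,x)=4$, $t(G,P,v)\le 3$, and the chosen $v'$ still has $t(G,P,v')=3$, $t(G,Z,v')=3$, and a neighbour in $P$.

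For stage 2, if some $v'\in N(x)\setminus\{v\}$ with $t(G,Z,v')=3$ now satisfies $t(G,P,v')=3$ and $|C_{P,v',3}|\le 4$, take $R=P$. Otherwise, keeping the $v'$ from stage 1, choose a neighbour $s'$ of $v'$ infected at time $2$ by $P$ (which then has no neighbour in $P$, since $v'$ does), a neighbour $z'$ of $s'$ infected at time $1$ by $P$, and set $R=P\cup\{z'\}$. The crucial point is that $R$ speeds up no vertex $q\in N(x)$ with $t(G,P,q)\ge 3$ below time $3$, so in particular $t(G,R,x)=4$ and $t(G,R,v')=3$; by bipartiteness and Lemma~\ref{lemadistanciak} such a $q$ would be at distance $2$ from $z'$ and would fall into one of the two configurations of the preceding fact (either $q$ has a neighbour in $P$ and a neighbour $q'$ infected at time $2$ by $P$ having a neighbour in $P$, or $q$ has two neighbours each infected at time $2$ by $P$ with a neighbour in $P$), each of which yields $|C_{P,q,3}|\le 4$, contradicting the standing assumption. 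Finally $|C_{R,v',3}|\le|C_{R,s',2}|+1\le 3+1=4$ since $v'$ and $s'$ both have a neighbour in $R$; and the conditions on $v$ are free in every case, because $R\supseteq Z$ forces $t:=t(G,R,v)\le 3$ by Lemma~\ref{corolariosubset}, with $|C_{R,v,t}|\le 2^t\le 4$ when $t\le 2$ and $|C_{R,v,3}|\le|C_{Z,v,3}|\le 4$ when $t=3$.

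I expect the main obstacle to be the stage-2 verification that adding $z'$ does not accelerate the infection of $x$ -- equivalently, of any neighbour of $x$ infected at time $\ge 3$ by $P$ -- since this is the place where bipartiteness, Lemma~\ref{lemadistanciak}, and the minimality of the certificates must be combined into the two-case contradiction argument. Everything else is a routine transcription of the preceding fact with $N(x)$ replaced by $N(x)\setminus\{v\}$, together with the remark that tracking $v$ alongside $v'$ costs nothing.
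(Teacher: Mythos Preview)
Your approach is the paper's: re-run the two-stage construction of the preceding fact inside $N(x)\setminus\{v\}$, with the same priority cascade for choosing $v'$ and $r$, and the same two-case contradiction after adding $z'$. One imprecision to fix: in the verification steps you quantify over $q\in N(x)$, but the cascade and the standing ``no small certificate'' assumption only concern $N(x)\setminus\{v\}$, so the contradiction argument yields nothing when $q=v$; the paper handles this by restricting throughout to $N(x)\setminus\{v\}$ and observing that even if $v$ itself drops below time~$3$, it is the only neighbour of $x$ that can do so, whence $t(G,R,x)=4$ still holds.
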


\begin{proof}
To prove what we want, it is enough to prove that there is a hull set $R$ that infects $x$ at time 4 and some vertex $v' \in N(x)-\{v\}$ at time 3 such that $v'$ has a neighbor in $R$ and another neighbor $s$ that is infected by $R$ at time 2 and has a neighbor in $R$. This is because, since $v'$ and $s$ have a neighbor in $R$ then $|C_{R,s,2}| \leq 3$ and $|C_{R,v',3}| \leq |C_{R,s,2}| + 1 \leq 4$. Also, since $Z \subseteq R$ and $Z$ infects $v$ at time 3, then, by Lemma \ref{corolariosubset}, $R$ infects $v$ at time $t \leq 3$ and, since $v$ has a neighbor in $Z$, and consequently in $R$, then $|C_{R,v,t}| \leq 4$.

If there is a vertex $v' \in N(x)-\{v\}$ that is infected at time 3 by $Z$ such that $|C_{Z,v',3}| \leq 4$, let $R = Z$ and we are done. If not, henceforth, assume that there is not such vertex $v'$.

So, if there is a vertex in $N(x)-\{v\}$ infected at time 3 by $Z$ that has a neighbor in $Z$, let $v'$ be that vertex and let $P = Z$. If not, let $v'$ be any vertex in $N(x)-\{v\}$ that is infected at time 3 by $Z$, has a neighbor $r$ infected at time 1 by $Z$ and has another neighbor $s$ infected at time 2 by $Z$ that has a neighbor in $Z$. If there is no such vertex $v'$, let $v'$ be any vertex in $N(x)-\{v\}$ that is infected at time 3 by $Z$ and has a neighbor $r$ infected at time 1 by $Z$, or, if there is no such vertex $v'$, let $v'$ be any vertex in $N(x)-\{v\}$ that is infected at time 3 by $Z$ and let $r$ be any vertex adjacent to $v'$ infected at time 2 by $Z$. Let $P = Z \cup \{r\}$. 

We have that, since $G$ is bipartite, all vertices infected at time $\geq 2$ by $Z$ that are adjacent to any vertex in $N(x)$ are at distance $\geq 2$ of $r$, except for $r$ itself, and, therefore, by Lemma \ref{lemadistanciak}, all vertices infected at time $\geq 2$ by $Z$ that are adjacent to any vertex in $N(x)-\{v\}$ are infected at time $\geq 2$ by $P$. 

Additionally, if $r$ is infected at time 2 by $Z$, by the choice of $r$, there is no vertex in $N(x)-\{v\}$ infected at time 3 by $Z$ that has one neighbor infected at time $\leq 1$ by $Z$ and, since all vertices infected at time $\geq 4$ by $Z$ cannot be adjacent to $r$ and another vertex infected at time $\leq 1$ by $Z$, then, if $r$ is infected at time 2 by $Z$, all vertices in $N(x)-\{v\}$ that are infected at time $\geq 3$ by $Z$ have at most one neighbor infected at time $\leq 1$ by $P$.

If $r$ is infected at time 1 by $Z$, then all vertices in $N(x)-\{v\}$ infected at time $\geq 3$ by $Z$ cannot be adjacent to $r$ and another vertex infected at time $\leq 1$ by $Z$ because, otherwise, they would be infected at time $\leq 2$ by $Z$. Then, if $r$ is infected at time 1 by $Z$, all vertices in $N(x)-\{v\}$ that are infected at time $\geq 3$ by $Z$ have at most one neighbor infected at time $\leq 1$ by $P$.

Thus, whichever the infection time of $r$ by $Z$, we have that all vertices in $N(x)-\{v\}$ that are infected at time $\geq 3$ by $Z$ have at most one neighbor infected at time $\leq 1$ by $P$. Thus, all vertices in $N(x)-\{v\}$ that are infected at time $\geq 3$ by $Z$, which is all vertices in $N(x)-\{v\}$, are infected at time $\geq 3$ by $P$. Hence, $v'$ is infected at time $\geq 3$ and $x$ at time $\geq 4$ by $P$. Thus, by Lemma \ref{corolariosubset}, $v$ is infected at time 3 and $x$ at time 4 by $P$.

Thus, there is a vertex $v'$ in $N(x)-\{v\}$ that is infected at time 3 by $P$ such that $v'$ has a neighbor in $P$ and $x$ is infected at time 4 by $P$. At this point, if there is some vertex $w$ in $N(x)-\{v\}$ such that $P$ infects $w$ at time 3 and $|C_{P,w,3}| \leq 4$, let $R = P$ and we are done. If not, henceforth, assume that there is not such vertex $w$.

Note that if $v'$ has a neighbor $r$ infected at time 1 by $Z$ and has another neighbor infected at time 2 by $Z$ that has a neighbor in $Z$, then $|C_{P,v',3}| \leq 4$. Thus, since there is no vertex $w$ in $N(x)-\{v\}$ such that $P$ infects $w$ at time 3 and $|C_{P,w,3}| \leq 4$, and all vertices infected at time $\geq 2$ by $Z$ and adjacent to some vertex in $N(x)-\{v\}$ are infected at time $\geq 2$ by $P$, then there is no vertex in $N(x)-\{v\}$ infected at time 3 by $P$ that has a neighbor infected at time 1 by $P$ and has another neighbor infected at time 2 by $P$ that has a neighbor in $P$.

Thus, we have that all vertices in $N(x)-\{v\}$ infected at time 3, including $v'$, that are neighbor of some vertex in $P$ do not have a neighbor infected at time 2 that also has a neighbor in $P$. Then, let $s$ be any vertex adjacent to $v'$ that is infected at time 2 by $P$. Since $v'$ has a neighbor in $P$, we have that $s$ does not have a neighbor in $P$. Let $z$ be any vertex adjacent to $s$ that is infected at time 1 by $P$. Let $R = P \cup \{z\}$. Let us prove that $R$ infects $x$ at time 4, $v'$ at time 3 and $s$ at time 2. Using the Lemma \ref{corolariosubset}, we have that is enough to prove that $R$ infects $x$ at time $\geq 4$, $v'$ at time $\geq 3$ and $s$ at time $\geq 2$. First, since $s$ has no neighbor in $P$, then it has at most one neighbor in $R$. Therefore $s$ is infected at time $\geq 2$ by $R$. Now, assume, by contradiction, that $R$ infects some vertex $q \in N(x)-\{v\}$ that is infected at time $\geq 3$ by $P$ at time $\leq 2$. Since $q \neq z$ and $P$ infects $q$ at time 3 and $R$ infects $q$ at time $\leq 2$, by Lemma \ref{lemadistanciak}, the vertices $z$ and $q$ are at distance either 1 or 2 of each other, but, since $G$ is bipartite, $z$ and $q$ are at distance 2 of each other. Therefore, since $q$ cannot have a neighbor infected at time 1 by $P$ and another neighbor $q'$ that is infected at time $\geq 2$ by $P$ that has a neighbor in $P$, we have that only two cases can happen.

In the first case, the vertex $q$ has a neighbor in $R$ and another neighbor $q'$ that is infected at time $\geq 2$ by $P$ and at time 1 by $R$. In this case, we would have that $z$ is adjacent to $q'$ and, since $R = P \cup \{z\}$, $q'$ must have a neighbor in $P$. Since $z$ and $q$ are at distance 2 of each other, then the vertex adjacent to $q$ in $R$ cannot be $z$ and, therefore, $q$ has a neighbor in $P$. Thus, since $z$ is infected at time 1 by $P$ and $q$ and $q'$ have a neighbor in $P$, then, in fact, $q'$ is infected at time 2 and $q$ is infected at time 3 by $P$. Thus, since $q'$ has a neighbor in $P$, then $|C_{P,q',2}| \leq 3$. Also, since $q$ has a neighbor in $P$, we have that $|C_{P,q,3}| \leq |C_{P,q',2}| + 1 \leq 3 + 1 = 4$, which is a contradiction because we assumed that there is no vertex $w$ in $N(x)-\{v\}$ that is infected at time 3 by $P$ such that $|C_{P,w,3}| \leq 4$.

In the second case, we have that $q$ has two neighbors $q_1$ and $q_2$ that are infected at time $\geq 2$ by $P$ and at time 1 by $R$. In this case, we have that $z$ is adjacent to both $q_1$ and $q_2$ and, also, we have that both $q_1$ and $q_2$ have a neighbor in $R$ that is different from $z$, implying that both $q_1$ and $q_2$ have a neighbor in $P$. Thus, since $q_1$ and $q_2$ have a neighbor in $P$ and $z$ is infected at time 1 by $P$, then $q_1$ and $q_2$ are infected at time 2 by $P$ and, therefore, $q$ is infected at time 3 by $P$. So, since $z$ is adjacent to both $q_1$ and $q_2$, $z$ is infected by $P$ at time 1 and both $q_1$ and $q_2$, which are infected at time 2 by $P$, have a neighbor in $P$, then $|C_{P,q,3}| \leq |C_{P,q_1,2}| + |C_{P,q_2,2}| \leq |C_{P,z,1}| + 1 + 1 = 4$, which is a contradiction because we assumed that there is no vertex $w$ in $N(x)-\{v\}$ infected at time 3 by $P$ such that $|C_{P,w,3}| \leq 4$.

Thus, we have that $R$ infects at time $\geq 3$ all vertices in $N(x)-\{v\}$ that are infected at time $\geq 3$ by $P$, which is all vertices in $N(x)-\{v\}$. Therefore, $R$ infects $v'$ at time $\geq 3$ and $x$ at time $\geq 4$. Also, since $v'$ and $s$ have a neighbor in $R$, then $R$ is the set we are looking for.
\end{proof}

Thus, there is a hull set $R$ that is superset of $Z$ that infects $x$ at time 4 , some vertex $v \in N(x)$ at time $t \leq 3$ and some other vertex $v' \in N(x)$ at time 3 such that $|C_{R,v,t}| \leq 4$ and $|C_{R,v',3}| \leq 4$. Let $F = C_{R,v,t} \cup C_{R,v',3}$. Thus, we have that $|F| \leq |C_{R,v,t}| + |R_{Z,v',3}| \leq 4 + 4 = 8$. Since $F$ infects $v$ at time $\leq 3$ and $v'$ at time $\leq 3$, then $F$ infects $x$ at time $\leq 4$. Additionally, since $T_0 \cup N_{\geq 4}(u) \subseteq S \subseteq Z \subseteq R$ and $F \subseteq R$, then $T_0 \cup N_{\geq 4}(u) \cup F \subseteq R$. So, since $x$ is infected by $R$ at time 4 and by $F$ at time $\leq 4$, $|F| \leq 8$, and $T_0 \cup N_{\geq 4}(u) \cup F \subseteq R$, then $R$ and $F$ are the sets we are looking for.
\end{proof}


\section{Final Remarks}
In this paper, we showed the Percolation Time Problem is polynomial for a fixed $k = 3$ by finding polynomially computable characterization for graphs that have the percolation time $\geq 3$. Also, when the problem is restricted only to bipartite graphs, we showed that the Percolation Time Problem for a fixed $k \geq 5$ is NP-Complete. However, we found polynomially computable characterizations for graphs that have the percolation time $\geq 3$ and $\geq 4$. Regarding the polynomial cases, we derived from the characterizations algorithms that run in times $O(mn^5)$, $O(mn^3)$ and $O(m^2n^9)$ that solves the Percolation Time Problem when , respectively, the parameter $k$ is fixed in $3$ and, in bipartite graphs, when $k$ is fixed in $3$ and $4$.

The results in \cite{eurocomb13} along with the results in this work set the threshold for the fixed parameter $k$ in which the Percolation Time Problem ceases to be polynomial and becomes NP-complete.

We are hoping that these results, the results in \cite{eurocomb13} regarding NP-completeness of the Percolation Time Problem for planar graphs and future results regarding the NP-completeness of the Percolation Time Problem in restricted degree graphs shed some light in the following question: Can the Percolation Time Problem in subgraphs and induced subgraphs of $d$-dimensional grids be solved in polynomial time or this problem in these types of graphs is NP-Complete? 

Other interesting questions arises from this work, such as, can the complexity of the algorithms, which, in this case, is directly related to the maximum size of the sets of vertices that initially infects some vertex at time $k$, be improved? Also, is there a relation between the $P_3$-Carathéodory number~\cite{jayme2012} and the maximum size of the sets of vertices that initially infects some vertex at time $k$?



\begin{thebibliography}{10}\label{bibliography}


\bibitem{neural2}
Amini, H.: Bootstrap percolation in living neural networks. J. Stat. Phys. 141, no. 3, 459--475 (2010)

\bibitem{balogh3}
Balogh, J., Bollob\'as, B.: Bootstrap percolation on the hypercube. Probab. Theory Related Fields 134, no. 4, 624--648 (2006)

\bibitem{balogh2}
Balogh, J., Bollob\'as, B., Duminil-Copin, H., Morris, R.: The sharp threshold for bootstrap percolation in all dimensions. Trans. Amer. Math. Soc. 364, no. 5, 2667--2701 (2012)

\bibitem{BaloghPete}
Balogh, J., Pete, G.: Random disease on the square grid. Random Structures and Algorithms 13, 409--422 (1998)

\bibitem{balogh}
Balogh, J., Bollob\'as, B., Morris, R.: Bootstrap percolation in three dimensions. Ann. Probab. 37, no. 4, 1329--1380 (2009)

\bibitem{balogh4}
Balogh, J., Bollob\'as, B., Morris, R.: Bootstrap percolation in high dimensions. Combin. Probab. Comput. 19, no. 5--6, 643--692 (2010)

\bibitem{eurocomb13}
Benevides, F., Campos, V., Dourado, M.C., Sampaio, R.M., Silva, A.: The maximum time of 2-neighbor bootstrap percolation: algorithmic aspects. In: The Seventh European Conference on Combinatorics, Graph Theory and Applications, Ser. CRM 16, pp 135--139. Scuola Normale Superiore (2013)

\bibitem{fabricio1}
Benevides, F., Przykucki, M.: Maximal percolation time in two-dimensional bootstrap percolation, submitted

\bibitem{fabricio1b}
Benevides, F., Przykucki, M.: On slowly percolating sets of minimal size in bootstrap percolation. The Electronic Journal of Combinatorics 20, no. 2, 46pp (2013)

\bibitem{BoRi2006} B.Bollob\'as and O. Riordan: Percolation. Cambridge Univ. Press (2006)

\bibitem{bollobasHolmgrenSmithUzzell}
Bollob\'as, B., Holmgren, C., Smith, P.J., Uzzell, A.J.: The time of bootstrap percolation with dense initial sets for all thresholds. Random Structures \& Algorithms, to appear

\bibitem{cal}
Calder, J.: Some elementary properties of interval convexities. J. London Math. Soc. 3, 422--428 (1971)

\bibitem{CDPRS2011}
Centeno, C., Dourado, M.C., Penso, L., Rautenbach, D., Szwarcfiter, J.L.: Irreversible conversion of graphs. Theoretical Computer Science 412, 3693--3700 (2011)

\bibitem{cps2002}
G. Chae, E. M. Palmer, and W. Siu: Geodetic number of random graphs of diameter 2. The Australasian Journal of Combinatorics 26, 11--20 (2002)

\bibitem{chalupa}
Chalupa, J., Leath, P.L., Reich, G.R.: Bootstrap percolation on a Bethe lattice. J. Phys. C 12, no. 1, 31--35 (1979)

\bibitem{ningchen}
Chen, N.: On the Approximability of Influence in Social Networks. SIAM J. Discrete Math. 23, no. 3, 1400--1415 (2009)

\bibitem{dahlhaus94}
Dahlhaus, E., Johnson, D., Papadimitriou, C., Seymour, E., Yannakakis, M.: The Complexity of multiterminal cuts. SIAM J. Comp. 23, no. 4, 864--894 (1994)

\bibitem{cs1}
Dreyer, P.A., Roberts, F.S.: Irreversible k-threshold processes: Graph-theoretical threshold models of the spread of disease and of opinion. Discrete Appl. Math. 157, no. 7, 1615--1627 (2009)

\bibitem{du1988}
Duchet, P.: Convex sets in graphs, II. Minimal path convexity. J. Comb. Theory B 44, 307--316 (1988)

\bibitem{er1972} 
Erd\H{o}s, P., Fried, E., Hajnal, A., Milner, E.C.: Some remarks on simple tournaments. Algebra Univers. 2, 238--245 (1972)

\bibitem{faja}
Farber, M., Jamison, R.E.: Convexity in graphs and hypergraphs. SIAM J. Algebraic Discrete Methods 7, 433--444 (1986)

\bibitem{sandpiles}
Fey, A., Levine, L., Peres, Y.: Growth rates and explosions in sandpiles. J. Stat. Phys. 138, 143--159 (2010)

\bibitem{hn1981} Harary, F., Nieminen, J.: Convexity in graphs. Journal of Diferential Geometry 16, 185--190 (1981)

\bibitem{holroyd}
Holroyd, A.E.: Sharp metastability threshold for two-dimensional bootstrap percolation. Probab. Theory Related Fields 125, no. 2, 195--224 (2003)

\bibitem{le}
Levi, F.W.: On Helly's theorem and the axioms of convexity. J. Indian Math. Soc. 15, 65--76 (1951)

\bibitem{L82}
Lichtenstein, D.: Planar Formulae and Their Uses. SIAM J. Comput. 11, 329--343 (1982)

\bibitem{morris}
Morris, R.: Minimal percolating sets in bootstrap percolation. The Electronic Journal of Combinatorics 16, no. 1, 20pp (2009)

\bibitem{Przykucki}
Przykucki, M.: Maximal Percolation Time in Hypercubes Under 2-Bootstrap Percolation. The Electronic Journal of Combinatorics 19, no. 2, 41pp (2012)

\bibitem{riedl}
Riedl, E.: Largest minimal percolating sets in hypercubes under 2-bootstrap percolation. The Electronic Journal of Combinatorics 17, no. 1, 13pp (2010)

\bibitem{ve}
Van de Vel, M.L.J.: Theory of Convex Structures. North-Holland, Amsterdam (1993)

\bibitem{jayme2012}
Barbosa, R.M., Coelho, E.M.M., Dourado, M.C., Rautenbach, D., Szwarcfiter, J.L.: On the Carathéodory number for the convexity of paths of order three. SIAM J. Discrete Math. 26, 929--939 (2012)


%
%
%
%
%
%
%
%
%
%
%
%
%
%
%
%
%
%
%
%
%
%
%
%
%
%
%
%
%
%
%
\end{thebibliography}
\end{document}